\pgfplotsset{compat=newest}
\pgfplotsset{plot coordinates/math parser=false}
\newtheorem{definition}{Definition}
\newtheorem{theorem}{Theorem}
\newtheorem{lemma}[theorem]{Lemma}
\newtheorem{remark}{Remark}
\newtheorem{example}{Example}
\newtheorem{construction}{Construction}
\begin{document}
\title{LDPC Codes with Local and Global Decoding}
\author{
\vspace*{0.8cm} Eshed Ram \qquad Yuval Cassuto\\
Andrew and Erna Viterbi Department of Electrical Engineering \\
Technion -- Israel Institute of Technology, Haifa 32000, Israel\\
E-mails: \{s6eshedr@campus, ycassuto@ee\}.technion.ac.il
}

\maketitle

\begin{abstract}
This paper presents a theoretical study of a new type of LDPC codes that is motivated by practical storage applications. LDPCL codes (suffix L represents locality) are LDPC codes that can be decoded either as usual over the full code block, or locally when a smaller sub-block is accessed (to reduce latency). LDPCL codes are designed to maximize the error-correction performance vs. rate in the usual (global) mode, while at the same time providing a certain performance in the local mode. We develop a theoretical framework for the design of LDPCL codes. Our results include a design tool to construct an LDPC code with two data-protection levels: local and global. We derive theoretical results supporting this tool and we show how to achieve capacity with it. A trade-off between the gap to capacity and the number of full-block accesses is studied, and a finite-length analysis of ML decoding is performed to exemplify a trade-off between the locality capability and the full-block error-correcting capability.\footnote{Part of the results of this paper were presented at the 2018 International Symposium on Information Theory.}
\end{abstract}

{\bf{Keywords}}:  Codes with locality, density evolution (DE), iterative decoding, low-density parity-check (LDPC) codes, multi sub-block coding, sub-blocked Tanner graphs.

\section{Introduction}
\label{Sec:Intro}
Low-density parity-check (LDPC) codes and their low-complexity iterative decoding algorithm \cite{Gala62} are a powerful method to achieve reliable communication and storage with rates that approach Shannon's theoretical limit. Due to their efficient encoding and decoding algorithms, communication applications such as WiFi, DVB, and Ethernet had adopted this family of linear block codes. 
When used in data-storage applications, unlike in communications, retransmissions are not possible, and any decoding failure implies data loss; hence strong LDPC codes need to be provisioned for extreme data reliability. Another key feature of modern storage devices is fast access, i.e., low-latency and high-throughput read operations. 
However, high data reliability forces very large block sizes and high complexity, and thus degrades the device's latency and throughput. This inherent conflict motivates a coding scheme that enables fast read access to small (sub) blocks with modest data protection and low complexity, while in case of failure providing a high data-protection "safety net" in the form of decoding a stronger code over a larger block. Our objective in this paper is to design LDPC codes to operate in such a {\em multi-sub-block coding scheme}, where error-correction performance (vs. rate) is maximized in both the sub-block and full-block modes.

Formally, in a multi sub-block coding scheme, a code block of length $N$ is divided into $M$ sub-blocks of length $n$ (i.e., $N=Mn$). Each sub-block is a codeword of one code, and the concatenation of the $M$ sub-blocks forms a codeword of another (stronger) code. 
We define a new type of LDPC codes we call {\em LDPCL codes}, where the suffix 'L' points to the code's {\em local} access to its sub-blocks. 
The LDPCL code is designed in such a way that each of the sub-blocks (of length $n$) can be decoded independently of the other sub-blocks (local decoding), and in addition the full block of length $Mn$ can be decoded (global decoding) when local decoding fails. From application perspective, $M$ sets the ratio between an encoding unit and a local-decoding unit. For example, in a distributed-storage application, a data unit of $Mn$ bits is distributed across $M$ nodes, and each node requires access to its local sub-block of $n$ bits. In non-volatile storage devices, $M$ may be the ratio between the write block size and the read (sub-)block size.   

Our theoretical results on analysis and construction of LDPCL codes lie upon the definition of the code through two distinct degree-distribution pairs. The {\em local degree distribution} specifies the connections between sub-block variable nodes and their local check nodes, while the {\em joint degree distribution} governs the connection of the global check nodes to variable nodes in the full block. The key challenge is to design the local and joint distributions such that both the local and the global (=local+joint composition) codes perform well. 
In particular, the density-evolution analysis shows an inherent asymmetry between the local and joint codes. These asymmetries need to be addressed to make the analysis work.

The key promise of codes designed with this paper's tools is the low complexity of local sub-block decoding compared to the complexity of global full-block decoding. The complexity savings come from two sources: 1) a sub-block is factor $M$ smaller than the full block, and 2) the local code is designed with lower correction capability, allowing local degree distributions with lower node degrees. In applications where most decoding instances succeed locally (for example non-volatile memories with large error-rate variability), the average decoding complexity across instances will be close to the low complexity of the local code.     

\subsection{Related Work}
\label{Sub:relate}
Earlier work, such as~\cite{CassHemo17} and~\cite{Hass01,HanMont07,BlauHetz16}, addressed the design of Reed-Solomon and related algebraic codes in multi sub-block schemes. While that prior work attests to the importance of the multi sub-block scheme, designing LDPC codes for it requires new tools and methods. 
Another related family of codes are LDPC codes with incremental redundancy, which \cite{WangRangWesel17} shows how to use without feedback, albeit without a notion of local decoding. 
\cite{Li17} proposed global coupling for LDPC codes that results in codes with similar structure, but did not address the local-decoding performance, and focused on algebraic structured codes over non-binary alphabets. In this paper we consider binary codes, focus on asymptotic results, showing how to approach capacity while guaranteeing a certain degree of local-decoding performance.

In the context of code structure, the code ensembles presented in this paper are related to bi-layer LDPC codes proposed for other applications, such as communication over the relay channel \cite{RazaYu07, EzriGast06} and coding with side information \cite{WeinMart09}. In \cite{EzriGast06}, the two LDPC codes are designed independently, which for the present application cannot yield optimal codes because global decoding performance depends on both the local and the joint degree distributions. In \cite{WeinMart09}, one of the codes is an LDGM code, and only regular codes are used. 
The work with the most relevant results to this paper is \cite{RazaYu07}, but there are several important distinctions between their codes and ours. First, our codes consist of multiple sub-blocks that can be decoded independently, in contrast to the codes in \cite{RazaYu07} where there is only one code block. The differences in structure and in the target application also imply completely different decoders and design objectives for the codes.  
Another distinction is that the code construction in \cite{RazaYu07} requires specifying the full product degree distribution of the two layers, which only lends itself to unwieldy design through linear programming. In contrast, our constructions build on a new analysis of the cooperation between the local and joint codes, and as a result we are able to provide degree distributions that provably approach capacity for arbitrary local and global thresholds. Critical to the results of this paper is our choice to specify the code ensembles through two separate degree-distribution pairs (local and joint), which are designed jointly to approach capacity.

\subsection{Contributions}
\label{Sub:contributions}

The main scope in this paper is the binary erasure channel (BEC). The analysis and construction techniques can be extended to other channels (such as the binary-input additive Gaussian-noise channel), but we do not pursue those here. Our main contributions are as follows. 
\begin{enumerate}
	\item An LDPC-type solution for multi sub-block coding in Section~\ref{Sec:Ensembles}, and an asymptotic analysis of the suggested codes under belief-propagation (BP) decoding in Section~\ref{Sec:2D DE}. 
	\item An easy-to-use tool to construct capacity-approaching codes for the BEC in Section~\ref{Sec:C1} (code optimization using linear programming can be easily formulated, but we omit the details).  
	\item A study of the trade-off between the gap to capacity to the number of decoding iterations accessing the full block in Section~\ref{Sec:N_JI}. We also suggest an optimal scheduling scheme that minimizes the access to full-block bits and thus saves time and communication costs when the sub-blocks are distributed. 
	\item A finite-length analysis for ML decoding of (regular) multi sub-block LDPC codes (Section~\ref{Sec:Finite}) that exemplifies the trade-off between sub-block access and full-block performance. 
\end{enumerate}

\section{Preliminaries}
\label{Sec:Pre}

\subsection{LDPC Codes: Review from \cite{RichUrb08}}
\label{sub:LDPC}
A linear block code is an LDPC code if it has at least one parity-check matrix that is sparse, i.e., the number of 1's in $H$ is linear in the block length. This sparsity enables low-complexity decoding algorithms. 
Every parity-check matrix $H$ can be represented by a bipartite graph, called a Tanner graph, with  nodes partitioned to variable nodes and check nodes; there exists an edge between check node $i$ and variable node $j$, if and only if $H_{ij}=1$ (this paper focuses on binary linear codes, but this representation can be generalized). The fraction of variable (resp. check) nodes in a Tanner graph with degree $i$ is denoted by $\Lambda_i$ (resp. $\Omega_i$), and the fraction of edges connected to variable (resp. check) nodes of degree $i$ is denoted by $\lambda_i$ (resp. $\rho_i$); $\Lambda_i$ and $\Omega_i$ are called node-perspective degree distributions, and $\lambda_i$ and $\rho_i$ are called edge-perspective degree distributions. 

The degree-distribution polynomials associated to a Tanner graph are given by
\begin{subequations}
\begin{align} 
\label{eq:dd poly1}
&\Lambda(x)=\sum_{i}\Lambda_{i} x^i,    \qquad  \lambda(x)=\sum_{i}\lambda_{i} x^{i-1},\quad x \in [0,1],\\
\label{eq:dd poly2}
& \Omega(x)=\sum_{i}\Omega_{i}x^i ,\qquad \rho(x) = \sum_{i}\rho_{i} x^{i-1},\quad x \in [0,1].
\end{align}
The node-perspective and edge-perspective polynomials are related through 
\begin{align} \label{eq:node-edge}
&\Lambda(x)= \tfrac{\int_0^x \lambda(t)\mathrm{d}t}{\int_0^1 \lambda(t)\mathrm{d}t}, \hspace*{0.8cm}
\Omega(x) = \tfrac{\int_0^x \rho(t)\mathrm{d}t}{\int_0^1 \rho(t)\mathrm{d}t},\quad x\in [0,1], \\
\label{eq:edge-node}
&\lambda(x)= \tfrac{\Lambda'(x)}{\Lambda'(1)}, \hspace*{1.4cm}
\rho(x)   = \tfrac{\Omega'(x)}{\Omega'(1)},\quad \hspace*{0.45cm} x\in [0,1],
\end{align}
\end{subequations}
where the operator $'$ stands for the function's derivative. 

\subsection{The Sub-Blocked Tanner Graph}
\label{sub:Tanner}

We define an LDPCL code of length $N=Mn$ through a sub-blocked Tanner graph. 
In this sparse graph, the variable nodes are divided to $M$ disjoint sets (sub-blocks) of size $n$ each, and the check nodes are divided into two disjoint sets: {\em local check nodes} and {\em joint check nodes}. 
The graph construction is constrained such that each local check node is connected only to variable nodes that are in the same sub-block of length $n$; the joint check-node connections have no constraints. 
The edges of the graph are partitioned into two sets as well: edges connecting variable nodes to local check nodes are called {\em local edges}, and edges connecting variable nodes to joint check nodes are called {\em joint edges}. Finally, the local (resp. joint) degree of a variable node is the number of local (resp. joint) edges emanating from it. 

\begin{example} \label{Ex:MB Tanner}
A sub-blocked Tanner graph with $ M=3 $ sub-blocks -- each of length $n=6$ -- is illustrated in Figure~\ref{Fig:MB Tanner}. Local (resp. joint) checks contain an 'L' (resp. 'J') label.
\end{example}

\begin{figure}[!h]

	\begin{center}
     	\begin{tikzpicture}[scale=0.3,>=latex]\label{Tikz:2-side tanner}
     	\tikzstyle{cnode}=[rectangle,draw,fill=gray!70!white,minimum size=2mm]
     	\tikzstyle{vnode}=[circle,draw,fill=gray!70!white,minimum size=2mm]
		\pgfmathsetmacro{\x}{10}
		\pgfmathsetmacro{\w}{1.5}
		\pgfmathsetmacro{\y}{4}
     	\foreach \m in {1,2,3}
     	{	
     		\foreach \v in {1,...,6}
     		{	
     			\node[vnode] (v\m\v) at (\m*\x-\x+\v*\w-\w,0) {};	
     		}
     		\foreach \c in {1,...,3}
     		{	
     			\node[cnode] (c\m\c) at (\m*\x-\x+\c*2*\w-1.5*\w,-\y) {\footnotesize L};	
     		}
     		\draw[thick] (v\m1.south)--(c\m1.north);
     		\draw[thick] (v\m1.south)--(c\m2.north);
     		\draw[thick] (v\m2.south)--(c\m1.north);	
     		\draw[thick] (v\m2.south)--(c\m2.north);
     		\draw[thick] (v\m3.south)--(c\m1.north);	
     		\draw[thick] (v\m3.south)--(c\m2.north);
     		\draw[thick] (v\m4.south)--(c\m1.north);	
     		\draw[thick] (v\m4.south)--(c\m2.north);
     		\draw[thick] (v\m4.south)--(c\m3.north);	
     		\draw[thick] (v\m5.south)--(c\m2.north);
     		\draw[thick] (v\m5.south)--(c\m3.north);	
     		\draw[thick] (v\m6.south)--(c\m2.north);
     		\draw[thick] (v\m6.south)--(c\m3.north);	
     	}		
     	\node[cnode] (cJ1) at (1*\x-\x+4*\w-1.5*\w,\y) {\footnotesize J};
     	\node[cnode] (cJ2) at (2*\x-\x+4*\w-1.5*\w,\y) {\footnotesize J};	
     	\node[cnode] (cJ3) at (3*\x-\x+4*\w-1.5*\w,\y) {\footnotesize J};	
     	\draw[thick] (cJ1.south)--(v12.north);	
     	\draw[thick] (cJ1.south)--(v13.north);	
     	\draw[thick] (cJ1.south)--(v16.north);	
     	\draw[thick] (cJ1.south)--(v21.north);	
     	\draw[thick] (cJ1.south)--(v22.north);	
     	\draw[thick] (cJ1.south)--(v24.north);
     	
     	\draw[thick] (cJ2.south)--(v13.north);	
     	\draw[thick] (cJ2.south)--(v15.north);	
     	\draw[thick] (cJ2.south)--(v23.north);	
     	\draw[thick] (cJ2.south)--(v25.north);	
     	\draw[thick] (cJ2.south)--(v26.north);
     	
     	\draw[thick] (cJ3.south)--(v15.north);	
     	\draw[thick] (cJ3.south)--(v22.north);	
     	\draw[thick] (cJ3.south)--(v26.north);	
     	\draw[thick] (cJ3.south)--(v32.north);	
     	\draw[thick] (cJ3.south)--(v33.north);	
     	\draw[thick] (cJ3.south)--(v34.north);
     	\draw[thick] (cJ3.south)--(v36.north);
     	\end{tikzpicture}
     \end{center}

     \caption{\label{Fig:MB Tanner}
      Example of a sub-blocked Tanner graph  with $ M=3 $ and $n=6$. Local (resp. joint) checks contain an 'L' (resp. 'J') label.}

\end{figure}
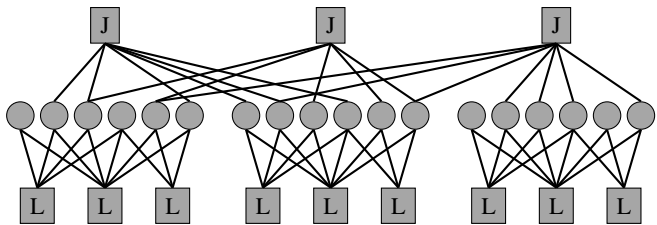

We denote by $\Lambda_{L,i}$ the fraction of variable nodes with local degree $i$, and by $\Omega_{L,i}$ the fraction of local check nodes with degree $i$. Similarly, $\lambda_{L,i}$ designates the fraction of local edges connected to a variable node with local degree $i$, and $\rho_{L,i}$ designates the fraction of local edges connected to a local check node of degree $i$. We call $\left(\Lambda_{L,i},\Omega_{L,i},\lambda_{L,i},\rho_{L,i} \right)$ local degree distributions. Note that we do not distinguish between local degree distributions of different sub-blocks, and we assume that they are the same in all sub-blocks (but the instances drawn from the distributions are in general different between the sub-blocks).
The joint degree distributions $\left(\Lambda_{J,i},\Omega_{J,i},\lambda_{J,i},\rho_{J,i}\right)$ are defined similarly but with an important difference: we allow some variable nodes to have joint degree $0$ or $1$. 
Joint degree $ 0 $ increases the rate without compromising decoding performance (as would happen in ordinary LDPC codes with a single degree distribution). Due to its importance, we will use in the rest of the paper $P_0$ to denote the coefficient $\Lambda_{J,0}$.

The local and joint degree-distribution polynomials $\Lambda_L(\cdot),\lambda_L(\cdot),\Omega_L(\cdot),\rho_L(\cdot)$ and $\Lambda_J(\cdot),\lambda_J(\cdot),\Omega_J(\cdot),\rho_J(\cdot)$ are defined similarly to the degree-distribution polynomials for ordinary $LDPC$ codes in \eqref{eq:dd poly1}-\eqref{eq:dd poly2}. Known relations between node-perspective and edge-perspective polynomials hold for the local polynomials. However, since some variable nodes may have a joint degree of zero, the equation that describes $\Lambda_J$ in terms of $\lambda_J$ changes to
\begin{align} \label{eq:joint edge2node}
\Lambda_J(x)= P_0+(1-P_0)\tfrac{\int_0^x \lambda_J(t)\mathrm{d}t}{\int_0^1 \lambda_J(t)\mathrm{d}t}, \quad x \in [0,1].
\end{align}

\subsection{LDPCL Ensembles}
\label{Sec:Ensembles}
In this sub-section we define the ensembles of sub-blocked Tanner graphs: the LDPCL ensembles. 
These ensembles have six parameters: $M,n,\Lambda_L(\cdot),\Lambda_J(\cdot),\Omega_L(\cdot),$ and $\Omega_J(\cdot)$. $M$ is the locality parameter that sets the number of sub-blocks in a code block, $n$ is the sub-block length, and $\Lambda_L(\cdot),\Lambda_J(\cdot),\Omega_L(\cdot),\Omega_J(\cdot)$ are the node-perspective degree-distribution polynomials; this ensemble is denoted by $LDPCL(M,n,\Lambda_L,\Omega_L,\Lambda_J,\Omega_J)$. If we refer to an LDPCL ensemble through its edge-perspective degree-distribution polynomials, then we write  $LDPCL(M,n,\lambda_L,\rho_L,\lambda_J,\rho_J,P_0)$ (when using the edge-perspective notation, one must specify $P_0$ as well).

The sampling process from the $LDPCL(M,n,\Lambda_L,\Lambda_J,\Omega_L,\Omega_J)$ ensemble is a as follows. First, $M$ Tanner graphs are sampled independently from the $LDPC(n,\Lambda_L,\Omega_L)$ ensemble. These local graphs are positioned next to each other without inter-connections to create a Tanner graph with $ Mn $ variable nodes. Another Tanner graph is then sampled from the $LDPC(Mn,\Lambda_J,\Omega_J)$ ensemble. The latter joint graph is flipped, its $Mn$ variable nodes are randomly permuted, and merged with the $Mn$ variable nodes of the $M$ local graphs to create a sub-blocked Tanner graph. 
We add the random permutation to force statistical independence between the local and joint degrees of the variable nodes. The design rate of an $LDPCL(M,n,\Lambda_L,\Lambda_J,\Omega_L,\Omega_J)$ ensemble is given by
\begin{align}
R &=  1 - \frac{\Lambda_L'(1)}{\Omega_L'(1)} -  \frac{\Lambda_J'(1)}{\Omega_J'(1)} \notag \\
\label{eq:design rate2}
&=  1 - \frac{\int_0^1\rho_L(x)\mathrm{d}x}{\int_0^1\lambda_L(x)\mathrm{d}x} -  \frac{\int_0^1\rho_J(x)\mathrm{d}x}{\int_0^1\lambda_J(x)\mathrm{d}x}\left( 1-P_0\right).
\end{align}
We can see in \eqref{eq:design rate2}, that setting $P_0>0$ allows increasing the code rate, which we later find crucial in our constructions.

\section{Decoding Analysis}
\label{Sec:2D DE}
In this section we suggest a decoding strategy for LDPCL codes, and analyze its performance. Our 
ultimate goal (in Section~\ref{Sec:C1}) is to provide a design tool for building sub-blocked LDPC codes: given two noise levels -- local $ \epsilon_L $ and global $ \epsilon_G $ -- produce LDPCL degree distributions such that sub-block access provides the correction capability to tolerate $ \epsilon_L $, and full-block access provides the global correction capability to tolerate $ \epsilon_G $. The derivations in this section lay the theoretical infrastructure needed to show the optimality of our constructions (i.e., capacity achieving in Section~\ref{Sec:C1}). 

To take advantage of the locality structure of the Tanner graphs described above, the suggested decoding algorithm will operate in two modes: local mode and global mode. In the local mode, the decoder tries to decode a sub-block of length $n$ using belief propagation (BP) on the local Tanner graph.
If it succeeds (e.g. due to high SNR), then the information is passed to the user for fast access.
If the decoder meets a failure criterion (e.g., getting stuck or reaching maximum number of iterations), then it enters the global mode where it tries to decode the entire code block (of length $N=Mn$) using BP on the complete multi sub-block Tanner graph.
In this section we assume that the message scheduling in the global mode is a flooding schedule: in the first step of a global decoding iteration, the variable nodes send messages to the local and joint check nodes in parallel, and in the second step the local and global check nodes send their messages back to the variable nodes 
(later in Section~\ref{Sec:N_JI}, we change the schedule from flooding to be more locality aware). 

Consider a BEC channel with erasure probability $ \epsilon \in(0,1)$.
In the local mode, the asymptotic (as $n \to \infty$) analysis of the decoding algorithm is identical to the asymptotic analysis of ordinary LDPC codes. Specifically, in the limit where $n\to \infty$, there exists a local decoding threshold $\epsilon^*_L$, such that if $\epsilon<\epsilon^*_L$, the decoder will resolve the desired sub-block in the local mode with probability converging to $1$. If $\epsilon>\epsilon^*_L$, the decoder will fail in the local mode with probability converging to $1$. 
$\epsilon^{*}_L$ can be calculated numerically via
\begin{align} \label{eq:1D th numeric}
\epsilon^{*}_L = \inf_{(0,1]} \frac{x}{\lambda_L(1-\rho_L(1-x))}.
\end{align}

In the global mode we have the following.
\begin{theorem}\label{th:2D DE}
Consider a random element from the \\$LDPCL(M,n,\Lambda_L,\Lambda_J,\Omega_L,\Omega_J)$ ensemble. Let $x_l(\epsilon)$ and $y_l(\epsilon)$ denote the probability that a local and joint edge, respectively, carries a variable-to-check erasure message after $l$ BP iterations over the $BEC(\epsilon)$ as $n \to \infty$. Then,
\begin{subequations}
\begin{align}
\label{eq:2D DE1}
&x_l(\epsilon) = \epsilon \cdot \lambda_L\left(1- \rho_L\left(1-x_{l-1}(\epsilon) \right)\right)\cdot \Lambda_J\left(1- \rho_J\left(1-y_{l-1}(\epsilon) \right)\right), \quad l \geq 0,\\
\label{eq:2D DE2}
&y_l(\epsilon) = \epsilon \cdot \Lambda_L\left(1- \rho_L\left(1-x_{l-1}(\epsilon) \right)\right) \cdot \lambda_J\left(1- \rho_J\left(1-y_{l-1} (\epsilon) \right)\right),\quad l \geq 0,\\
\label{eq:2D DE3}
&x_{-1}(\epsilon)=y_{-1}(\epsilon)=1.
\end{align}
\end{subequations}
\end{theorem}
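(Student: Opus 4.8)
The plan is to reproduce the classical BEC density-evolution derivation of \cite{RichUrb08} on the two-sided tree ensemble of Definition~\ref{Def:tree}, while tracking carefully the asymmetric way in which local and joint edges enter the recursion.

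\emph{Reduction to the tree ensemble.} By the concentration theorem and the convergence-to-tree-channel theorem — which, as noted above, transfer from \cite{RichUrb08} to the LDPCL ensembles — the probability that a uniformly random local (resp.\ joint) edge of a graph drawn from $LDPCL(M,n,\Lambda_L,\Lambda_J,\Omega_L,\Omega_J)$ carries a variable-to-check erasure message after $l$ BP iterations over $BEC(\epsilon)$ converges, as $n\to\infty$, to the probability that the root of a depth-$l$ tree sends an erasure message up its (absent) parent edge. For $x_l$ the relevant tree is $\mathcal{T}_l(\Lambda_L,\Omega_L,\Lambda_J,\Omega_J)$ as defined, whose root is expanded as a variable node whose missing parent is a local check; for $y_l$ it is the analogous ensemble in which the root is expanded with probability $\Lambda_{L,i}\cdot\lambda_{J,j+1}$, i.e.\ as if its missing parent were a joint check. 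In such trees the messages entering the root from its distinct check-node children are mutually independent, which is what makes the recursion closed.

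\emph{Induction on $l$.} The base case is $\mathcal{T}_0$, a single variable node whose outgoing message is its channel value, so $x_0=y_0=\epsilon$; substituting $x_{-1}=y_{-1}=1$ into \eqref{eq:2D DE1}--\eqref{eq:2D DE2} reproduces this, since $\rho_L(0)=\rho_J(0)=0$ and all degree polynomials equal $1$ at the argument $1$. For the inductive step, take the root $v$ of a depth-$l$ local-rooted tree; its message at iteration $l$ is an erasure iff the channel erased $v$ (probability $\epsilon$) \emph{and} every incoming check-to-variable message is an erasure. A local check-node child with $k$ variable children — an event of probability $\rho_{L,k+1}$ — relays an erasure to $v$ iff at least one of those $k$ children sends up an erasure at iteration $l-1$, which by the induction hypothesis and independence has probability $1-\rho_L(1-x_{l-1})$; a joint check-node child likewise relays an erasure with probability $1-\rho_J(1-y_{l-1})$. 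Conditioning on $v$ having $i$ local and $j$ joint children — an event of probability $\lambda_{L,i+1}\cdot\Lambda_{J,j}$ — all incoming messages are erasures with probability $\bigl(1-\rho_L(1-x_{l-1})\bigr)^i\bigl(1-\rho_J(1-y_{l-1})\bigr)^j$; averaging over $(i,j)$ the sum factors because the branching law is a product, and $\sum_i\lambda_{L,i+1}a^i=\lambda_L(a)$ together with $\sum_j\Lambda_{J,j}b^j=\Lambda_J(b)$ yields \eqref{eq:2D DE1}. Running the identical computation on the joint-rooted tree, with $(\lambda_L,\Lambda_L)$ and $(\Lambda_J,\lambda_J)$ interchanged, gives \eqref{eq:2D DE2}.

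The step needing the most care — and the genuinely new ingredient relative to one-dimensional DE — is the asymmetric branching law $\lambda_{L,i+1}\cdot\Lambda_{J,j}$ at a variable node reached along a local edge (and its mirror along a joint edge). Its product form is inherited from the random permutation in the ensemble construction of Section~\ref{sub:Iregular}, which renders a variable node's local and joint degrees statistically independent; the fact that the \emph{edge}-perspective $\lambda_L$ appears in the local coordinate while the \emph{node}-perspective $\Lambda_J$ (possibly with a constant term $P_0$) appears in the joint coordinate reflects that sampling a uniform local edge size-biases the local degree but not the joint degree, so that all $j$ joint check messages — not $j-1$ — must be erased. I expect most of the rigor to go into confirming that the tree-convergence theorem, applied to the structured LDPCL ensemble, delivers exactly these branching probabilities, i.e.\ that the depth-$l$ directed neighborhood of a random edge converges in distribution to $\mathcal{T}_l$ (respectively its joint-rooted variant); granting that, the recursion follows from the elementary generating-function manipulation above.
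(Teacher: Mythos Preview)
Your proposal is correct and follows essentially the same approach as the paper: reduction to the tree ensemble via the concentration and tree-convergence theorems, followed by induction on $l$ using the asymmetric branching law $\lambda_{L,i+1}\Lambda_{J,j}$ (resp.\ $\Lambda_{L,i}\lambda_{J,j+1}$) from Definition~\ref{Def:tree}. The only expository difference is that the paper first works out the regular case explicitly before passing to the irregular case by averaging, whereas you go directly to the irregular case; your added discussion of \emph{why} the edge-perspective polynomial appears on one side and the node-perspective on the other is a nice clarification not spelled out in the paper.
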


\begin{proof}
See Appendix~\ref{App:2D DE} for the full proof. Figure~\ref{Fig:DE Eq} graphically illustrates equations \eqref{eq:2D DE1}--\eqref{eq:2D DE2}: in the center diagram the right outgoing edge carries the message in \eqref{eq:2D DE1} to a local check and the left outgoing edge carries the message in \eqref{eq:2D DE2} to a joint check. 
\end{proof}

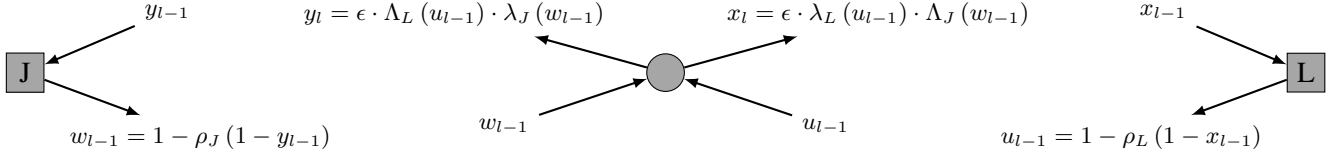
\begin{figure}[!h]
	
	\begin{center}
		\begin{tikzpicture}[scale=0.3,>=latex]\label{Tikz:DE eq}
		\tikzstyle{cnode}=[rectangle,draw,fill=gray!70!white,minimum size=5mm]
		\tikzstyle{vnode}=[circle,draw,fill=gray!70!white,minimum size=5mm]
		\pgfmathsetmacro{\x}{20}
		\pgfmathsetmacro{\w}{4*\x}
		\pgfmathsetmacro{\y}{4}
		
		\node[cnode] (cJ) {J};
		\node [above right = 3mm and 0.6*\x mm of cJ] (incJ) {\footnotesize $ y_{l-1} $};
		\node [below right = 3mm and 0.1*\x mm of cJ] (outcJ) {\footnotesize $ w_{l-1}=1-\rho_J\left (1-y_{l-1}\right ) $};
		\draw[->,thick] (incJ)--(cJ) ;
		\draw[->,thick] (cJ)--(outcJ) ;
		
		\node[vnode,right=\w mm of cJ] (v) {};
		\node [above right = 3mm and 0.25*\x mm of v]  (outv2) {\footnotesize $ x_{l}=\epsilon\cdot\lambda_L\left (u_{l-1} \right )\cdot\Lambda_J\left (w_{l-1} \right)$};
		\node [below left = 3mm and 0.75*\x mm of v] (invJ) {\footnotesize $ w_{l-1} $};
		\node [above left = 3mm and 0.25*\x mm of v] (outv1) {\footnotesize $ y_{l}=\epsilon\cdot\Lambda_L\left (u_{l-1} \right )\cdot\lambda_J\left (w_{l-1} \right)$};
		\node [below right = 3mm and 0.75*\x mm of v] (invL) {\footnotesize $ u_{l-1} $};;
		\draw[->,thick] (invL)--(v) ;
		\draw[->,thick] (invJ)--(v) ;
		\draw[->,thick] (v)--(outv1) ;
		\draw[->,thick] (v)--(outv2) ;
		
		\node[cnode,right=\w mm of v] (cL) {L};
		\node [above left = 3mm and 0.6*\x mm of cL] (incL) {\footnotesize $ x_{l-1} $};
		\node [below left = 3mm and 0.1*\x mm of cL] (outcL) {\footnotesize $ u_{l-1}=1-\rho_L\left (1-x_{l-1}\right ) $};
		\draw[->,thick] (incL)--(cL);
		\draw[->,thick] (cL)--(outcL) ;
		
		\end{tikzpicture}
	\end{center}
	
	\caption{\label{Fig:DE Eq} Illustration of the DE equations \eqref{eq:2D DE1}--\eqref{eq:2D DE2}. See Appendix~\ref{App:2D DE} for a detailed proof.}
	
\end{figure}

To simplify notations, $\epsilon$ will be omitted from now on from $x_l(\epsilon)$ and $y_l(\epsilon)$ if it is clear from the context. 

\begin{remark} \label{remark:asymmetry}
Although $x_l$ and $y_l$ in \eqref{eq:2D DE1}-\eqref{eq:2D DE2} seem symmetric to each other, it is not necessarily true since we allow variable nodes to have joint degrees $0$ ($P_0>0$) or $1$ ($\lambda_J(0)>0$), while their local degrees are forced to be strictly greater then $1$. This asymmetry has a crucial effect on the global decoding process which is explained and detailed in Section~\ref{sub:Threshold}.
Symmetry does hold in the special and less interesting case where $\rho_L(\cdot)=\rho_J(\cdot)$ and $\lambda_L(x)=x^{l_L-1},\lambda_J(x)=x^{l_J-1}$, in which case, for every iteration $ l\geq 0 $, $y_l =x_l=\epsilon \lambda(1-\rho(1-x_{l-1})),$ where, $\rho(x)\triangleq\rho_L(x)$, $\lambda(x)\triangleq x^{l_L+l_J-1}$. Thus if we use identical degree-distributions for local and joint check nodes and we force all variable nodes to have local and joint regular degrees, then the 2D-DE equations in \eqref{eq:2D DE1}-\eqref{eq:2D DE2} degenerate to the already known 1D-DE equation. However, codes falling under this special case are less interesting because they are sub-optimal in their rates and restricted in their thresholds. 

\end{remark}

\subsection{Threshold}
\label{sub:Threshold}
We now study the asymptotic global threshold of LDPCL ensembles. We prove that there exists a global decoding threshold denoted by $\epsilon^*_G$, characterize it, and provide a method to numerically calculate it. 
The results in this sub-section are the basis for code design we address in Section~\ref{Sec:C1}.

Define 
\begin{subequations}
\begin{align} 
\label{eq:f}
f(\epsilon,x,y)=\epsilon\,\lambda_L\left(1- \rho_L\left(1-x \right)\right) \Lambda_J\left(1- \rho_J\left(1-y\right)\right),\quad x,y,\epsilon \in [0,1] \\
\label{eq:g}
g(\epsilon,x,y)=\epsilon\,\Lambda_L\left(1- \rho_L\left(1-x \right)\right)  \lambda_J\left(1- \rho_J\left(1-y\right)\right) ,\quad x,y,\epsilon \in [0,1]
\end{align}
\end{subequations}

such that \eqref{eq:2D DE1}-\eqref{eq:2D DE3} can be re-written as
\begin{align} \label{eq:2D DE with f,g}
\begin{split}
&x_l = f\left(\epsilon,x_{l-1},y_{l-1}\right) ,\quad l \geq 0\\
&y_l = g\left(\epsilon,x_{l-1},y_{l-1}\right) ,\quad l \geq 0\\
&x_{-1} =y_{-1} =1.
\end{split}
\end{align}

\begin{lemma} \label{lemma:monotonicity}
The functions $f$ and $g$ are monotonically non-decreasing in all of their variables.
\end{lemma}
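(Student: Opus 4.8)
The plan is to show that $f$ and $g$ are each compositions and products of monotonically non-decreasing functions on $[0,1]$, so that monotonicity is inherited step by step. I would first observe that every degree-distribution polynomial that appears — namely $\lambda_L, \rho_L, \Lambda_J, \Lambda_L, \lambda_J, \rho_J$ — has non-negative coefficients (being derived from genuine degree fractions via \eqref{eq:dd poly1}--\eqref{eq:dd poly2} and \eqref{eq:joint edge2node}), and hence each is non-decreasing on $[0,1]$; in fact each maps $[0,1]$ into $[0,1]$ since $\lambda_L(1)=\rho_L(1)=\lambda_J(1)=\rho_J(1)=\Lambda_L(1)=\Lambda_J(1)=1$. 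This range containment is what makes the successive compositions legitimate.

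Next I would treat the three ``building blocks'' separately. For the $x$-variable: the map $x \mapsto 1-x$ is decreasing, $\rho_L$ is non-decreasing, so $x\mapsto \rho_L(1-x)$ is non-increasing, and composing with the outer $t\mapsto 1-t$ gives that $x \mapsto 1-\rho_L(1-x)$ is \emph{non-decreasing} on $[0,1]$ with values in $[0,1]$; then composing with the non-decreasing $\lambda_L$ (resp. $\Lambda_L$) shows $x\mapsto \lambda_L(1-\rho_L(1-x))$ (resp. with $\Lambda_L$) is non-decreasing. The identical argument with the subscript $L$ replaced by $J$ and $x$ by $y$ handles the $y$-dependence. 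For the $\epsilon$-variable the dependence is simply linear with non-negative coefficient, hence non-decreasing.

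Finally I would assemble: $f(\epsilon,x,y)$ is the product of three factors — $\epsilon$, $\lambda_L(1-\rho_L(1-x))$, and $\Lambda_J(1-\rho_J(1-y))$ — each of which is a non-negative, non-decreasing function of its own argument and constant in the other two; a product of non-negative non-decreasing functions is non-decreasing in each variable (if $a,a',b,b'\ge 0$ with $a\le a'$, $b\le b'$ then $ab\le a'b'$), so $f$ is non-decreasing in each of $\epsilon, x, y$ separately. The same reasoning applies verbatim to $g(\epsilon,x,y)=\epsilon\,\Lambda_L(1-\rho_L(1-x))\,\lambda_J(1-\rho_J(1-y))$.

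There is no real obstacle here; the only point requiring a word of care is the sign bookkeeping across the two nested occurrences of $t\mapsto 1-t$ (they cancel to give an overall non-decreasing inner map) and the remark that all the polynomials in play have non-negative coefficients so that ``non-decreasing'' genuinely holds on all of $[0,1]$ and the ranges stay inside $[0,1]$ so the compositions are well defined. I would state the proof in exactly that order: non-negativity of coefficients, the $1-\rho(1-\cdot)$ lemma, then the product step.
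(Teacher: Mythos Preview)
Your proposal is correct and follows essentially the same approach as the paper: the paper's proof notes that the images of $\lambda_L,\Lambda_L,\lambda_J,\Lambda_J,\rho_L,\rho_J$ lie in $[0,1]$ to get monotonicity in $\epsilon$, then declares ``the rest of the proof is similar and is left as an exercise.'' Your write-up simply fills in that exercise with the expected composition-and-product argument, so there is no substantive difference in method.
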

\begin{proof}
Since the images of $\lambda_L(\cdot),\Lambda_L(\cdot),\lambda_J(\cdot),\Lambda_J(\cdot),\rho_L(\cdot)$ and $\rho_J(\cdot)$ lie in $[0,1]$, then $f$ and $g$ are monotonically non-decreasing in $\epsilon \in [0,1]$. The proof for $ x,y $ is similar and is left as an exercise.
\end{proof}

\begin{definition}
Let $\epsilon \in (0,1)$. We say that $(x,y) \in [0,1]^2$ is an $(f,g)$-fixed point if
\begin{align} \label{eq:f,g fixed}
\begin{pmatrix}
x\\y
\end{pmatrix}
=
\begin{pmatrix}
f(\epsilon,x,y)\\
g(\epsilon,x,y)
\end{pmatrix}.
\end{align}
\end{definition}

Clearly, for every $\epsilon \in (0,1)$, $(x,y)=(0,0)$ is a trivial $(f,g)$-fixed point. However, it is not clear yet if there exists a non-trivial $(f,g)$-fixed point. In particular, we ask: for which choices of $\epsilon,\; \lambda_L,\rho_L,\lambda_J,\rho_J$ and $P_0$ there exists a non-trivial $(f,g)$-fixed point? The following lemmas will help answering this question.

\begin{lemma} \label{lemma:fix pt}
Let $\epsilon \in (0,1)$, and let $(x,y) \in [0,1]^2$ be an $(f,g)$-fixed point. Then,
\begin{enumerate}
\item \label{item: fix pt lemma0} $x=0$ implies $y =0$, and if $P_0=0$ or $\lambda_J(0)>0$, then $y=0$ implies $x =0$.
\item \label{item: fix pt lemma1} $(x,y) \in [0,\epsilon)^2$.
\item \label{item: fix pt lemma2} If $\{x_{l}\}_{l=0}^\infty$ and $\{y_{l}\}_{l=0}^\infty$ are defined by \eqref{eq:2D DE with f,g}, then
\begin{align} \label{eq:no jumps}
x_{l} \geq x , \quad y_{l} \geq y,\quad \forall l\geq 0.
\end{align}
\end{enumerate}
\end{lemma}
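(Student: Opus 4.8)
The plan is to establish the three items in order. Items~\ref{item: fix pt lemma1} and~\ref{item: fix pt lemma2} will follow from short arguments using, respectively, the boundedness of the degree-distribution polynomials and Lemma~\ref{lemma:monotonicity}; item~\ref{item: fix pt lemma0} is the one that requires real care. Throughout I would rely on the standing structural facts $\rho_L(1)=\rho_J(1)=1$ and $\Lambda_{L,0}=\Lambda_{L,1}=0$ (so that $\Lambda_L(0)=\lambda_L(0)=0$), the identity $\Lambda_J(0)=P_0$ from \eqref{eq:joint edge2node}, and the elementary facts that a non-zero degree-distribution polynomial $p$ with non-negative coefficients satisfies $p(t)>0$ for $t\in(0,1]$, that $\lambda_L(t)\le t$ on $[0,1]$, and that $\Lambda_L(t)<1$ and $\rho_L(t)<1$ for $t\in[0,1)$.

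For item~\ref{item: fix pt lemma0}, I start from the fixed-point identities $x=f(\epsilon,x,y)$ and $y=g(\epsilon,x,y)$ with $f,g$ as in \eqref{eq:f}--\eqref{eq:g}. If $x=0$, then in $g(\epsilon,0,y)$ the argument of $\Lambda_L$ is $1-\rho_L(1-0)=1-\rho_L(1)=0$, and $\Lambda_L(0)=0$ forces $y=0$. For the converse, set $y=0$: the $y$-identity becomes $0=\epsilon\,\Lambda_L(1-\rho_L(1-x))\,\lambda_J(0)$, so if $\lambda_J(0)>0$ then $\Lambda_L(1-\rho_L(1-x))=0$, hence $1-\rho_L(1-x)=0$, hence $\rho_L(1-x)=1$, hence $1-x=1$, i.e.\ $x=0$; whereas the $x$-identity becomes $x=\epsilon\,\lambda_L(1-\rho_L(1-x))\,\Lambda_J(0)=\epsilon\,\lambda_L(1-\rho_L(1-x))\,P_0$, which vanishes when $P_0=0$. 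This is precisely where the asymmetry pointed out in Remark~\ref{remark:asymmetry} is visible: when $P_0>0$ and $\lambda_J(0)=0$, neither identity forces $x=0$ from $y=0$, and the implication genuinely fails. I expect this converse direction to be the main obstacle, in the sense that one must resist assuming symmetry between the roles of $x$ and $y$ and instead track which hypothesis activates which equation.

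For item~\ref{item: fix pt lemma1}, since $\lambda_L$ and $\Lambda_J$ map $[0,1]$ into $[0,1]$, the identity $x=f(\epsilon,x,y)$ gives $x\le\epsilon$ at once; and if $x=\epsilon$ then $\lambda_L(1-\rho_L(1-x))=1$, so by $\lambda_L(u)\le u$ we get $1-\rho_L(1-x)=1$, hence $\rho_L(1-x)=0$, hence $x=1$, contradicting $x=\epsilon<1$. Thus $x<\epsilon<1$, and then $1-\rho_L(1-x)<1$, so $\Lambda_L(1-\rho_L(1-x))<1$; since $\lambda_J\le1$, the identity $y=g(\epsilon,x,y)$ gives $y<\epsilon$. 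For item~\ref{item: fix pt lemma2}, any fixed point lies in $[0,1]^2$, so $x\le1=x_{-1}$ and $y\le1=y_{-1}$; by Lemma~\ref{lemma:monotonicity}, $x_0=f(\epsilon,1,1)\ge f(\epsilon,x,y)=x$ and $y_0\ge y$, and if $x_l\ge x$ and $y_l\ge y$ then $x_{l+1}=f(\epsilon,x_l,y_l)\ge f(\epsilon,x,y)=x$ and $y_{l+1}\ge y$, so \eqref{eq:no jumps} follows by induction on $l$. The only remaining work is checking the elementary polynomial facts collected in the first paragraph, which is routine.
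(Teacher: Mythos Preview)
Your proof is correct and follows essentially the same route as the paper's: item~\ref{item: fix pt lemma0} is argued exactly as in the paper (using $\Lambda_L(0)=0$ for $x=0\Rightarrow y=0$, and splitting the converse into the $P_0=0$ case via the $x$-equation and the $\lambda_J(0)>0$ case via the $y$-equation), and item~\ref{item: fix pt lemma2} is the same induction on $l$ using Lemma~\ref{lemma:monotonicity}. For item~\ref{item: fix pt lemma1} the paper simply asserts that it ``follows immediately'' from \eqref{eq:f}--\eqref{eq:g} and \eqref{eq:f,g fixed}, whereas you actually supply the strictness argument; your extra care there is warranted and correct.
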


\begin{proof}
See Appendix~\ref{App:fix pt}
\end{proof}

\begin{remark} \label{remark: noting asymmetry}
Item 1 in Lemma~\ref{lemma:fix pt} expresses the asymmetry (discussed in Remark~\ref{remark:asymmetry}) between the local and joint sides during the decoding algorithm. 
\end{remark}

\begin{lemma} \label{lemma:mono of DE}
Let $x_{l} $ and $y_{l}$ be defined by \eqref{eq:2D DE with f,g} and let $0<\epsilon \leq \epsilon' < 1$. Then,
\begin{subequations}
\begin{align} \label{eq:mono of DE1}
\begin{split}
x_{l+1}(\epsilon) \leq x_{l}(\epsilon) ,\quad y_{l+1}(\epsilon) \leq y_{l}(\epsilon) , \quad \forall l \geq 0,
\end{split}
\end{align}
and
\begin{align} \label{eq:mono of DE2}
\begin{split}
x_{l}(\epsilon) \leq x_{l}(\epsilon') , \quad y_{l}(\epsilon) \leq y_{l}(\epsilon') , \quad \forall l \geq 0.
\end{split}
\end{align}
\end{subequations}
\end{lemma}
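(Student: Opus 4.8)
The plan is to prove both displays by induction on the iteration index $l$, using only the monotonicity of $f$ and $g$ established in Lemma~\ref{lemma:monotonicity} together with the fixed initialization $x_{-1}=y_{-1}=1$ from \eqref{eq:2D DE with f,g}. Recall that Lemma~\ref{lemma:monotonicity} gives that $f(\epsilon,x,y)$ and $g(\epsilon,x,y)$ are non-decreasing in each of $\epsilon$, $x$, and $y$ separately, hence also jointly.

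For \eqref{eq:mono of DE1} I would first treat the base case $l=0$. Since $x_0,y_0 \in [0,1]$, we have $x_0 \leq 1 = x_{-1}$ and $y_0 \leq 1 = y_{-1}$; applying $f$, which is non-decreasing in its second and third arguments, gives $x_1 = f(\epsilon,x_0,y_0) \leq f(\epsilon,x_{-1},y_{-1}) = x_0$, and the same computation with $g$ gives $y_1 \leq y_0$. The inductive step is then immediate: assuming $x_l \leq x_{l-1}$ and $y_l \leq y_{l-1}$, monotonicity of $f$ and $g$ yields $x_{l+1} = f(\epsilon,x_l,y_l) \leq f(\epsilon,x_{l-1},y_{l-1}) = x_l$ and $y_{l+1} = g(\epsilon,x_l,y_l) \leq g(\epsilon,x_{l-1},y_{l-1}) = y_l$, which closes the induction.

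For \eqref{eq:mono of DE2} I would fix $0<\epsilon \leq \epsilon' <1$ and again induct on $l$. The base case follows from monotonicity of $f$ and $g$ in their first argument: $x_0(\epsilon) = f(\epsilon,1,1) \leq f(\epsilon',1,1) = x_0(\epsilon')$, and similarly $y_0(\epsilon) \leq y_0(\epsilon')$. For the inductive step, suppose $x_l(\epsilon) \leq x_l(\epsilon')$ and $y_l(\epsilon) \leq y_l(\epsilon')$; then, using that $f$ is non-decreasing in all three of its arguments simultaneously, $x_{l+1}(\epsilon) = f(\epsilon, x_l(\epsilon), y_l(\epsilon)) \leq f(\epsilon', x_l(\epsilon'), y_l(\epsilon')) = x_{l+1}(\epsilon')$, and the identical argument with $g$ gives $y_{l+1}(\epsilon) \leq y_{l+1}(\epsilon')$.

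I do not anticipate a real obstacle: this is a routine monotonicity induction once Lemma~\ref{lemma:monotonicity} is available, and the two parts are essentially independent instances of the same idea (monotone map applied to a monotone pair). The only point worth phrasing carefully is the base case of \eqref{eq:mono of DE1}, where the first nontrivial comparison is $x_1 \leq x_0$ rather than the vacuous $x_0 \leq x_{-1}$, and it relies precisely on $x_0,y_0 \in [0,1]$ being bounded above by the initialization values $x_{-1}=y_{-1}=1$.
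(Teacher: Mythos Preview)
Your proposal is correct and follows exactly the approach the paper indicates: the paper's proof simply states ``By mathematical induction on $l$ and by Lemma~\ref{lemma:monotonicity}. The details are left as an exercise.'' Your write-up supplies precisely those details.
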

\begin{proof}
By mathematical induction on $l$ and by Lemma~\ref{lemma:monotonicity}. The details are left as an exercise.
\end{proof}

In view of \eqref{eq:2D DE with f,g}, it can be verified that for every iteration $ l\geq 0 $, $x_l(0) = y_l(0) = 0 ,\; x_l(1) = y_l(1) = 1$. Since $x_l$ and $y_l$ are bounded from below by $0$, then  Lemma~\ref{lemma:mono of DE} implies that the limits $\lim\limits_{l \to \infty}x_l(\epsilon)$ and $\lim\limits_{l \to \infty}y_l(\epsilon)$ exist. Thus we can define a global decoding threshold by
\begin{align} \label{eq:th op def A}
\epsilon^*_G = \sup \left\{\epsilon \in [0,1] \colon \lim\limits_{l \to \infty}y_l(\epsilon)=\lim\limits_{l \to \infty}x_l(\epsilon)=0 \right\}.
\end{align}
Note that from the continuity of $g$ in \eqref{eq:g}, item~\ref{item: fix pt lemma0} in Lemma~\ref{lemma:fix pt} implies that if $\lim\limits_{l \to \infty}x_l(\epsilon)=0 $, then  $\lim\limits_{l \to \infty}y_l(\epsilon)=0 $. Thus, \eqref{eq:th op def A} can be re-written as
\begin{align} \label{eq:th op def}
\epsilon^*_G = \sup \left\{\epsilon \in [0,1] \colon \lim\limits_{l \to \infty}x_l(\epsilon)=0 \right\}.
\end{align}

\begin{theorem}
\label{theorem:fix pt char}
Let
\begin{align}\label{eq:fix pt char}
\hat{\epsilon} = \sup \left\{ \epsilon \in [0,1]\colon \eqref{eq:f,g fixed}\text{ has no solution with }(x,y) \in (0,1]\times[0,1]\right\}.
\end{align}
Then, $\epsilon^*_G=\hat{\epsilon}$.
\end{theorem}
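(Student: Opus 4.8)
The plan is to prove $\epsilon^*_G = \hat\epsilon$ by identifying (up to the behaviour at the endpoints $\epsilon\in\{0,1\}$) the subset $S\subseteq[0,1]$ whose supremum defines $\epsilon^*_G$ in \eqref{eq:th op def} with the subset $\hat S\subseteq[0,1]$ whose supremum defines $\hat\epsilon$ in \eqref{eq:fix pt char}. Recall that the limits $x_\infty(\epsilon):=\lim_{l\to\infty}x_l(\epsilon)$ and $y_\infty(\epsilon):=\lim_{l\to\infty}y_l(\epsilon)$ exist for every $\epsilon$, by the monotonicity and boundedness established via Lemma~\ref{lemma:mono of DE}. The whole argument rests on the following per-parameter equivalence, which I would prove first: for every $\epsilon\in(0,1)$, one has $x_\infty(\epsilon)=0$ if and only if $\epsilon\in\hat S$.

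To establish the equivalence I would argue two implications. First, letting $l\to\infty$ in the recursion \eqref{eq:2D DE with f,g} and using that $f$ and $g$ are continuous (being polynomial compositions) shows that $(x_\infty(\epsilon),y_\infty(\epsilon))$ is always an $(f,g)$-fixed point; by Item~\ref{item: fix pt lemma1}) of Lemma~\ref{lemma:fix pt} it therefore lies in $[0,\epsilon)^2\subseteq[0,1]^2$. Consequently, if $x_\infty(\epsilon)\neq 0$ then $(x_\infty(\epsilon),y_\infty(\epsilon))$ is a solution of \eqref{eq:f,g fixed} with first coordinate in $(0,1]$ and second coordinate in $[0,1]$, i.e. $\epsilon\notin\hat S$. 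Conversely, if $\epsilon\notin\hat S$ then \eqref{eq:f,g fixed} has a solution $(x^*,y^*)$ with $x^*>0$, and Item~\ref{item: fix pt lemma2}) of Lemma~\ref{lemma:fix pt} forces $x_l(\epsilon)\geq x^*$ for all $l\geq 0$, so $x_\infty(\epsilon)\geq x^*>0$; that is, $\epsilon\notin\hat S$ implies $x_\infty(\epsilon)\neq 0$. Taking contrapositives and combining the two implications gives $x_\infty(\epsilon)=0$ if and only if $\epsilon\in\hat S$, for every $\epsilon\in(0,1)$.

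It then remains to pass from this per-parameter equivalence to the equality of suprema. The equivalence says $S$ and $\hat S$ agree on the open interval $(0,1)$; a routine check of the endpoints (at $\epsilon=0$ both sets contain $0$, since the only DE limit and the only fixed point are trivial there, and, under the usual assumption that check-node degrees are at least $2$, $\epsilon=1$ belongs to neither set because $x_\infty(1)=1$ and $(1,1)$ solves \eqref{eq:f,g fixed}) then yields $\sup S=\sup\hat S$, i.e. $\epsilon^*_G=\hat\epsilon$; alternatively, one can bypass endpoints entirely by deducing the inequalities $\epsilon^*_G\le\hat\epsilon$ and $\hat\epsilon\le\epsilon^*_G$ directly from the equivalence together with the channel-monotonicity of $\epsilon\mapsto x_\infty(\epsilon)$ coming from \eqref{eq:mono of DE2}. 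I expect the one genuinely delicate point to be the first implication above: one must check that the limiting fixed point $(x_\infty(\epsilon),y_\infty(\epsilon))$ is \emph{admissible} in the sense of \eqref{eq:fix pt char}, i.e. that its first coordinate is strictly positive (not merely in $[0,1]$), which is precisely where the strict bound $(x,y)\in[0,\epsilon)^2$ of Lemma~\ref{lemma:fix pt} and the reduction \eqref{eq:th op def} to tracking only the local coordinate $x_l$ (legitimate by Item~\ref{item: fix pt lemma0}) of Lemma~\ref{lemma:fix pt}) come into play; all the remaining steps are straightforward consequences of the lemmas already in place.
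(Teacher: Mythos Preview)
Your proposal is correct and follows essentially the same route as the paper's proof: both directions rest on (i) the limit $(x_\infty(\epsilon),y_\infty(\epsilon))$ being an $(f,g)$-fixed point by continuity of $f,g$, and (ii) any fixed point $(x^*,y^*)$ with $x^*>0$ lower-bounding the DE sequence via Item~\ref{item: fix pt lemma2}) of Lemma~\ref{lemma:fix pt}. Your framing as a per-parameter equivalence $S\cap(0,1)=\hat S\cap(0,1)$ followed by taking suprema is in fact slightly cleaner than the paper's direct inequality argument, which tacitly uses that $\hat S$ is downward closed when it asserts ``since $\epsilon<\hat\epsilon$, it follows that $x(\epsilon)=0$''; your version sidesteps that point entirely.
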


\begin{proof}
Let $\epsilon<\hat{\epsilon}$, and let $x(\epsilon)=\lim\limits_{l \to \infty}x_l(\epsilon),\;y(\epsilon)=\lim\limits_{l \to \infty}y_l(\epsilon)$. Taking the limit $l \to \infty$ in \eqref{eq:2D DE with f,g} yields that $(x(\epsilon),y(\epsilon))$ is an $(f,g)$-fixed point. In view of  \eqref{eq:fix pt char}, since $\epsilon<\hat{\epsilon}$, it follows that $x(\epsilon)=0$. From \eqref{eq:th op def} we have  $\epsilon < \epsilon^*_G$, for every $\epsilon<\hat{\epsilon}$; this implies that $\hat{\epsilon} \leq \epsilon^*_G$. 

For the other direction, let $\epsilon>\hat{\epsilon}$ and let $(z_1,z_2)$ be an $(f,g)$-fixed point such that $z_1>0$. Lemma~\ref{lemma:fix pt}-item~\ref{item: fix pt lemma2} implies that
\begin{align*}
x_l(\epsilon) \geq z_1 >0,\quad \forall l \geq 0,\\
\end{align*} 
thus $\lim\limits_{l \to \infty}x_l(\epsilon) >0$, where the existence of this limit is assured due to Lemma~\ref{lemma:mono of DE}; hence, $\epsilon > \epsilon^*_G$. Since this is true for all $\epsilon>\hat{\epsilon}$, then we deduce that $\hat{\epsilon} \geq \epsilon^*_G$ and complete the proof.
\end{proof}

We proceed by providing a numerical way to calculate the threshold of a given choice of $\Lambda_L,\Lambda_J,\Omega_L$ and $\Omega_J$. Define
\begin{align} \label{eq:q_L q_J}
q_L(x)\triangleq x\cdot \frac{\Lambda_L\left( 1- \rho_L\left(1-x \right)\right)}{\lambda_L\left( 1- \rho_L\left(1-x \right)\right)},\quad q_J(x)\triangleq x\cdot \frac{\Lambda_J\left( 1- \rho_J\left(1-x \right)\right)}{\lambda_J\left( 1- \rho_J\left(1-x \right)\right)},\quad x \in (0,1].
\end{align}

\begin{lemma} \label{lemma:q_L(0)}
$\lim_{x \to 0}q_L(x)=0$.
\end{lemma}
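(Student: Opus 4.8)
The plan is to show that $q_L(x) \to 0$ as $x \to 0^+$ by unpacking the definition in \eqref{eq:q_L q_J} and exploiting the relation between node- and edge-perspective local degree distributions. Recall that for the local side the standard relations hold, so $\lambda_L(u) = \Lambda_L'(u)/\Lambda_L'(1)$, and since the minimal local variable-node degree is at least $2$, we have $\Lambda_{L,0} = \Lambda_{L,1} = 0$, hence $\Lambda_L(u) = \sum_{i \ge 2} \Lambda_{L,i} u^i$ and $\lambda_L(u) = \sum_{i \ge 2} \lambda_{L,i} u^{i-1}$ with $\lambda_{L,i} \ge 0$ and $\lambda_L(1) = 1$. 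The key elementary fact is that $\Lambda_L(u)/\lambda_L(u) \to 0$ as $u \to 0^+$: writing $i_{\min} \ge 2$ for the smallest degree with $\lambda_{L,i_{\min}} \ne 0$, both numerator and denominator behave like a constant times $u^{i_{\min}}$ and $u^{i_{\min}-1}$ respectively near $0$, so the ratio is $\Theta(u) \to 0$. (One should note the ratio is well-defined for $u \in (0,1]$ since $\lambda_L(u) > 0$ there.)

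Next I would track the argument $u = 1 - \rho_L(1-x)$ as $x \to 0^+$. Since $\rho_L$ is a polynomial with $\rho_L(1) = 1$ (it is an edge-perspective distribution), continuity gives $\rho_L(1-x) \to 1$, hence $u = 1 - \rho_L(1-x) \to 0^+$ as $x \to 0^+$. Therefore $\Lambda_L(u)/\lambda_L(u) \to 0$ along this curve. Combining, $q_L(x) = x \cdot \bigl(\Lambda_L(u)/\lambda_L(u)\bigr)$ is a product of $x \to 0$ and a factor tending to $0$, so certainly $q_L(x) \to 0$. In fact the factor $x$ alone suffices once one observes $\Lambda_L(u)/\lambda_L(u)$ stays bounded near $x = 0$ (it tends to $0$), so the cleanest phrasing is: $\Lambda_L(u)/\lambda_L(u)$ is bounded on a punctured neighborhood of $x=0$, and $x \to 0$, whence the product vanishes.

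The only mild subtlety — and the step I would be most careful about — is the degenerate possibility that $\rho_L(1-x) \equiv 1$ identically (i.e. all local check nodes have degree $1$), which would make $u \equiv 0$ and the ratio $\Lambda_L(u)/\lambda_L(u)$ a $0/0$ expression. This is a degenerate, non-interesting case (degree-$1$ checks), and in any meaningful LDPCL ensemble $\rho_L$ is a nonconstant polynomial, so $\rho_L(1-x) < 1$ for $x$ in a punctured neighborhood of $0$ and $u > 0$ there; I would either note this explicitly or simply assume $\rho_L$ is nonconstant as is implicit throughout. Modulo this remark, the argument is a short limit computation with no real obstacle: the whole content is the asymptotic $\Lambda_L(u) = O(u \cdot \lambda_L(u))$ near $u = 0$, which follows by comparing lowest-order terms of the two polynomials.
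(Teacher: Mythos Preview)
Your proposal is correct and follows essentially the same approach as the paper's proof: both arguments reduce to showing $\Lambda_L(u)/\lambda_L(u)\to 0$ as $u\to 0$ by comparing the lowest-order terms of the two polynomials, then compose with $u=1-\rho_L(1-x)\to 0$ and multiply by the outer factor $x$. Your treatment is slightly more careful in that you flag the degenerate $\rho_L\equiv\text{const}$ case, which the paper leaves implicit.
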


\begin{proof}
See Appendix~\ref{App:q_L(0)}.
\end{proof}

Since $q_L(1)=1$, Lemma~\ref{lemma:q_L(0)} and the intermediate-value theorem imply that for every $w \in (0,1]$, there exists $x\in (0,1]$ such that $q_L(x)=w$. Note that it is not true in general that $\lim_{x \to 0}q_J(x)=0$ (another evidence of the local-joint asymmetry); this limit may be infinite (for example the case $P_0>0$, $\rho_J(x)=x^3$ and $\lambda_J(x)=x^2$). 

\begin{definition} \label{def:q}
For every $y>0$ such that $q_J(y) \leq 1$ define
\begin{align} \label{eq:q def}
q(y)\triangleq\max \{ x \colon q_L(x)=q_J(y)\}.
\end{align}
\end{definition}

\begin{theorem}\label{th:numerical th}
Let $\lambda_L,\rho_L,\lambda_J,\rho_J$ be degree-distribution polynomials, let $P_0 \in [0,1]$, and let $\epsilon^*_G=\epsilon^*_G(\Lambda_L,\Lambda_J,\Omega_L,\Omega_J)$ be the global  threshold of the $LDPCL(M,n,\Lambda_L,\Lambda_J,\Omega_L,\Omega_J)$ ensemble. \\
If $P_0=0$ or $\lambda_J(0) >0$ , then
\begin{align} \label{eq:numerical th1}
\epsilon^*_G =  \inf\limits_{\substack{y \in (0,1] \\ q_J(y)\leq 1}} \frac{y}{g(1,q(y),y)}.
\end{align}
Else,
\begin{align} \label{eq:numerical th2}
\epsilon^*_G = \min\left\{ \inf\limits_{\substack{y \in (0,1] \\ q_J(y)\leq 1}} \frac{y}{g(1,q(y),y)},\;\;\frac1{P_0}\cdot \inf_{(0,1]}\frac{x}{\lambda_L\left(1- \rho_L\left(1-x \right)\right)}\right\}.
\end{align}
\end{theorem}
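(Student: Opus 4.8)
The plan is to reduce everything to Theorem~\ref{theorem:fix pt char} and then to compute $\hat\epsilon$ by describing all non-trivial solutions of \eqref{eq:f,g fixed}. First one records the structural fact that the set $\mathcal{Y}\subseteq[0,1]$ of channel parameters for which \eqref{eq:f,g fixed} has a solution with $x>0$ satisfies $(\epsilon^*_G,1]\subseteq\mathcal{Y}$ (immediate from the definition of $\hat\epsilon$ and Theorem~\ref{theorem:fix pt char}) and $\mathcal{Y}\cap[0,\epsilon^*_G)=\emptyset$ (from \eqref{eq:th op def}, Lemma~\ref{lemma:mono of DE} and Lemma~\ref{lemma:fix pt}-Item~\ref{item: fix pt lemma2})); hence $\hat\epsilon=\sup([0,1]\setminus\mathcal{Y})=\inf\mathcal{Y}$, so it suffices to parametrise those solutions and minimise the associated $\epsilon$. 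By Lemma~\ref{lemma:fix pt}-Item~\ref{item: fix pt lemma0}) a solution with $x>0$ has $y>0$ unless $P_0>0$ and $\lambda_J(0)=0$; the two families ($y>0$ and $y=0$) are handled separately.

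For a solution with $x,y>0$, set $u:=1-\rho_L(1-x)>0$ and $v:=1-\rho_J(1-y)>0$; then none of $\lambda_L(u),\Lambda_L(u),\lambda_J(v),\Lambda_J(v)$ vanishes, so dividing the two equations of \eqref{eq:f,g fixed} cancels $\epsilon$ and, after rewriting with \eqref{eq:q_L q_J}, gives $q_L(x)=q_J(y)$ — the fixed-point curve of Lemma~\ref{lemma: fixed curve} — while the $y$-equation gives $\epsilon=y/g(1,x,y)$ with $g$ as in \eqref{eq:g}. Conversely, for any $x,y\in(0,1]$ with $q_L(x)=q_J(y)$, taking $\epsilon:=y/g(1,x,y)$ yields $g(\epsilon,x,y)=y$ by construction and $f(\epsilon,x,y)=y\cdot\tfrac{\lambda_L(u)}{\Lambda_L(u)}\cdot\tfrac{\Lambda_J(v)}{\lambda_J(v)}=\tfrac{x\,q_J(y)}{q_L(x)}=x$, so $(x,y)$ is an $(f,g)$-fixed point for exactly that $\epsilon$. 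Now fix $y\in(0,1]$: by Lemma~\ref{lemma:q_L(0)}, $q_L(1)=1$ and the intermediate-value theorem, as soon as $q_J(y)\le1$ the set $\{x\in(0,1]:q_L(x)=q_J(y)\}$ is nonempty, and since $g(1,x,y)=\Lambda_L(1-\rho_L(1-x))\lambda_J(1-\rho_J(1-y))$ is non-decreasing in $x$ (Lemma~\ref{lemma:monotonicity}), among all such $x$ the value $\epsilon=y/g(1,x,y)$ is smallest at the largest root $x=q(y)$ of \eqref{eq:q def}. Minimising over $y$ with $q_J(y)\le1$ then produces the first infimum appearing in \eqref{eq:numerical th1} and \eqref{eq:numerical th2}.

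For a solution with $x>0$ and $y=0$: the $y$-equation reads $0=\epsilon\,\Lambda_L(1-\rho_L(1-x))\,\lambda_J(0)$ (using $\rho_J(1)=1$), forcing $\lambda_J(0)=0$; the $x$-equation then becomes $x=\epsilon\,P_0\,\lambda_L(1-\rho_L(1-x))$ (using $\Lambda_J(0)=P_0$), forcing $P_0>0$ and giving $\epsilon=\tfrac1{P_0}\cdot x/\lambda_L(1-\rho_L(1-x))$. Hence, when $P_0=0$ or $\lambda_J(0)>0$ this family is empty, $\mathcal{Y}$ is just the first family, and $\epsilon^*_G$ equals the first infimum alone, which is \eqref{eq:numerical th1}; otherwise $\mathcal{Y}$ is the union of the two families and $\epsilon^*_G$ is the minimum of the two infima, and since $\inf_{(0,1]}x/\lambda_L(1-\rho_L(1-x))$ is exactly the local-threshold expression \eqref{eq:1D th numeric}, this yields \eqref{eq:numerical th2} (the $\tfrac1{P_0}$-term is discarded by the $\min$ when it exceeds $1$, and can only be the minimiser when $P_0\ge\epsilon^*_L$).

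I expect the main obstacle to be the bookkeeping that converts ``$\hat\epsilon=\sup$ of parameters admitting no non-trivial fixed point'' into these clean infimum formulas: one must verify (a) the up-set property of $\mathcal{Y}$ used above, which rides on Lemma~\ref{lemma:mono of DE} together with Theorem~\ref{theorem:fix pt char}; (b) that $x=q(y)$ genuinely produces a fixed point, i.e.\ that the converse computation survives $q$ being defined as a \emph{maximum} (so that $q_L$ need not be injective); and (c) that every fixed point automatically satisfies $x\le\lambda_L(1-\rho_L(1-x))$, whence $q_L(x)\le\Lambda_L(1-\rho_L(1-x))\le1$, so that (for $y>0$) $q_J(y)=q_L(x)\le1$ and no fixed point escapes the domain of $q$; one also checks the harmless fact that the infimum in \eqref{eq:numerical th1} is attained within the admissible range $\epsilon\le1$ (at $y=1$, with no degree-one check nodes, $q(1)=1$ and $y/g(1,q(1),1)=1$). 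Everything else is a direct manipulation of the polynomial identities relating \eqref{eq:f}--\eqref{eq:g} and \eqref{eq:q_L q_J}.
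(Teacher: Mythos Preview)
Your proof is correct and follows essentially the same route as the paper's: reduce to the fixed-point characterisation of Theorem~\ref{theorem:fix pt char}, parametrise non-trivial fixed points by the relation $q_L(x)=q_J(y)$ (which is exactly the content of the proof of Lemma~\ref{lemma: fixed curve}), optimise over $x$ at fixed $y$ using monotonicity of $g$ to land on $x=q(y)$, and treat the $y=0$ branch separately when $P_0>0$ and $\lambda_J(0)=0$. The only organisational difference is that you compute $\epsilon^*_G=\inf\mathcal{Y}$ directly after first establishing the up-set structure of $\mathcal{Y}$, whereas the paper argues the two inequalities $\epsilon^*_G\le\inf$ and $\epsilon^*_G\ge\inf$ one at a time; your packaging is slightly cleaner in that it makes explicit the monotonicity step that the paper leaves implicit when it asserts the existence of $y_0$ with $y_0=\epsilon\,g(1,q(y_0),y_0)$.
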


\begin{proof}
See Appendix~\ref{App:numerical th}.
\end{proof}

\begin{example} \label{ex:threshold}
Consider an LDPCL ensemble characterized by 
\begin{align*}
\lambda_L(x)=x, \quad \rho_L(x)=x^9 ,\quad \lambda_J(x)=0.3396x+0.6604x^4,\quad P_0=0.2667, \quad \rho_J(x)=x^9 .
\end{align*}
Using \eqref{eq:design rate2} and \eqref{eq:1D th numeric}, the design rate is $R=0.5571$ and the local decoding threshold is $\epsilon_L^*=0.1112$. By calculating the arguments in \eqref{eq:numerical th2}, the global decoding threshold is $\epsilon_G^*=\min\{0.35,0.4168\}=0.35$ (better rates are achieved in the next section, and these degree distributions are given to graphically exemplify the results derived so far). 
Figure~\ref{Fig:DE} illustrates the 2D-DE equations in \eqref{eq:2D DE1}-\eqref{eq:2D DE3} for three different erasure probabilities: $0.33,0.35,0.37$, from left to right, respectively. When the channel's erasure probability is $\epsilon=0.33$, there are no $(f,g)$-fixed points -- the decoding process ends successfully, and when $\epsilon=0.37$, there are two $(f,g)$-fixed points, $(0.335,0.3202)$ and $(0.2266,0.1795)$ -- the decoding process gets stuck at $(0.335,0.3202)$. When $\epsilon=0.35=\epsilon_G^*$, there is exactly one $(f,g)$-fixed point at $(0.27,0.237)$, and the dashed and dotted lines osculate.
\end{example}

\begin{remark}\label{remark:complexity}
As mentioned in Section~\ref{Sec:Intro}, the complexity advantage of local-decoding LDPCL codes over global-decoding ordinary LDPC codes comes from 1) a sub-block is factor $ M $ smaller than the full block, and 2) lower local node degrees. For the BEC, counting edges in the Tanner graph is a good approximation of the decoding complexity, thus we now perform a comparison between the number of edges in the local code, denoted by $ |E_L| $, and the number of edges in a full-block LDPC code, denoted by $ |E| $. It is known that if the number of variable nodes in the graph is $ n $ and their degree distribution is given by $ \lambda(\cdot) $, then the number of edges is given by $ n/\!\int_0^1 \lambda  $. Hence the ratio between the number of local and global edges is
\begin{align}\label{eq:complexity comparison}
\frac{|E_L|}{|E|}=\frac{n\cdot\left (\int_0^1\lambda_L(x)\mathrm{d}x\right )^{-1}}{nM\cdot\left (\int_0^1\lambda_J(x)\mathrm{d}x\right )^{-1}}=\frac1M\cdot\frac{\int_0^1\lambda_J(x)\mathrm{d}x}{\int_0^1\lambda_L(x)\mathrm{d}x}.
\end{align} 
For example, the code ensemble from Example~\ref{ex:threshold} above with $ M = 4$ sub-blocks yields $ \tfrac{|E_L|}{|E|} = 0.1509$ -- a reduction of almost 85\%.
\end{remark}
\begin{figure}[h!]
\begin{center}
\input{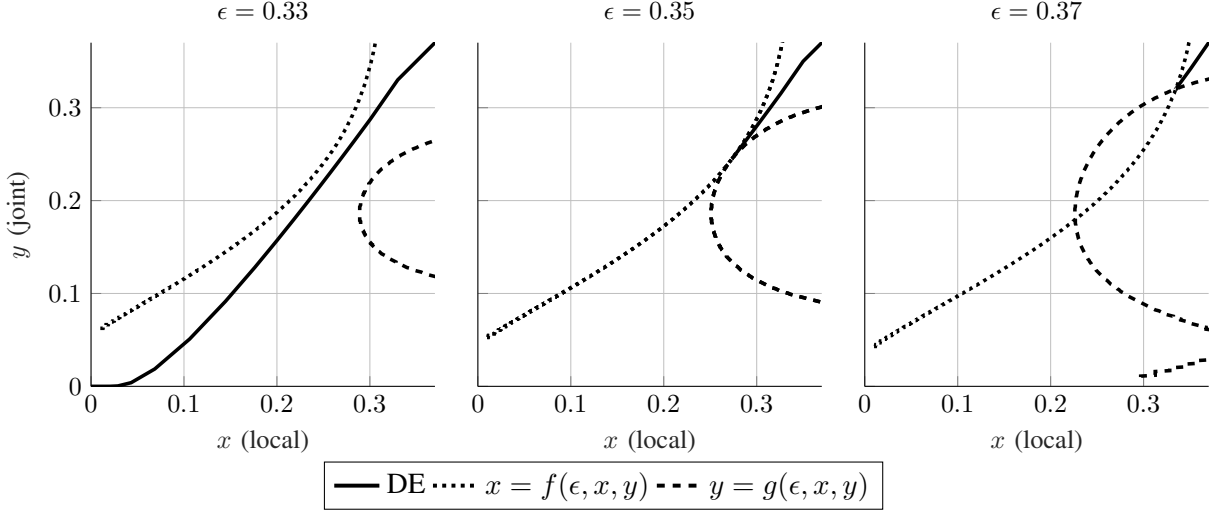}
\end{center}
\caption{\label{Fig:DE}
Illustration of the density-evolution equations in \eqref{eq:2D DE1}-\eqref{eq:2D DE3} for the LDPCL ensemble in Example~\ref{ex:threshold}, which induce a global decoding threshold of $\epsilon^*=0.35$. The evolved channel erasure probabilities, from left to right, are $\epsilon=0.33,0.35,0.37$.}
\end{figure}

\section{LDPCL Constructions and Achieving Capacity}
\label{Sec:C1}

In this section, we present an LDPCL ensemble construction, and show how to use this construction to optimally combine two degree distributions (local and joint) in order to approach capacity.
The inputs for the construction are the desired local and global decoding thresholds, $\epsilon_L$ and $\epsilon_G$, respectively, and the outputs are degree-distributions $\left(\lambda_L,\rho_L,\lambda_J,\rho_J,P_0 \right)$ such that
\begin{align*}
\epsilon_L^*\left(\lambda_L,\rho_L \right)=\epsilon_L \quad \epsilon^*_G\left(\lambda_L,\rho_L,\lambda_J,\rho_J,P_0 \right)=\epsilon_G.
\end{align*}
The specified parameters $\epsilon_L$ and $\epsilon_G$ can be arbitrarily chosen as fit for the specific application using the codes. $\epsilon_G$ is logically chosen to meet the "worst-case" noise level in extreme channel instances, while $\epsilon_L$ should specify a lower noise tolerance that is sufficient for a significant fraction of channel instances.  

In principle, setting $P_0=0$, and picking any two LDPC ensembles $\left(\lambda_L,\rho_L\right)$ and $\left(\lambda_J,\rho_J\right)$ that induce thresholds $\epsilon^*\left(\lambda_L,\rho_L\right)=\epsilon_L$ and $\epsilon^*\left(\lambda_J,\rho_J\right)=\epsilon_G$ would suffice, but this choice yields poor rates (intuitively, with that choice the local and joint codes do not "cooperate"). Another solution is not using a joint ensemble at all, i.e., choosing $\left(\lambda_L,\rho_L\right)$ such that $\epsilon^*\left(\lambda_L,\rho_L\right)=\epsilon_G > \epsilon_L$, and setting $P_0=1$. However, this solution is an undesired overkill since it would miss the opportunity to have a low-complexity local decoder for the majority of decoding instances where the erasure probabilities are below $\epsilon_L$.

\begin{definition} \label{def:h_eps}
Let $\left(\lambda_L,\rho_L \right)$ be local degree-distribution (DD) polynomials, and let $\epsilon^*\left(\lambda_L,\rho_L\right)=\epsilon_L$ be their decoding threshold.
For $\epsilon \in (\epsilon_L,1)$, let
\begin{enumerate}
\item $h_{\epsilon}(x)=\epsilon\lambda_L(1-\rho_L(1-x))-x,\quad x \in [0,1]$
\item \label{def:x_s}$x_s(\epsilon)=\max\{x \in [0,1] \colon h_\epsilon(x)\geq 0\}$ 
\item $a_s(\epsilon)=\Lambda_L\left(1- \rho_L\left(1-x_s(\epsilon) \right)\right)$
\end{enumerate} 
\end{definition}

For every $x \in [0,1]$, $h_\epsilon(x)$ is the erasure-probability change in one BP iteration on the local graph, if the current erasure probability is $x$. By definition, since $\epsilon>\epsilon_L$, $h_\epsilon(x)>0$ for some $x \in [0,1]$. In addition, for every $x>\epsilon$, $h_\epsilon(x) < 0$, so $x_s(\epsilon)$ is well defined. Operationally, $x_s(\epsilon)$ is the local-edge erasure probability when the local decoder gets stuck. Items 1 and 2 have appeared in \cite{KudRichUrb11}; we add $a_s(\epsilon)$ as a function of  $x_s(\epsilon)$ that encapsulates the erasure probability passed from the local code to the global code.

\begin{construction}\label{Construct:LDPCL_BEC}~

\underline{Input:} local threshold $\epsilon_{L}$ and global threshold $\epsilon_{G}>\epsilon_L$.

\begin{enumerate}
	\item \label{Item:local dd BEC}Choose any local DD $(\lambda_L,\rho_L)$ such that $\epsilon^*(\lambda_L,\rho_L)=\epsilon_{L}$.
	\item Calculate $ a_s(\epsilon_G) $.
	\item \label{item:eps_J} Choose any joint DD $(\lambda_J,\rho_J)$ such that $\epsilon^*(\lambda_J,\rho_J)=\epsilon_{G}\cdot a_s(\epsilon_G)$.
	\item Set $ P_0=\frac{\epsilon_L}{\epsilon_G} $.
\end{enumerate}	

\end{construction}

\begin{theorem}\label{th:C1}
Let $ (\lambda_L,\rho_L,\lambda_J,\rho_J,P_0)$ be LDPCL degree distributions constructed by Construction~\ref{Construct:LDPCL_BEC}. Then $\epsilon^*(\lambda_L,\rho_L)=\epsilon_L$ and $\epsilon^*(\lambda_L,\rho_L,\lambda_J,\rho_J,P_0)=\epsilon_G$.
\end{theorem}

\begin{proof}
The local threshold $ \epsilon_L $ follows trivially from item~\ref{Item:local dd BEC} above.

From Theorem~\ref{th:numerical th} and \eqref{eq:1D th numeric}, since $P_0=\frac{\epsilon_L}{\epsilon_G}>0$, we have
\begin{align}\label{eq:up bound th C1}
\epsilon^*_G \leq \frac1{P_0}\cdot \inf_{(0,1]}\frac{x}{\lambda_L\left(1- \rho_L\left(1-x \right)\right)} = \frac{\epsilon_L}{P_0} = \epsilon_G.
\end{align}

For the opposite direction, let $\epsilon<\epsilon_G$. In view of Theorem~\ref{theorem:fix pt char}, it suffices to show that \eqref{eq:f,g fixed} has no solution for $(x,y)\in (0,1]\times [0,1]$. In view of \eqref{eq:f}, for every $ x \in (0,1]$,
\begin{align} \label{eq:f<x y=0}
f({\epsilon},x,0) &= {\epsilon} \lambda_L(1-\rho_L(1-x))P_0 \notag \\
&< P_0 \epsilon_G \lambda_L(1-\rho_L(1-x)) \notag \\
&= \epsilon_L \lambda_L(1-\rho_L(1-x)) \notag \\
&\leq x .
\end{align}
Furthermore, Definition~\ref{def:h_eps} implies that for every $(x,y) \in (x_s(\bar{\epsilon}), 1)\times[0,1]$,  
\begin{align}\label{eq:f<x}
f({\epsilon},x,y) 
&= {\epsilon} \lambda_L(1-\rho_L(1-x)) \Lambda_J\left(1- \rho_J\left(1-y\right)\right) \notag \\
&< x\Lambda_J\left(1- \rho_J\left(1-y\right)\right) \notag \\
&\leq x,
\end{align}
and from Lemma~\ref{lemma:monotonicity}, if $(x,y) \in (0,x_s(\bar{\epsilon})]\times(0,1]$, 
\begin{align} \label{eq:<a_eps}
g({\epsilon},x,y)
&\leq g({\epsilon},x_s({\epsilon}),y) \notag \\
&= {\epsilon} \lambda_J(1-\rho_J(1-y)) \Lambda_L\left(1- \rho_L\left(1-x_s(\bar{\epsilon})\right)\right) \notag \\
&= {\epsilon} \lambda_J(1-\rho_J(1-y)) a_s({\epsilon}).
\end{align}
Since ${\epsilon}<\epsilon_G$, then $\epsilon_J=\epsilon_G \cdot a_s(\epsilon_G) > {\epsilon}\cdot a_s({\epsilon})$, where $\epsilon_J$ is the BP decoding threshold of $(\lambda_J,\rho_J)$; thus \eqref{eq:<a_eps} yields 
\begin{align} \label{eq:g<y}
g({\epsilon},x,y)< \epsilon_J \lambda_J(1-\rho_J(1-y)) \leq y ,\quad \forall (x,y) \in (0,x_s({\epsilon})]\times(0,1].
\end{align}
Combining \eqref{eq:f<x y=0}, \eqref{eq:f<x}, and \eqref{eq:g<y} implies that \eqref{eq:f,g fixed} has no solution in $(0,1]\times [0,1]$. Thus $\epsilon^*_G \geq \epsilon$. Since this is true for any ${\epsilon}<\epsilon_G$, we conclude that 
\begin{align*}
\epsilon^*_G \geq \epsilon_G,
\end{align*}
which combined with \eqref{eq:up bound th C1} completes the proof.
\end{proof}

\begin{remark}
In most cases, it is hard to produce an analytical expression for $x_s(\epsilon)$, but if we limit the local degrees of the ensemble to be small, then a closed-form expression could be derived for $x_s(\epsilon),a_s(\epsilon)$, and $\epsilon_J$.
\end{remark}

\begin{example}

Consider local ensembles taking the form:
\begin{align} \label{eq:low local2}
	\lambda_L(x)=x,\, \rho_L(x) = \rho_2x+\rho_3x^2+\rho_4x^3, \quad \rho_i\geq 0,\,\rho_2+\rho_3+\rho_4=1.
\end{align}  
In view of \eqref{eq:1D th numeric}, for the family of ensembles given in \eqref{eq:low local2}, $\epsilon_L=\frac1{1+\rho_3+2\rho_4}\;.$
In addition, for every $\epsilon \in (\epsilon_L,1)$,
\begin{align*}
	x_s(\epsilon)=\left\{
	\begin{array}{ll}
	1-\tfrac1{\rho_3}\left(\tfrac1\epsilon-1 \right), & \rho_4=0 \\
	\tfrac{\rho_3+3\rho_4-\sqrt{\left( \rho_3+\rho_4\right)^2+4\rho_4\left(\tfrac1\epsilon-1 \right)}}{2\rho_4}, & \rho_4>0
	\end{array}
	\right.\;.
\end{align*}
Finally, $a_s(\epsilon)=\left(\frac{x_s(\epsilon)}{\epsilon}\right)^2\;.$ 
These closed-form expressions of $ x_s $ and $ a_s $ can be used for constructing a code with certain parameters $\epsilon_L\;, \epsilon_G$, using a simple optimization of the parameters $\rho_2\;,\rho_3\;,\rho_4$ (we omit the details here).

\end{example}

\subsection{Achieving Capacity} \label{sub:c.a. seq}

In this sub-section, we define the LDPCL notion of BEC capacity-achieving sequences, and prove that such sequences exist. During the derivation, we refer to $\delta(\lambda,\rho)$ as the additive gap to capacity of the $LDPC(\lambda,\rho)$ ensemble, i.e., $\delta(\lambda,\rho)=1-\epsilon^*(\lambda,\rho)-R(\lambda,\rho)$. Similarly, we define  $\delta\left(\lambda_L,\rho_L,\lambda_J,\rho_J,P_0\right)=1-\epsilon^*_G(\lambda_L,\rho_L,\lambda_J,\rho_J,P_0)-R(\lambda_L,\rho_L,\lambda_J,\rho_J,P_0)$ as the global additive gap to capacity.

\begin{definition} \label{def:c.a. seq}
	Let $ 0 < \epsilon_L < \epsilon_G<1$. A sequence $\left\{ \lambda^{(k)}_L,\rho^{(k)}_L,\lambda^{(k)}_J,\rho^{(k)}_J,P^{(k)}_0\right\}_{k \geq 1}$ is said to achieve capacity on a $BEC(\epsilon_G)$, with local decoding capability $\epsilon_L$ if:
	\begin{enumerate}
		\item $\lim\limits_{k \to \infty}\epsilon^*_L\left(\lambda^{(k)}_L,\rho^{(k)}_L\right)=\epsilon_L$
		\item $\lim\limits_{k \to \infty}\epsilon^*_G\left(\lambda^{(k)}_L,\rho^{(k)}_L,\lambda^{(k)}_J,\rho^{(k)}_J,P^{(k)}_0\right)=\epsilon_G$ 
		\item $\lim\limits_{k \to \infty}R\left(\lambda^{(k)}_L,\rho^{(k)}_L,\lambda^{(k)}_J,\rho^{(k)}_J,P^{(k)}_0\right)=1-\epsilon_G$ 
	\end{enumerate}
\end{definition}
Note that items~2 and~3 imply that $\lim\limits_{k \to \infty}\delta\left(\lambda^{(k)}_L,\rho^{(k)}_L,\lambda^{(k)}_J,\rho^{(k)}_J,P^{(k)}_0\right)=0$.

\begin{lemma} \label{lemma:gap2cap}
	Let $\left(\lambda_L,\rho_L,\lambda_J,\rho_J,P_0\right)$ be degree-distribution polynomials constructed according to Construction~\ref{Construct:LDPCL_BEC}, and let $\delta_L \triangleq \delta(\lambda_L,\rho_L)$ and $\delta_J\triangleq\delta(\lambda_J,\rho_J)$. Then,
	\begin{align} \label{eq:delta}
	\delta\left(\lambda_L,\rho_L,\lambda_J,\rho_J,P_0\right) \leq \delta_L+\delta_J\cdot \left( 1-P_0\right).
	\end{align}
\end{lemma}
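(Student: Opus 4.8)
The plan is to unwind the three ingredients in the definition of the gap to capacity, $\delta(\lambda_L,\rho_L,\lambda_J,\rho_J,P_0)=1-\epsilon^*_G-R$, using Theorem~\ref{th:C1} to pin down $\epsilon^*_G$ and equation~\eqref{eq:design rate2} (equivalently \eqref{eq:cap bound}) for the rate. Since the ensemble is constructed as in Theorem~\ref{th:C1}, we have $P_0=\epsilon_L/\epsilon_G$, the global threshold equals the target $\epsilon_G=\epsilon^*_G$, and the joint ensemble satisfies $\epsilon^*(\lambda_J,\rho_J)=\epsilon_J=\epsilon_G\cdot a_s(\epsilon_G)$. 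First I would write $\epsilon^*_G=\epsilon_G=\epsilon_J/a_s(\epsilon_G)=\epsilon_J/(1-P_0)\cdot(1/\epsilon_G)\cdot\ldots$ — more cleanly, from $P_0=\epsilon_L/\epsilon_G$ we get $\epsilon_G(1-P_0)=\epsilon_G-\epsilon_L$, and since $\epsilon_J=\epsilon_G a_s(\epsilon_G)$ we can express $\epsilon_J(1-P_0)$ in terms of the quantities appearing in \eqref{eq:cap bound}.

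Next I would bound the rate from above using the capacity bound \eqref{eq:cap bound}:
\begin{align*}
R \le 1-\frac{\int_0^1\rho_L}{\int_0^1\lambda_L}-\epsilon_J(1-P_0).
\end{align*}
Now substitute $\int_0^1\rho_L/\int_0^1\lambda_L = 1-R(\lambda_L,\rho_L)$ and, from the definition of the local gap, $1-R(\lambda_L,\rho_L)=\epsilon^*_L+\delta_L=\epsilon_L+\delta_L$. Likewise rewrite $\epsilon_J$: from $\delta_J=1-\epsilon_J-R(\lambda_J,\rho_J)$ we do not directly need $R(\lambda_J,\rho_J)$, but we will need that $\epsilon_J=1-\delta_J-R(\lambda_J,\rho_J)$; instead the cleaner route is to bound $\delta$ directly. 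Putting the pieces together,
\begin{align*}
\delta &= 1-\epsilon^*_G-R \ge 1-\epsilon_G-\Big(1-(\epsilon_L+\delta_L)-\epsilon_J(1-P_0)\Big)\\
&= \delta_L+\epsilon_J(1-P_0)+\epsilon_L-\epsilon_G.
\end{align*}
Wait — the inequality goes the wrong way for an upper bound, so instead I should keep \eqref{eq:cap bound} as a lower bound on $1-\epsilon_J(1-P_0)-(\text{local term})$ and chase the algebra so the $\le$ in \eqref{eq:cap bound} produces a $\le$ on $\delta$. Concretely: $\delta = 1-\epsilon_G - R \le 1-\epsilon_G - \big(1-(\epsilon_L+\delta_L)-\epsilon_J(1-P_0)\big)$ is false in that direction; the correct manipulation is $-R \ge -\big(1-\ldots\big)$, giving $\delta \ge \ldots$. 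So the honest path is: $R(\lambda_L,\rho_L,\lambda_J,\rho_J,P_0)$ as given \emph{exactly} by \eqref{eq:design rate2} equals $1-(1-R(\lambda_L,\rho_L)) - (1-\epsilon_J-\delta_J)(1-P_0)$, where I used $\int_0^1\rho_J/\int_0^1\lambda_J = 1-R(\lambda_J,\rho_J) = 1-\epsilon_J-\delta_J$. Hence
\begin{align*}
\delta &= 1-\epsilon_G-R = (\epsilon_L+\delta_L-\epsilon_G) + (1-\epsilon_J-\delta_J)(1-P_0).
\end{align*}
Then using $1-P_0 = 1-\epsilon_L/\epsilon_G = (\epsilon_G-\epsilon_L)/\epsilon_G$ and $\epsilon_J = \epsilon_G a_s(\epsilon_G)$, I expand $(1-\epsilon_J)(1-P_0) = (1-P_0) - \epsilon_J(1-P_0)$; since $a_s(\epsilon_G)\le 1$ we have $\epsilon_J(1-P_0)\le \epsilon_G(1-P_0)=\epsilon_G-\epsilon_L$, so $(1-\epsilon_J)(1-P_0)\ge (1-P_0)-(\epsilon_G-\epsilon_L)$... this again needs care about direction. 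The clean finish is: $(\epsilon_L-\epsilon_G)+(1-\epsilon_J)(1-P_0) \le (\epsilon_L-\epsilon_G)+(1-P_0) = (\epsilon_L-\epsilon_G)+\frac{\epsilon_G-\epsilon_L}{\epsilon_G} = (\epsilon_G-\epsilon_L)\big(\frac1{\epsilon_G}-1\big)\ge 0$, and I should show this term is in fact $\le 0$ or absorb it — actually it is nonnegative, so the bound $\delta\le\delta_L+\delta_J(1-P_0)$ requires that $(\epsilon_L-\epsilon_G)+(1-\epsilon_J)(1-P_0)-\delta_J(1-P_0)+\delta_J(1-P_0)\le\delta_J(1-P_0)$, i.e. that $(\epsilon_L-\epsilon_G)+(1-\epsilon_J)(1-P_0)\le 0$.

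So the crux — and the step I expect to be the main obstacle — is proving the inequality
\begin{align*}
(1-\epsilon_J)(1-P_0) \le \epsilon_G-\epsilon_L,
\end{align*}
which, after substituting $1-P_0=(\epsilon_G-\epsilon_L)/\epsilon_G$, is equivalent to $1-\epsilon_J \le \epsilon_G$, i.e. $\epsilon_J \ge 1-\epsilon_G$. Since $\epsilon_J=\epsilon_G\,a_s(\epsilon_G)$ and $a_s(\epsilon_G)=\Lambda_L(1-\rho_L(1-x_s(\epsilon_G)))\le 1$, this is \emph{not} automatic and may in fact fail in general; I suspect the intended reading is that the lemma bounds $\delta$ in terms of $\delta_L,\delta_J$ \emph{plus} the genuine slack $\epsilon_J(1-P_0)$ versus $\epsilon_G(1-P_0)$, and that the clean statement \eqref{eq:delta} follows because $a_s(\epsilon_G)\le 1$ forces $\epsilon_J(1-P_0)\le(\epsilon_G-\epsilon_L)$ which, combined with $\epsilon_L-\epsilon_G\le 0$, makes the net contribution of these terms non-positive once one also accounts that $1-\epsilon_G-R$ is computed with the \emph{exact} rate formula. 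I would therefore re-derive $\delta$ by writing $R$ exactly via \eqref{eq:design rate2}, substitute $1-R(\lambda_J,\rho_J)=\epsilon_J+\delta_J$ and $1-R(\lambda_L,\rho_L)=\epsilon_L+\delta_L$, collect terms, and check that the $\epsilon_L,\epsilon_G,\epsilon_J,P_0$ pieces cancel down to something bounded by $\delta_L+\delta_J(1-P_0)$ using only $a_s\le 1$ and $P_0=\epsilon_L/\epsilon_G$; the bookkeeping is elementary but the sign-tracking is the only place an error can creep in, so that is where I would spend the care.
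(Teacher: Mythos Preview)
Your overall plan --- write the global gap exactly via \eqref{eq:design rate2}, substitute the local and joint gaps, and then show the leftover ``threshold'' terms are non-positive using only $a_s(\epsilon_G)\le 1$ --- is exactly the paper's approach. The reason you got stuck is a single sign slip, not a missing idea.

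The slip is in the line where you wrote $\int_0^1\rho_J/\int_0^1\lambda_J = 1-R(\lambda_J,\rho_J) = 1-\epsilon_J-\delta_J$. The first equality is fine, but $R(\lambda_J,\rho_J)=1-\epsilon_J-\delta_J$, so $1-R(\lambda_J,\rho_J)=\epsilon_J+\delta_J$, not $1-\epsilon_J-\delta_J$. With the correct sign the exact rate is
\[
R \;=\; 1-(\epsilon_L+\delta_L)-(\epsilon_J+\delta_J)(1-P_0),
\]
hence
\[
\delta \;=\; 1-\epsilon_G-R \;=\; \delta_L+\delta_J(1-P_0)\;+\;\bigl[\,\epsilon_L-\epsilon_G+\epsilon_J(1-P_0)\,\bigr].
\]
Now plug $1-P_0=(\epsilon_G-\epsilon_L)/\epsilon_G$ into the bracket:
\[
\epsilon_L-\epsilon_G+\epsilon_J\cdot\frac{\epsilon_G-\epsilon_L}{\epsilon_G}
\;=\;(\epsilon_G-\epsilon_L)\Bigl(\frac{\epsilon_J}{\epsilon_G}-1\Bigr)
\;=\;\frac{1}{\epsilon_G}(\epsilon_J-\epsilon_G)(\epsilon_G-\epsilon_L)\;\le\;0,
\]
since $\epsilon_G>\epsilon_L$ and $\epsilon_J=\epsilon_G\,a_s(\epsilon_G)\le \epsilon_G$. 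This is precisely the factorization the paper uses, and it gives \eqref{eq:delta} immediately. In particular, the inequality you were led to, $\epsilon_J\ge 1-\epsilon_G$, is an artifact of the sign error and is neither needed nor true in general.
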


\begin{proof}
	Let $\epsilon_G=\epsilon^*_G\left(\lambda_L,\rho_L,\lambda_J,\rho_J,P_0\right)$ be the global threshold. In view of Theorem~\ref{th:C1}, $P_0=\tfrac{\epsilon_L}{\epsilon_G}$, and
	\begin{align} \label{eq:epsJ<eps}
	\epsilon_J \triangleq \epsilon^*\left( \lambda_J,\rho_J\right)= \epsilon_G \cdot a_s(\epsilon_G) \leq \epsilon_G.
	\end{align}
	In addition, by definition we have,
	\begin{align} \label{eq:deltaL deltaJ}
	\begin{split}
	\frac{\int_0^1 \rho_L(x)\mathrm{d}x}{\int_0^1 \lambda_L(x)\mathrm{d}x} = \epsilon_L+\delta_L ,\quad  \frac{\int_0^1 \rho_J(x)\mathrm{d}x}{\int_0^1 \lambda_J(x)\mathrm{d}x} = \epsilon_J+\delta_J,
	\end{split}
	\end{align}
	hence \eqref{eq:design rate2}, \eqref{eq:epsJ<eps} and \eqref{eq:deltaL deltaJ} imply,
	\begin{align}\label{eq:gap2cap proof}
	\delta\left(\lambda_L,\rho_L,\lambda_J,\rho_J,P_0\right) 
	&=  1-R\left(\lambda_L,\rho_L,\lambda_J,\rho_J,P_0\right)-\epsilon^*_G\left(\lambda_L,\rho_L,\lambda_J,\rho_J,P_0\right) \notag \\
	&=\epsilon_L+\delta_L+\left(\epsilon_J+\delta_J\right)\left( 1-\frac{\epsilon_L}{\epsilon_G}\right) -\epsilon_G \notag \\
	&=\frac1{\epsilon_G}\overbrace{(\epsilon_J-\epsilon_G)}^{\leq 0}\overbrace{(\epsilon_G-\epsilon_L)}^{\geq 0}+\delta_L+\delta_J\left( 1-\frac{\epsilon_L}{\epsilon_G}\right) \notag \\
	&\leq  \delta_L+\delta_J\left(1-\frac{\epsilon_L}{\epsilon_G}\right) \\
	&=  \delta_L+\delta_J\left(1-P_0\right).
	\end{align}
\end{proof}

\begin{remark}
	In principle, the bound in Lamma~\ref{lemma:gap2cap} may not be tight since $ \epsilon_J $ may be strictly smaller than $ \epsilon_G $. In this case the gap to capacity is smaller. However, due to dependencies between $ \epsilon_G $ and $\epsilon_J$ (see item~\ref{item:eps_J} in Construction~\ref{Construct:LDPCL_BEC}), we assume this worst case to simplify the bound. 
\end{remark}
At this point, it should be clear how to construct a capacity-achieving sequence of LDPCL ensembles on a $BEC(\epsilon_G)$, with a local decoding capability $\epsilon_L$. Choose any two sequences of (ordinary) LDPC ensembles $\left\{ \lambda^{(k)}_L,\rho^{(k)}_L,\right\}_{k \geq 1}$ and $\left\{ \lambda^{(k)}_J,\rho^{(k)}_J,\right\}_{k \geq 1}$ that achieve capacity on the $BEC(\epsilon_L)$ and $BEC(\epsilon_G)$, respectively, and set $P_0^{(k)}=\left(1-\frac{\epsilon_L}{\epsilon_G}\right)$, for all $k \geq 1$. Item 1 in Definition~\ref{def:c.a. seq} clearly holds for this  sequence, and in view of Theorem~\ref{th:C1}, item 2 in Definition~\ref{def:c.a. seq} holds as well. Finally, Lemma~\ref{lemma:gap2cap} implies that
\begin{align*}
\lim\limits_{k \to \infty}\delta\left(\lambda^{(k)}_L,\rho^{(k)}_L,\lambda^{(k)}_J,\rho^{(k)}_J,P_0\right)  
&\leq  \lim\limits_{k \to \infty}\delta\left(\lambda^{(k)}_L,\rho^{(k)}_L\right)  +  \lim\limits_{k \to \infty}\delta\left(\lambda^{(k)}_J,\rho^{(k)}_J\right)\left(1-\frac{\epsilon_L}{\epsilon}\right) =0.
\end{align*} 

\begin{example} \label{ex:c.a.}
	We construct an LDPCL capacity-achieving sequence with local and global threshold $\epsilon_L=0.05$ and $\epsilon_G=0.2$, respectively. We set $P_0=\tfrac{\epsilon_L}{\epsilon_G}=0.25$ and we use the Tornado capacity-achieving sequence \cite{Luby01},
	\begin{align} \label{eq:Tornado}
	\begin{split}
	& \lambda_L^{(D_L)}(x)=\frac1{H(D_L)}\sum_{i=1}^{D_L} \frac{x^i}{i}, \qquad
	\lambda_J^{(D_J)}(x)=\frac1{H(D_J)}\sum_{i=1}^{D_J} \frac{x^i}{i}, \\
	& \rho_L^{(D_L)}(x)=e^{-\alpha_L}\sum_{i=0}^{\infty} \frac{(\alpha_L x)^i}{i!}, \qquad
	\rho_J^{(D_J)}(x)=e^{-\alpha_J}\sum_{i=0}^{\infty} \frac{(\alpha_J x)^i}{i!},
	\end{split}
	\end{align}
	where $H(\cdot)$ is the harmonic sum, $\alpha_L=\tfrac{H(D_L)}{\epsilon_L}$ (the check degree-distribution series are truncated to get degree-distribution polynomials with finite degrees). $D_L$ (resp. $D_J$) controls the local (resp. joint) gap to capacity $\delta_L$ (resp. $\delta_J$); the bigger it is, the smaller the gap is. Table~\ref{tbl:tornado} exemplifies how the LDPCL sequence $\left\{\lambda_L^{(D_L)},\rho_L^{(D_L)},\lambda_J^{(D_J)},\rho_J^{(D_J)},P_0 \right\}$ approaches capacity as $D_L\to \infty,\;D_J\to \infty$: Theorem~\ref{th:C1} implies that for every value of $D_L$ and $D_J$, the global decoding threshold is $\epsilon^*_G\geq0.2$; the local additive gap to capacity $\delta_L$ and joint additive gap to capacity $\delta_J$ both vanish as $D_L\to \infty$ and $D_J\to \infty$, which in view of \eqref{eq:delta}, implies that the global additive gap to capacity $\delta$ vanishes as well.
	
	\begin{table}
		\caption{\label{tbl:tornado} Local gap to capacity $\delta_L$, joint gap to capacity $\delta_J$, and rate of the LDPCL capacity-achieving sequence given by \eqref{eq:Tornado} with $\epsilon_L=0.05$, $\epsilon_G=0.2$, and $P_0=0.25$.}
		\begin{center}
			
			\begin{tabular}{ c|c|c|c|c}
				\hline
				$D_L$&$\delta_L$&$D_J$&$\delta_J$&Rate    \\ \hline \hline
				1    &  0.05    &  1  &  0.2     & 0.6    \\
				1    &  0.05    &  2  &  0.1     & 0.67   \\
				1    &  0.05    &  10 &  0.02    & 0.735  \\
				1    &  0.05    &  100&  0.002   & 0.745  \\
				2    &  0.025   &  100&  0.002   & 0.775  \\
				5    &  0.01    &  100&  0.002   & 0.79   \\
				$\infty$&     0    &$\infty$&  0    & 0.8    
			\end{tabular}
		\end{center}
	\end{table}
	
\end{example}

\begin{remark}
	Table~\ref{tbl:tornado} shows the advantage of the multi-block scheme: one can get very close to capacity with local ensembles that are extremely low complexity thanks to their low $D_L$ values in the left column. 
\end{remark}

\begin{example}\label{Ex:BEC Sim}
	Figure~\ref{Fig:Sim Global BEC} compares the BEC global-decoding performance of an LDPCL code with ordinary LDPC codes. The $ M=4,n=2^{11} $ LDPCL code was constructed by Construction~\ref{Construct:LDPCL_BEC} with local degree distributions $ \lambda_L(x)=x^2,\;\rho_L(x)=x^{23} $ ($ \epsilon_L=0.1038 $), and capacity-achieving joint degree distributions from \eqref{eq:Tornado} with $D=30,\; \epsilon_G=0.45  $. The resulting code is of length $ 2^{13} $ with rate $ R=0.53 $. The two LDPC codes have the same rate and asymptotic threshold as the LDPCL code, and their lengths are $ n=2^{11} $ and $ n=2^{13} $ for the short and long block lengths, respectively. 	
	As seen in Figure~\ref{Fig:Sim Global BEC}, the LDPCL code outperforms in global decoding both the short and long ordinary LDPC codes. Improving over the short block is largely thanks to the increase in block length, and improving over the long code comes from the better finite block length behaviors of the local and joint degree distributions vs. the single degree-distribution pair of the LDPC code. 
	Moreover, the number of local edges in the Tanner graph of the LDPCL code is 6130 and the number of edges in the short and long block LDPC codes are 8365 and 33235, respectively.
	Consequently, choosing the LDPCL option allows both low-complexity decoding of small sub-blocks and increased reliability for the full block. 
	      
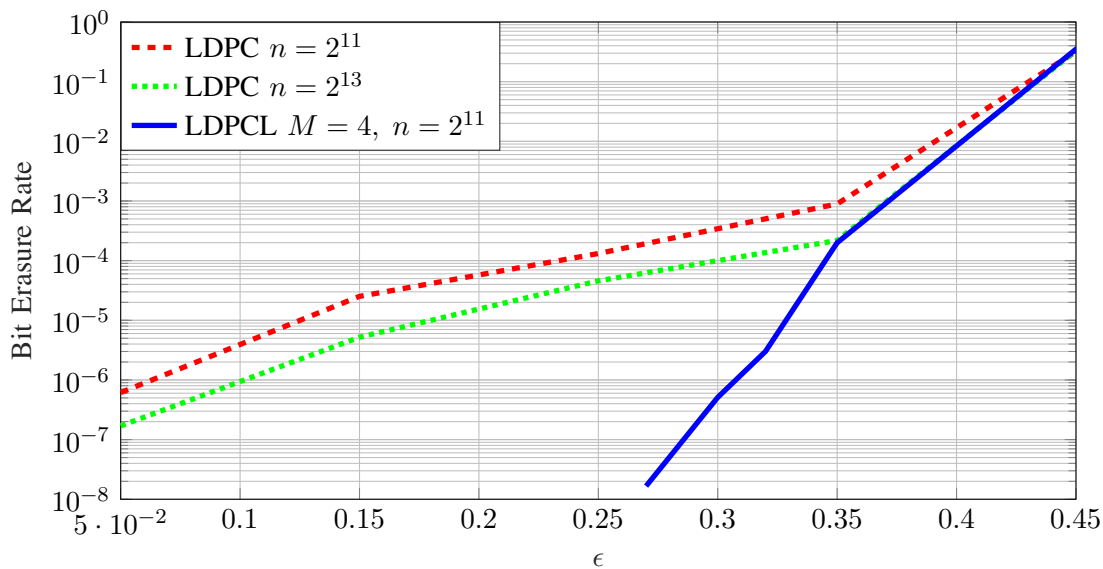
\begin{figure}[!h]
\begin{center}
	\begin{tikzpicture}

\begin{axis}[%
width=5in,
height=2.5in,
at={(0,0)},
scale only axis,
xmin=0.05,
xmax=0.45,
xlabel style={font=\color{white!15!black}},
xlabel={$\epsilon$},
ylabel style={font=\color{white!15!black}},
ylabel={Bit Erasure Rate},
ymode=log,
ymin=1e-08,
ymax=1,
yminorticks=true,
axis background/.style={fill=white},
xmajorgrids,
ymajorgrids,
yminorgrids,
legend style={at={(0,1)}, anchor=north west, legend cell align=left, align=left, draw=white!15!black}
]
\addplot [red,dashed,line width=2.0pt]
  table[row sep=crcr]{%
0.05	6.14742444841018e-07\\
0.15	2.51842059108836e-05\\
0.25	0.000131637360143128\\
0.35	0.000892787524366472\\
0.45	0.31893534762833\\
};
\addlegendentry{LDPC $ n=2^{11} $}

\addplot [green,dotted,line width=2.0pt]
  table[row sep=crcr]{%
0.05	1.69847145180381e-07\\
0.15	5.21229030226705e-06\\
0.25	4.583628122519e-05\\
0.35	0.000217934627752676\\
0.45	0.330143118817926\\
0.47	0.399126144828428\\
0.5	0.453926079700411\\
};
\addlegendentry{LDPC $ n=2^{13} $}

\addplot [blue,line width=2.0pt]
table[row sep=crcr]{%
	0.27	1.66439755585446e-08\\
	0.3	5.11108660321349e-07\\
	0.32	3.01186912488809e-06\\
	0.35	0.000199625651041667\\
	0.45	0.35078271484375\\
	0.47	0.413217041015625\\
	0.5	0.464005045572917\\
};
\addlegendentry{LDPCL $ M=4,\;n=2^{11}$}

\end{axis}
\end{tikzpicture}%
\end{center}
\caption{\label{Fig:Sim Global BEC} Global decoding performance over the BEC$( \epsilon )$. All codes are of rate $ R=0.53 $.  }
\end{figure}
\end{example}

\section{Reducing The Number of Joint Iterations}
\label{Sec:N_JI}
It has not been emphasized earlier in the paper, but in practical settings, the local and joint decoding iterations may be very different in terms of cost. Joint iterations access a much larger (factor $ M $) data unit, which may involve a high cost of transferring the bits to the check-node logic. In distributed-storage applications, the sub-blocks may even reside in remote sites, increasing the communication cost even further. Therefore, we would like to reduce the number of joint iterations (call it $N_{JI}$) performed by the decoder (i.e., rounds of variable-to-joint-check messages and joint-check-to-variable messages). Ideally, the decoder successfully decodes the desired sub-block (of length $n$) on the local graph (see Section~\ref{sub:Tanner}), does not enter the global mode, and no joint iterations are needed ($N_{JI}=0$); in the asymptotic regime, this happens when the fraction of erased bits is equal or less than the local threshold, i.e., $\epsilon \leq \epsilon_L$. However, if $\epsilon>\epsilon_L$, then the decoder fails to decode in the local mode and it enters the global mode where at least one joint iteration is necessary ($N_{JI}\geq 1$). 

In this section, we suggest a scheduling scheme for updating the joint side of the sub-blocked Tanner graph during global-mode decoding. We prove that our scheduling scheme is optimal in the sense of minimizing $N_{JI}$. In addition, we study how the parameters of the local and joint ensembles affect $N_{JI}$. Note that our notion of scheduling differs from the standard meaning of scheduling algorithms for iterative-decoding (see \cite{ZhangFossorier02, XiaoBeni04, CasGriotWesel07}). We consider scheduling of joint decoding iterations, while previous work considered the order of message passing between nodes in the Tanner graph.

\subsection{An $N_{JI}$-optimal scheduling scheme}
\label{sub:opt schedule}
Recall the LDPCL density-evolution equations:
\begin{subequations}
\begin{align}
\label{eq:2D-DE1}
&x_l = f\left(\epsilon,x_{l-1},y_{l-1}\right), \quad l \geq 0,\\
\label{eq:2D-DE2}
&y_l = g\left(\epsilon,x_{l-1},y_{l-1}\right),\quad l \geq 0,\\
\label{eq:2D-DE3}
&x_{-1}=y_{-1}=1,
\end{align}
\end{subequations}
where $f$ and $g$ are given in \eqref{eq:f} and \eqref{eq:g}, respectively; note that \eqref{eq:2D-DE1} and \eqref{eq:2D-DE2} express a local and a joint iteration, respectively. A scheduling scheme prescribes decoder access to the joint check nodes in only part of the iterations, and thus replaces \eqref{eq:2D-DE2} with 
\begin{align}\label{eq:schedule}
y_l = \left\{
\begin{array}{ll}
 g\left(\epsilon,x_{l-1},y_{l-1}\right) & l \in A \\
y_{l-1} & l \notin A
\end{array}
\right.
\end{align}
for some $A \subseteq \mathbb{N}$ representing the iteration numbers where joint checks are accessed; in this case we have
$ N_{JI}=\left| A \right|.$
Note that when the check-node update is skipped in the joint side there is no need to access the variable nodes outside the sub-block. 
Since Lemma~\ref{lemma:mono of DE} (monotonicity) still holds when \eqref{eq:2D-DE2} is replaced with \eqref{eq:schedule}, the limits $\lim\limits_{l \to \infty}x_l$ and $\lim\limits_{l \to \infty}y_l$ exist for every scheduling scheme.

Given local and joint degree-distributions, a scheduling scheme is called valid if for every $\epsilon< \epsilon_G=\epsilon^*_G(\Lambda_L,\Lambda_J,\Omega_L,\Omega_J)$, $\lim\limits_{l \to \infty}x_l(\epsilon)=0$ (successful decoding). Our goal is to find an optimal scheduling scheme: a valid scheduling scheme that minimizes $N_{JI}$.  For example, if $A=\emptyset$ (no joint updates), then $N_{JI}=0$ but $\lim\limits_{l \to \infty}x_l(\epsilon)>0$ if $\epsilon \in (\epsilon_L, \epsilon_G)$; thus, the scheduling scheme is not valid. If, on the other hand, joint checks are accessed in every iteration (as assumed in Sections~\ref{Sec:2D DE}--\ref{Sec:C1}, then the scheduling scheme is valid, but $N_{JI}$ equals the total number of iterations, which is the worst case. We do not require the scheduling scheme to be pre-determined, and it can use ``on-line" information about the decoding process. For example, it can use the current fraction of erasure messages or the change in this fraction between two consecutive iterations. 
\begin{definition} \label{def:eff epsilon}
Let $(\Lambda_J,\rho_J)$ be joint degree-distribution polynomials, let $\epsilon\in (0,1)$ be the erasure probability of a $BEC$, and let $y \in [0,1]$ be an instantaneous erasure probability from the joint perspective. We define the effective erasure probability from the local perspective as
\begin{align} \label{eq:eff loc eps}
\epsilon_{loc}(y) = \epsilon \cdot \Lambda_J\left(1-\rho_J(1-y)\right).
\end{align}
\end{definition}
In view of \eqref{eq:f} and \eqref{eq:eff loc eps}, we have
\begin{align} \label{eq:f with eff}
x_l = f(\epsilon,x_{l-1},y_{l-1})=\epsilon_{loc}(y_{l-1}) \lambda_L(1-&\rho_L(1-x_{l-1})).
\end{align}
$\epsilon_{loc}(y_{l-1}) $ takes the role of $\epsilon$ when the local code is viewed as a standard LDPC code, hence the term "effective erasure probability from the local perspective".

Our proposed scheduling scheme is parametrized by $\eta>0$, and is given by
\begin{align}\label{eq:opt schedule}
\begin{split}
&x_l=f(\epsilon,x_{l-1},y_{l-1}) \\
&y_l = \left\{
\begin{array}{ll}
g\left(\epsilon,x_{l-1},y_{l-1}\right) & \left| x_{l-2}-x_{l-1}\right|\leq \eta \text{ and } \epsilon_{loc}(y_{l-1})\geq \epsilon_L \\
y_{l-1} & \text{else}
\end{array}
\right. .
\end{split} 
\end{align}
\begin{lemma} \label{lemma:valid opt schedule}
For every $\eta>0$, the scheduling scheme described in \eqref{eq:opt schedule} is valid.
\end{lemma}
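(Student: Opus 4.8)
The plan is to show that the schedule in \eqref{eq:opt schedule} is valid, i.e. that $\lim_{l\to\infty}x_l(\epsilon)=0$ for every $\epsilon<\epsilon_G$. The monotonicity from Lemma~\ref{lemma:mono of DE} (which, as noted, still holds under the modified recursion) guarantees that the limits $x_\infty\triangleq\lim_{l\to\infty}x_l$ and $y_\infty\triangleq\lim_{l\to\infty}y_l$ exist, so it suffices to rule out $x_\infty>0$. I would argue by contradiction: assume $\epsilon<\epsilon_G$ but $x_\infty>0$. Because $x_l$ converges, the differences $|x_{l-2}-x_{l-1}|\to 0$, so eventually the first gating condition $|x_{l-2}-x_{l-1}|\le\eta$ is always satisfied. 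Hence from some iteration on, the only reason a joint update could be skipped is that $\epsilon_{loc}(y_{l-1})<\epsilon_L$.

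Split into two cases according to whether the joint side is updated infinitely often or not. \textbf{Case A: $A$ is finite}, i.e. from some $l_0$ on, $y_l=y_{l_0}\triangleq y^\star$ forever. Then for $l>l_0$ the local recursion is just the ordinary $1$-D density evolution $x_l=\epsilon_{loc}(y^\star)\lambda_L(1-\rho_L(1-x_{l-1}))$ with effective erasure probability $\epsilon_{loc}(y^\star)$. Since the update was skipped at $l_0+1$ (and the gap condition held), the gating rule forces $\epsilon_{loc}(y^\star)<\epsilon_L$; but $\epsilon_L$ is exactly the local BP threshold \eqref{eq:1D th numeric}, so this $1$-D recursion converges to $0$, contradicting $x_\infty>0$. \textbf{Case B: $A$ is infinite}, i.e. joint updates occur at infinitely many iterations $l_1<l_2<\cdots$. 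Taking the limit along this subsequence in both lines of \eqref{eq:opt schedule}, and using continuity of $f$ and $g$ together with convergence of the full sequences $x_l\to x_\infty$, $y_l\to y_\infty$, gives $x_\infty=f(\epsilon,x_\infty,y_\infty)$ and $y_\infty=g(\epsilon,x_\infty,y_\infty)$; that is, $(x_\infty,y_\infty)$ is an $(f,g)$-fixed point with $x_\infty>0$. By Theorem~\ref{theorem:fix pt char} this forces $\epsilon\ge\hat\epsilon=\epsilon^*_G$, contradicting $\epsilon<\epsilon_G$.

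There is one gap to close carefully in Case B: the recursion $x_l=f(\epsilon,x_{l-1},y_{l-1})$ holds at \emph{every} $l$ (only the $y$-update is gated), so taking the limit there is immediate; but the $y$-equation $y_l=g(\epsilon,x_{l-1},y_{l-1})$ only holds along the subsequence $l\in A$, so I must pass to the limit along $\{l_k\}$ and invoke that $x_{l_k-1}\to x_\infty$, $y_{l_k-1}\to y_\infty$ (true since the whole sequences converge). I should also double-check that Case A genuinely implies the skip happened with the gap condition satisfied: if after $l_0$ the $y$-update is always skipped, then for all large $l$ we have either $|x_{l-2}-x_{l-1}|>\eta$ or $\epsilon_{loc}(y_{l-1})<\epsilon_L$; the former is impossible for large $l$ because $x_l$ converges, so indeed $\epsilon_{loc}(y^\star)<\epsilon_L$.

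The main obstacle is the continuity/limit bookkeeping in Case B together with the case split itself — one has to be scrupulous that ``$A$ infinite'' really yields an honest $(f,g)$-fixed point and not merely a fixed point of some restricted map, and that ``$A$ finite'' really reduces to the one-dimensional density evolution with a sub-threshold effective erasure probability. Once those two reductions are set up cleanly, each case closes by a previously established result (the $1$-D threshold characterization \eqref{eq:1D th numeric} in Case A, Theorem~\ref{theorem:fix pt char} in Case B), so no new heavy machinery is required.
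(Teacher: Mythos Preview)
Your argument is correct and matches the paper's: the paper first isolates your Case~A as an auxiliary characterization (a schedule is valid iff $\epsilon_{loc}(y_l)<\epsilon_L$ for some $l$), then assumes by contradiction that $\epsilon_{loc}(y_\infty)\ge\epsilon_L$, which---since $y_l$ is non-increasing and the gap condition eventually holds---forces the joint update at every large $l$ and hence yields the non-trivial $(f,g)$-fixed point exactly as in your Case~B. Your explicit finite/infinite split on $A$ and the paper's split on whether $\epsilon_{loc}(y_\infty)\ge\epsilon_L$ are the same dichotomy, closed by the same two ingredients (the one-dimensional threshold \eqref{eq:1D th numeric} and Theorem~\ref{theorem:fix pt char}).
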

\begin{proof}
See Appendix~\ref{App:valid opt schedule}.
\end{proof}
Note that if $\eta=0$, the scheduling scheme described in \eqref{eq:opt schedule} is not valid. However, since local iterations have zero cost in our model, we can assume that we can apply arbitrarily many local iterations to get arbitrarily close to $\eta=0$. Numerical simulations show that $\eta=10^{-4}$ suffices for achieving minimal $N_{JI}$. For the following analysis we will assume that $\eta=0$, and that the scheduling scheme is still valid. In this scheduling scheme, the decoder tries to decode the sub-block on the local graph until it gets ``stuck", which refers to not being able to reduce the erasure probability while it is still strictly greater than zero. This happens first when $x_{l_1}=x_s(\epsilon)$, for some iteration $l_1$, where $x_s(\epsilon)$ is given in Definition~\ref{def:h_eps}. So, in the first joint update we have
\begin{align*}
\begin{split}
x_{l_1} &= x_s(\epsilon) ,\quad y_{l_1} =1 \\
x_{l_1+1} &= x_s(\epsilon) ,\quad y_{l_1+1} =g(\epsilon,x_s(\epsilon),1). 
\end{split}
\end{align*}
In view of \eqref{eq:f with eff}, the local graph now "sees" $\epsilon_{loc}(y_{l_1+1})<\epsilon$ as an effective erasure probability, and it can continue the decoding algorithm locally. It may get "stuck" again and another joint update will be invoked; this procedure continues until $\epsilon_{loc}(y_{l_p+1}) < \epsilon_L$ in the $p$'th (and last) update, which enables successful local decoding (i.e., $N_{JI}=p$). In general, let $\{l_k\}_{k=1}^{N_{JI}}$ be the joint update iterations of the scheduling scheme described above and let $\varepsilon_k $ be the effective erasure probability from the local perspective between joint updates $k-1$ and $k$. Then,  
\begin{subequations}
\begin{align}
\label{eq:y_1}
&y_{l_1}=1,\;\varepsilon_1=\epsilon,\;x_{l_1}=x_s(\epsilon), \\
\label{eq:y_l_k}
&y_{l_k}=g\left (\epsilon,x_{l_{k-1}},y_{l_{k-1}}\right ),\quad 2\leq k \leq N_{JI}, \\
\label{eq:eps_k}
&\varepsilon_k=\epsilon_{loc}(y_{l_k}),\quad 2\leq k \leq N_{JI}, \\
\label{eq:x_l_k}
&x_{l_k}=x_s(\varepsilon_k),\quad 2\leq k \leq N_{JI},
\end{align}
\end{subequations}
where
\begin{align} \label{eq:eps_k mono}
\begin{split}
\epsilon=\varepsilon_1 > \varepsilon_2 > \ldots > \varepsilon_{N_{JI}-1} \geq  \epsilon_L >  \varepsilon_{N_{JI}}.
\end{split}
\end{align}
\begin{lemma} \label{lemma:opt schedule}
The scheduling scheme described above is optimal. 
\end{lemma}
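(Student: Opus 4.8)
The plan is a ``greedy stays ahead'' comparison: for a fixed $\epsilon\in(\epsilon_L,\epsilon_G)$ (the case $\epsilon\le\epsilon_L$ is trivial since then $N_{JI}=0$ for the scheme $A=\emptyset$), I would take an arbitrary valid scheme $S$ and show that the scheme $S^\star$ of \eqref{eq:y_1}--\eqref{eq:x_l_k} performs no more joint updates than $S$. The mechanism is that $S^\star$ does \emph{maximal} local decoding before each joint update, so at its $k$-th joint update the local edge-erasure probability has already been driven down to $x_s$ of the current effective erasure probability, which is the smallest value any scheme can exhibit at that stage; by monotonicity of $g$ this makes each joint update of $S^\star$ as effective as possible at shrinking $y$, hence at shrinking the effective local erasure probability $\epsilon_{loc}(y)$ of \eqref{eq:eff loc eps}, which is the quantity that must ultimately fall below $\epsilon_L$.

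First I would collect the monotonicity ingredients: $g(\epsilon,\cdot,\cdot)$ is non-decreasing (Lemma~\ref{lemma:monotonicity}), and the same argument shows $\epsilon_{loc}(\cdot)=\epsilon\,\Lambda_J(1-\rho_J(1-\cdot))$ is non-decreasing; the one-iteration local map $\phi_{\bar\epsilon}(x)\triangleq\bar\epsilon\,\lambda_L(1-\rho_L(1-x))$ is non-decreasing, so iterating it from any $x\ge x_s(\bar\epsilon)$ keeps $x\ge x_s(\bar\epsilon)$ and drives it monotonically down to $x_s(\bar\epsilon)$; and $\bar\epsilon\mapsto x_s(\bar\epsilon)$ is non-decreasing because $h_{\bar\epsilon}(x)$ is non-decreasing in $\bar\epsilon$. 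Combining these with the fact that $y_l$ is frozen between consecutive joint updates (so a well-defined effective probability $\epsilon_{loc}(y_l)$ governs each inter-update epoch) and with \eqref{eq:x_k diff}, one obtains the basic invariant that, \emph{for any} scheme, $x_l\ge x_s(\epsilon_{loc}(y_l))$ for all $l$: local iterations cannot push $x$ below the current stuck point, and joint updates do not touch $x$ (one must be slightly careful here about the one-iteration lag in \eqref{eq:opt schedule}, where the $x$-update at iteration $l$ uses $y_{l-1}$).

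The core is an induction on ``number of joint updates performed so far''. For $\Sigma\in\{S,S^\star\}$ let $y^{\Sigma}_{(k)}$ be the joint edge-erasure probability right after $\Sigma$'s $k$-th joint update ($y^{\Sigma}_{(0)}=1$), $x^{\Sigma}_{(k)}$ the local edge-erasure probability present at that update, and $\epsilon^{\Sigma}_{k}\triangleq\epsilon_{loc}(y^{\Sigma}_{(k-1)})$ the effective probability of the preceding epoch. I would prove, for all $k\le\min(N_{JI}(S),N_{JI}(S^\star))$, the triple $\epsilon^{S^\star}_{k}\le\epsilon^{S}_{k}$, $x^{S^\star}_{(k)}\le x^{S}_{(k)}$, $y^{S^\star}_{(k)}\le y^{S}_{(k)}$. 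The base case $k=1$ uses $y_{(0)}=1$ for both and $\epsilon^{S^\star}_1=\epsilon^{S}_1$ (the effective probability before any joint update is the same for every scheme), so $x^{S^\star}_{(1)}=x_s(\epsilon^{S^\star}_1)\le x^{S}_{(1)}$ by \eqref{eq:x_k diff} and then $y^{S^\star}_{(1)}\le y^{S}_{(1)}$ by monotonicity of $g$. For the step, $\epsilon^{S^\star}_{k-1}\le\epsilon^{S}_{k-1}$ and $y^{S^\star}_{(k-1)}\le y^{S}_{(k-1)}$ give $\epsilon^{S^\star}_{k}=\epsilon_{loc}(y^{S^\star}_{(k-1)})\le\epsilon_{loc}(y^{S}_{(k-1)})=\epsilon^{S}_{k}$; monotonicity of $x_s$ together with $x^{S^\star}_{(k)}=x_s(\epsilon^{S^\star}_{k})$ (equality, by \eqref{eq:x_l_k}) and $x^{S}_{(k)}\ge x_s(\epsilon^{S}_{k})$ (by \eqref{eq:x_k diff}) gives $x^{S^\star}_{(k)}\le x^{S}_{(k)}$; and finally $y^{S^\star}_{(k)}=g(\epsilon,x^{S^\star}_{(k)},y^{S^\star}_{(k-1)})\le g(\epsilon,x^{S}_{(k)},y^{S}_{(k-1)})=y^{S}_{(k)}$ by monotonicity of $g$. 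To conclude: if $N_{JI}(S)=\infty$ there is nothing to prove; if $N_{JI}(S)=N<\infty$, then after $S$'s last joint update its effective probability is frozen at $\epsilon^{S}_{N+1}$, so the invariant $x_l\ge x_s(\epsilon^{S}_{N+1})$ together with validity ($\lim_l x_l=0$) forces $x_s(\epsilon^{S}_{N+1})=0$, i.e.\ $\epsilon^{S}_{N+1}\le\epsilon_L$. By the induction $\epsilon^{S^\star}_{N+1}\le\epsilon^{S}_{N+1}\le\epsilon_L$, hence $x_s(\epsilon^{S^\star}_{N+1})=0$, so $S^\star$ clears the remaining erasures by local iterations alone after at most $N$ joint updates; thus $N_{JI}(S^\star)\le N$, and since $S$ was arbitrary, $S^\star$ is optimal.

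I expect the genuine difficulty to be bookkeeping rather than conceptual: making ``number of joint updates so far'' a legitimate common index along which to compare two schemes whose interleavings of local and joint iterations are unrelated — in particular handling degenerate behaviours (a joint update fired before the local decoder is actually stuck, or several joint updates fired back to back) and the one-iteration lag in \eqref{eq:opt schedule}. The bound \eqref{eq:x_k diff} and the freezing of $y$ between joint updates are exactly what absorb these cases, and carefully verifying the invariant $x_l\ge x_s(\epsilon_{loc}(y_l))$ in all of them is the delicate step; the monotonicity algebra itself is routine once the invariant is in place.
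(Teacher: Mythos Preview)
Your proposal is correct and follows essentially the same route as the paper: a ``greedy stays ahead'' induction on the number of joint updates performed, showing that the $y$-value (and hence the effective local erasure probability) of $S^\star$ after $k$ joint updates is no larger than that of any other scheme after $k$ joint updates, using monotonicity of $g$ together with the extremality \eqref{eq:x_l_k} versus \eqref{eq:x_k diff}; the paper then finishes by a short contradiction where you argue directly. Your write-up is somewhat more explicit (you spell out the monotonicity of $\bar\epsilon\mapsto x_s(\bar\epsilon)$ and the invariant $x_l\ge x_s(\epsilon_{loc}(y_l))$, which the paper essentially packages into \eqref{eq:x_k diff}), but this is a fleshing-out of the same argument rather than a different approach.
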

\begin{proof}
See Appendix~\ref{App:opt schedule}.
\end{proof}
We assume from now on that the decoder applies the optimal scheduling scheme.

\subsection{The Rate vs. $N_{JI}$ Trade-Off}

As shown in Section~\ref{sub:c.a. seq}, for every  $0<\epsilon_L<\epsilon<1$, there exist capacity-achieving LDPCL sequences for the $BEC(\epsilon)$ with a local decoding threshold of $\epsilon_L$. Moreover, Theorem~\ref{th:C1} and Lemma~\ref{lemma:gap2cap} provide a construction of such a sequence. However, we will now see that the closer an LDPCL ensemble is to capacity, the higher the $N_{JI}$ is; therefore, to decrease $N_{JI}$ we have to pay with rate, and there are several ways to do so. In this section we study how the parameters of the local and joint degree-distributions, $\lambda_L,\rho_L,\lambda_J,\rho_J,P_0$, affect $N_{JI}$. In particular, we focus on how the local and joint additive gaps to capacity $\delta_L$ and $\delta_J$, receptively, affect $N_{JI}$.  

It is well known that if $\left\{ \lambda^{(k)},\rho^{(k)}\right\}^{\infty}_{k=1}$ is a (ordinary) capacity-achieving sequence for the $BEC(\epsilon)$, then 
\begin{align}\label{eq:c.a. known}
\lim_{k \to \infty} \epsilon\lambda^{(k)}(1-\rho^{(k)}(1-x))=x,\quad x \in [0,\epsilon]
\end{align}
(see \cite{Shok01}).
This leads to the following lemma.
\begin{lemma} \label{lemma:x_s c.a.}
Let  $0<\epsilon_L<\epsilon<1$, and let $\left\{ \lambda_L^{(k)},\rho_L^{(k)}\right\}^{\infty}_{k=1}$ be a capacity-achieving sequence for the $BEC(\epsilon_L)$. Then,
\begin{align*}
x_s(\epsilon)\triangleq\lim_{k \to \infty}x^{(k)}_s(\epsilon)=\epsilon,
\end{align*}
where $x^{(k)}_s(\epsilon)$ corresponds to Definition~\ref{def:h_eps} with $\left(\lambda_L^{(k)},\rho_L^{(k)}\right)$.
\end{lemma}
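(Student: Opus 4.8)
The plan is to prove the claimed equality by a sandwich argument, establishing $\limsup_k x^{(k)}_s(\epsilon)\le\epsilon$ and $\liminf_k x^{(k)}_s(\epsilon)\ge\epsilon$ separately; this simultaneously shows that the limit defining $x_s(\epsilon)$ in the statement exists and that it equals $\epsilon$.

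For the upper bound, note that it is uniform in $k$ and uses nothing about capacity-achievability: since every $\lambda^{(k)}_L$ has image in $[0,1]$, the function $h^{(k)}_\epsilon(x)=\epsilon\lambda^{(k)}_L(1-\rho^{(k)}_L(1-x))-x$ satisfies $h^{(k)}_\epsilon(x)\le\epsilon-x<0$ whenever $x>\epsilon$. Hence the set $\{x\in[0,1]\colon h^{(k)}_\epsilon(x)\ge0\}$ is contained in $[0,\epsilon]$, and since $h^{(k)}_\epsilon$ is continuous this set is compact, so by Definition~\ref{def:h_eps} the maximum $x^{(k)}_s(\epsilon)$ is attained and $x^{(k)}_s(\epsilon)\le\epsilon$ for all $k$; in particular $\limsup_k x^{(k)}_s(\epsilon)\le\epsilon$.

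For the lower bound, write $G^{(k)}(x)=\lambda^{(k)}_L(1-\rho^{(k)}_L(1-x))$, which is non-decreasing on $[0,1]$ (a composition of non-decreasing maps) and bounded above by $1$. The key observation is that although \eqref{eq:c.a. known}, applied to the sequence $\{\lambda^{(k)}_L,\rho^{(k)}_L\}$ which achieves capacity on $BEC(\epsilon_L)$, only gives $\epsilon_L G^{(k)}(x)\to x$ for $x\in[0,\epsilon_L]$, evaluating it at the endpoint $x=\epsilon_L$ yields $G^{(k)}(\epsilon_L)\to1$; then monotonicity forces $G^{(k)}(\epsilon_L)\le G^{(k)}(x)\le1$ for every $x\in[\epsilon_L,1]$, so by the squeeze theorem $G^{(k)}(x)\to1$ for all $x\in[\epsilon_L,1]$. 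Now fix any $x_0\in(\epsilon_L,\epsilon)$: then $h^{(k)}_\epsilon(x_0)=\epsilon G^{(k)}(x_0)-x_0\to\epsilon-x_0>0$, so $h^{(k)}_\epsilon(x_0)\ge0$ for all large $k$, whence $x^{(k)}_s(\epsilon)\ge x_0$ for all large $k$ and therefore $\liminf_k x^{(k)}_s(\epsilon)\ge x_0$. Letting $x_0\uparrow\epsilon$ gives $\liminf_k x^{(k)}_s(\epsilon)\ge\epsilon$, and combining with the upper bound yields $\epsilon\le\liminf_k x^{(k)}_s(\epsilon)\le\limsup_k x^{(k)}_s(\epsilon)\le\epsilon$, i.e.\ $\lim_k x^{(k)}_s(\epsilon)=\epsilon$.

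The only real subtlety — the step I expect to be the main obstacle — is passing from the local information that \eqref{eq:c.a. known} provides on $[0,\epsilon_L]$ to the behaviour of $G^{(k)}$ on the larger interval $[\epsilon_L,\epsilon]$ that governs $x^{(k)}_s(\epsilon)$; the resolution is that monotonicity of $G^{(k)}$ together with the crude ceiling $G^{(k)}\le1$ suffices. One must also be careful to invoke \eqref{eq:c.a. known} with the channel parameter $\epsilon_L$ (the channel on which the local sequence achieves capacity), not $\epsilon$, and to keep $\epsilon>\epsilon_L$ fixed so that the interval $(\epsilon_L,\epsilon)$ used in the limiting step is nonempty.
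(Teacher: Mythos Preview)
Your proof is correct and follows essentially the same sandwich argument as the paper's proof: an easy uniform upper bound $x^{(k)}_s(\epsilon)\le\epsilon$, followed by a pointwise lower bound obtained by showing $h^{(k)}_\epsilon(x_0)\to\epsilon-x_0>0$ at fixed $x_0<\epsilon$. Your write-up is in fact slightly more careful than the paper's: the paper asserts the limiting form $h_\epsilon(x)=\epsilon-x$ on $[\epsilon_L,\epsilon]$ directly from \eqref{eq:c.a. known}, whereas you make explicit the monotonicity-plus-ceiling argument needed to propagate $G^{(k)}(\epsilon_L)\to1$ to $G^{(k)}(x)\to1$ on all of $[\epsilon_L,1]$.
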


\begin{proof}
See Appendix~\ref{App:x_s c.a.}.
\end{proof}

Lemma~\ref{lemma:x_s c.a.} asserts that if the \emph{local} degree-distribution polynomials imply a local threshold $\epsilon_L$ and a design rate that is very close to capacity ($1-\epsilon_L$), and the channel erasure probability $\epsilon$ is grater than $\epsilon_L$, then the BP decoding algorithm on the local graph gets ``stuck" immediately after correcting only a small fraction of the erasures. This leads, in view of \eqref{eq:y_l_k}, to a small change in the erasure-message probability on the joint update, which in turn yields a minor progress on the local side. Therefore, choosing close to capacity \emph{local} degree-distribution polynomials implies high $N_{JI}$.
Another consequence of \eqref{eq:c.a. known} is that the change in the erasure-message probability in one iteration of the BP decoding algorithm is small. Thus, close to capacity \emph{joint} degree-distribution polynomials yield high $N_{JI}$, regardless of the \emph{local} degree-distribution polynomials.

\begin{example} \label{ex:N_JI}
Let $\epsilon_L=0.05$ and $\epsilon_G=0.2$. We use the capacity-achieving LDPCL sequence for the $BEC(\epsilon_G)$ with a local decoding threshold of $\epsilon_L$ given in Example~\ref{ex:c.a.}. 
A computer program simulated \eqref{eq:y_1}-\eqref{eq:x_l_k} with \eqref{eq:Tornado} for different values of $D_L$ and $D_J$, and the results are presented in Figure~\ref{Fig:EL05P1EG2}. The plot exemplifies the trade-off between rate and $N_{JI}$: when the ensemble is close to capacity with $\delta_L=10^{-2},\delta_J=2.5\cdot10^{-4}$ ($R=0.79$), we get $N_{JI}=570$, and to reduce $N_{JI}$ we have to pay with rate. However, there are several ways to do so. For example, changing the local gap to $\delta_L=5\cdot10^{-2}$ while the joint gap stays $\delta_J=2.5\cdot10^{-4}$ yields $R=0.75$ and $N_{JI}=26$, and changing the local and joint gap to $\delta_L=2.5\cdot10^{-2}$ and $\delta_J=4\cdot10^{-2}$, respectively, yields the same $R=0.75$ but a smaller $N_{JI}=11$.

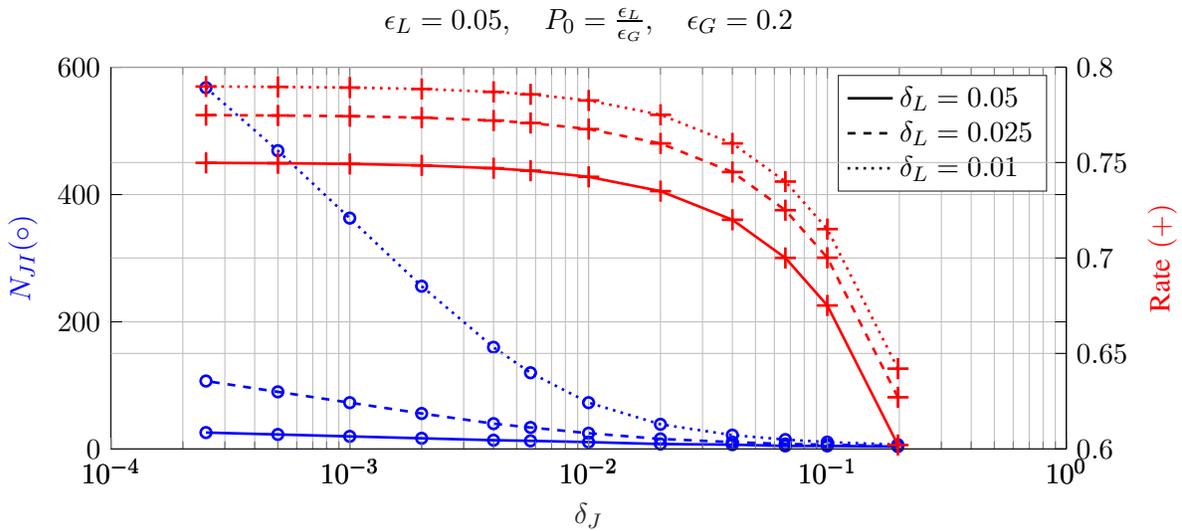
\begin{figure}[!h]
\begin{center}
\begin{tikzpicture}

\begin{axis}[%
width=5in,
height=2in,
scale only axis,
xmode=log,
xmin=0.0001,
xmax=1,
xminorticks=true,
xlabel style={font=\color{white!15!black}},
xlabel={$\delta_J$},
separate axis lines,
every outer y axis line/.append style={black},
every y tick label/.append style={font=\color{black}},
every y tick/.append style={black},
ymin=0,
ymax=600,
ylabel={\textcolor{blue}{$N_{JI} (\circ)$}},
axis background/.style={fill=white},
title={$\epsilon_L=0.05,\quad P_0=\frac{\epsilon_L}{\epsilon_G},\quad \epsilon_G=0.2$},
xmajorgrids,
xminorgrids,
ymajorgrids,
legend entries={$\delta_L=0.05$,$\delta_L=0.025$,$\delta_L=0.01$},
legend style={legend cell align=left, align=left, draw=white!15!black}
]
\addlegendimage{no markers,black,solid, line width=1.0pt}
\addlegendimage{no markers,black,dashed, line width=1.0pt}
\addlegendimage{no markers,black,dotted, line width=1.0pt}
\addplot [color=blue, line width=1.0pt,mark=o, mark options={solid}]
  table[row sep=crcr]{%
0.197254817033647	4\\
0.0998340372099368	5\\
0.0666388095230959	5\\
0.0399973574944374	7\\
0.0199999040049642	8\\
0.0099999967653871	11\\
0.00571428551077802	13\\
0.0039999999653707	14\\
0.00199999999893696	17\\
0.000999999999992784	20\\
0.00050000000003414	23\\
0.000250000000009409	26\\
};

\addplot [color=blue, dashed, line width=1.0pt,mark=o, mark options={solid}]
  table[row sep=crcr]{%
0.197254817033647	5\\
0.0998340372099368	7\\
0.0666388095230959	8\\
0.0399973574944374	11\\
0.0199999040049642	16\\
0.0099999967653871	25\\
0.00571428551077802	34\\
0.0039999999653707	40\\
0.00199999999893696	56\\
0.000999999999992784	73\\
0.00050000000003414	90\\
0.000250000000009409	107\\
};

\addplot [color=blue, dotted, line width=1.0pt,mark=o, mark options={solid}]
  table[row sep=crcr]{%
0.197254817033647	7\\
0.0998340372099368	11\\
0.0666388095230959	15\\
0.0399973574944374	22\\
0.0199999040049642	39\\
0.0099999967653871	73\\
0.00571428551077802	120\\
0.0039999999653707	160\\
0.00199999999893696	256\\
0.000999999999992784	363\\
0.00050000000003414	469\\
0.000250000000009409	568\\
};

\end{axis}

\begin{axis}[%
width=5in,
height=2in,
scale only axis,
xmode=log,
xmin=0.0001,
xmax=1,
xminorticks=true,
xlabel style={font=\color{white!15!black}},
xlabel={},
every outer y axis line/.append style={black},
every y tick label/.append style={font=\color{black}},
every y tick/.append style={black},
ymin=0.6,
ymax=0.8,
ylabel={\textcolor{red}{Rate $(+)$}},
axis x line*=bottom,
axis y line*=right,
ymajorgrids
]

\addplot [color=red, line width=1.0pt, mark=+,forget plot, mark options={solid},mark size=4pt]
  table[row sep=crcr]{%
0.197254817033647	0.602021384305841\\
0.0998340372099368	0.675124444189399\\
0.0666388095230959	0.700020893034882\\
0.0399973574944374	0.720001982085271\\
0.0199999040049642	0.735000072202418\\
0.0099999967653871	0.7425000026321\\
0.00571428551077802	0.745714286073054\\
0.0039999999653707	0.747000000232104\\
0.00199999999893696	0.74850000020694\\
0.000999999999992784	0.749250000206142\\
0.00050000000003414	0.749625000206116\\
0.000250000000009409	0.749812500206115\\
};
\addplot [color=red, line width=1.0pt, dashed, mark=+,forget plot, mark options={solid},mark size=4pt]
  table[row sep=crcr]{%
0.197254817033647	0.627021384099733\\
0.0998340372099368	0.70012444398329\\
0.0666388095230959	0.725020892828773\\
0.0399973574944374	0.745001981879163\\
0.0199999040049642	0.76000007199631\\
0.0099999967653871	0.767500002425992\\
0.00571428551077802	0.770714285866945\\
0.0039999999653707	0.772000000025996\\
0.00199999999893696	0.773500000000832\\
0.000999999999992784	0.774250000000033\\
0.00050000000003414	0.774625000000008\\
0.000250000000009409	0.774812500000007\\
};
\addplot [color=red, line width=1.0pt, dotted, mark=+,forget plot, mark options={solid},mark size=4pt]
  table[row sep=crcr]{%
0.197254817033647	0.642021384099757\\
0.0998340372099368	0.715124443983314\\
0.0666388095230959	0.740020892828797\\
0.0399973574944374	0.760001981879187\\
0.0199999040049642	0.775000071996334\\
0.0099999967653871	0.782500002426016\\
0.00571428551077802	0.785714285866969\\
0.0039999999653707	0.78700000002602\\
0.00199999999893696	0.788500000000856\\
0.000999999999992784	0.789250000000057\\
0.00050000000003414	0.789625000000032\\
0.000250000000009409	0.789812500000031\\
};
\end{axis}
\end{tikzpicture}%
\caption{\label{Fig:EL05P1EG2}
Plot of $N_{JI}$ (blue $\circ$ marks) and the design rate (red $+$ marks) as a function of the joint additive gap to capacity $\delta_J$ for different values of the local additive gap to capacity $\delta_L$.}
\end{center}
\end{figure}
\end{example}

\section{Finite-Length Analysis For ML Decoding}
\label{Sec:Finite}
In this section, we extend to LDPCL codes the finite block-length analysis of \cite[Lemma~B.2]{Di02}. We derive upper bounds on the expected block erasure probability under ML decoding of regular LDPCL ensembles. As shown in \cite{Di02}, the union bound is fairly tight for not too small block lengths, and the performance loss of iterative decoding compared to ML decoding is not too high. This motivates a bounding approach that simplifies the expressions by employing the union bound and the ML decoding analysis. The following bounds exemplify the effect on global decoding performance (in this case ML) caused by enabling local sub-block decoding.

For a sub-blocked Tanner graph $\mathcal{G}$ picked from the $\left(M,n,l_L,r_L,l_J,r_J\right)$-regular ensemble (i.e., $ \lambda_L(x)=x^{l_L-1} $, etc.), let $P_B^{ML}\left(\mathcal{G},\epsilon\right)$ designate the block-decoding-failure probability of the code $\mathcal{G}$ over the $BEC(\epsilon)$ when decoded by the ML decoder. In what follows, for a power series $f(x)=\sum_if_ix^i$, let $\mathrm{coef}\left( f(x),x^i\right)$ designate the coefficient of $x^i$ in $f(x)$, i.e., $\mathrm{coef}\left( f(x),x^i\right)=f_i$. 
\begin{definition}
For every $l,r,n,w \in \mathbb{N}$ such that $n\tfrac{l}{r}\in \mathbb{N}$, $l,r,n \geq 1$ and $w \leq n$, let
\begin{align}\label{eq:A(l,r,n,w)}
A(l,r,n,w)=\frac{\mathrm{coef}\left( \left( \frac{\left(1+x \right)^r+\left(1-x \right)^r}{2} \right)^{n\tfrac{l}{r}},x^{wl} \right)}{\binom{nl}{wl}}.
\end{align}
\end{definition}

\begin{theorem}[Union Bound - ML Decoding of regular LDPCL Codes]
~\begin{enumerate}
\item For $\epsilon \in (0,1)$, let $\bar{\epsilon}=1-\epsilon$.
\item For $n\in \mathbb{N}^+$, let $[n]=\{1,2,\ldots,n\}$, and ${[n]}_0=\{0\} \cup [n]$.
\item For $m \in \mathbb{N}$, let $e_1^m=\left( e_1,e_2,\ldots,e_m\right)$, where for every $i \in \{1,2,\ldots,m\}$, $e_i\in \mathbb{N}^+$, and let ${[e_1^m]}_0={[e_1]}_0 \times {[e_2]}_0\times\ldots\times {[e_m]}_0$.
\end{enumerate}
Then,
\begin{align}\label{eq:MLUB}
\begin{split}
\mathbb{E}\left[ P_B^{ML}\left(\mathcal{G},\epsilon\right)\right] \leq 
&\sum_{m=1}^M \binom{M}{m}\bar{\epsilon}^{(M-m)n} \sum_{e_1^m \in {[n]}^m} \prod_{i=1}^m \binom{n}{e_i} \epsilon^{e_i} \bar{\epsilon}^{(n-e_i)} \\
& \hspace*{0.2cm} \cdot\min\left\{ 1,-1+\sum_{w_1^m \in {[e_1^m]}_0 } A(l_J,r_J,Mn,w_1+w_2+\ldots+w_m)\prod_{i=1}^m A(l_L,r_L,n,w_i) \right\},
\end{split}
\end{align}
where the expectation is w.r.t. the $\left(M,n,l_L,r_L,l_J,r_J\right)$-regular ensemble.
\end{theorem}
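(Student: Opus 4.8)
The plan is to combine the union bound over error events with an averaging over the random regular LDPCL ensemble, exploiting the block-diagonal-plus-global structure of the parity-check matrix \eqref{Loc PC Matrix 1}. First I would recall the standard fact (as in \cite[Lemma~B.2]{Di02}) that for ML decoding over the $BEC(\epsilon)$, decoding fails if and only if the set $\mathcal{E}$ of erased positions contains the support of a nonzero codeword; equivalently, the restriction of the code to the erased coordinates is nontrivial. Hence, for a fixed graph $\mathcal{G}$ and a fixed erasure pattern $\mathcal{E}$, $\mathbb{P}_B^{ML}$-failure given $\mathcal{E}$ is the indicator that there exists a nonzero codeword supported inside $\mathcal{E}$, which by the union bound is at most $\min\{1, (\text{number of nonzero codewords supported in }\mathcal{E})\}$. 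Averaging over $\mathcal{G}$ and then over $\mathcal{E}$, and pushing the expectation inside the $\min$ only after bounding $\min\{1,X\}\le \min\{1, E[X]\}$ pointwise in the conditioning, is the route to \eqref{eq:MLUB}.

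Next I would set up the combinatorics of the erasure pattern. Since the $M$ sub-blocks are symmetric, condition on the number $m$ of sub-blocks that contain at least one erasure; there are $\binom{M}{m}$ such choices, the $M-m$ untouched sub-blocks are fully received (contributing the factor $\bar\epsilon^{(M-m)n}$), and within each of the $m$ touched sub-blocks the erasure count $e_i$ ranges over $[n]$ with binomial weight $\binom{n}{e_i}\epsilon^{e_i}\bar\epsilon^{(n-e_i)}$. This produces the outer two sums and the product $\prod_{i=1}^m\binom{n}{e_i}\epsilon^{e_i}\bar\epsilon^{(n-e_i)}$ in \eqref{eq:MLUB}. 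It remains to bound, for a fixed vector $e_1^m$ of per-sub-block erasure counts, the expected number of nonzero codewords of $\mathcal{G}$ whose support lies inside the chosen erasure set.

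The core step is the weight-enumerator computation. A codeword of the two-sided Tanner graph is a binary vector $c\in\mathbb{F}_2^{Mn}$ that is simultaneously in the kernel of every local block $H_i$ and in the kernel of $H_0$. Write $w_i$ for the weight of $c$ restricted to sub-block $i$ (so $0\le w_i\le e_i$ because the support must sit inside the erasures, and $w_i=0$ on untouched sub-blocks); then summing $w_i$ gives the global weight $w_1+\dots+w_m$. For the regular ensemble, the local graphs and the joint graph are drawn independently (with an independent permutation of variable sockets on the joint side), so the expected number of codewords of a given weight profile factors: the probability that a fixed weight-$w_i$ pattern in sub-block $i$ satisfies the random $(l_L,r_L)$-regular local code is exactly $A(l_L,r_L,n,w_i)$, and the probability that a fixed weight-$(w_1+\dots+w_m)$ pattern on the full block satisfies the random $(l_J,r_J)$-regular joint code is $A(l_J,r_J,Mn,w_1+\dots+w_m)$ — this is precisely the standard check-regular ensemble computation, where $A(l,r,N,w)$ as defined in \eqref{eq:A(l,r,n,w)} counts the normalized fraction of edge-assignments consistent with a weight-$w$ codeword, using the generating function $\bigl((1+x)^r+(1-x)^r\bigr)/2$ per check node and the socket-permutation normalization $\binom{Nl}{wl}$ in the denominator. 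Multiplying these, summing over all weight profiles $w_1^m\in[e_1^m]_0$, and subtracting $1$ to exclude the all-zero codeword gives the quantity inside the $\min$; the $\min\{1,\cdot\}$ then comes from the trivial fact that a probability (the failure indicator's expectation conditioned on $\mathcal E$ only partially — here we only commute $\min$ with expectation in one direction) is at most $1$.

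The main obstacle I anticipate is justifying the factorization of the expected weight enumerator cleanly, i.e., arguing that conditioning on the support lying in the erasure set, the events ``$c$ kills local block $i$'' for each $i$ and ``$c$ kills the joint block'' are independent under the ensemble randomness, and that the per-block probability is \emph{exactly} the ratio in \eqref{eq:A(l,r,n,w)} rather than merely an average. This requires the now-standard but slightly delicate argument for check-regular configuration-model ensembles: fixing the \emph{values} of a weight-$w$ vector, the number of ways to wire its $wl$ variable-sockets and the complementary $(n-w)l$ sockets to the check-sockets so that every check sees an even number of $1$-sockets is the coefficient extraction in \eqref{eq:A(l,r,n,w)}, divided by the total $\binom{nl}{wl}$ ways (after cancelling the common factorials from the socket permutations). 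The independent-permutation step in the irregular-to-regular sampling description of Section~\ref{sub:Regular}/\ref{sub:Iregular} is what makes the local and joint contributions multiply; once that is in place, the remaining manipulations — collecting the $\binom{M}{m}$ symmetry factor, the $\bar\epsilon^{(M-m)n}$ term, the binomial erasure weights, and the union-bound $\min\{1,\cdot\}$ — are routine, and I would present them compactly rather than in full detail.
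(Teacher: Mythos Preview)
Your proposal is correct and follows essentially the same route as the paper: condition on the erasure pattern (grouped by the number $m$ of touched sub-blocks and per-sub-block erasure counts $e_i$), invoke the ML-failure $\Leftrightarrow$ rank-deficiency equivalence, apply the union bound over nonzero vectors supported in $\mathcal{E}$, and then factor the ensemble probability $\Pr\{H_{\mathcal{E}}x^T=0\}$ using the block structure of \eqref{Loc PC Matrix 1} together with the independence of $H_0,H_1,\ldots,H_M$, citing the standard $(l,r)$-regular formula $A(l,r,\cdot,\cdot)$ from \cite{Di02} for each factor. The paper's proof is exactly this, with the $\min\{1,\cdot\}$ arising (as you note at the end) simply from the trivial bound $\Pr\{\cdot\}\le 1$ rather than from any Jensen-type exchange of $\min$ and expectation; your earlier remark about ``pushing the expectation inside the $\min$'' is unnecessary and slightly misleading, but harmless.
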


\begin{proof}
Let $\mathcal{E} \subseteq \{1,2,\ldots,Mn\}$ be an erasure pattern specifying the indices of the erased code bits, and for every $i \in \{1,2,\ldots,M\}$, let $\mathcal{E}_i=\mathcal{E} \cap \{ 1+(i-1)n,2+(i-1)n,\ldots,n+(i-1)n\}$ be the erasure pattern in the $i$-th sub-block.
Consider the case that there are erasures only in the first $m$ sub-blocks, i.e., 
\begin{align} \label{eq:E_i>0}
\begin{split}
&|\mathcal{E}_i| >0 , \quad 1\leq i \leq m \\
&| \mathcal{E}_i| =0 , \quad l< m \leq M.
\end{split}
\end{align}
It is well known that the ML decoder fails if and only if $\mathrm{rank}\left( H_{\mathcal{E}}\right) < |\mathcal{E}|$, where $H$ is the parity-check matrix corresponding to a random member of the $(l_L,r_L,l_J,r_J)$-regular LDPCL ensemble, and $H_\mathcal{E}$ is the sub-matrix comprising the columns of $H$ indexed by $\mathcal{E}$. Applying the union bound yields
\begin{subequations}
\begin{align} 
\label{eq:UB1}
\Pr\left\{ \mathrm{rank}\left( H_{\mathcal{E}}\right) < |\mathcal{E}| \right\} 
&= \Pr\left\{ \exists x \in \mathbb{F}_2^{|\mathcal{E}|}\setminus \{0\},H_{\mathcal{E}}x^T=0\right\} \\
\label{eq:UB2}
&\leq \sum_{x \in \mathbb{F}_2^{|\mathcal{E}|}\setminus \{0\}}\Pr\left\{ H_{\mathcal{E}}x^T=0 \right\} \\
\label{eq:UB3}
&= -1+\sum_{x \in \mathbb{F}_2^{|\mathcal{E}|}}\Pr\left\{ H_{\mathcal{E}}x^T=0 \right\} \\
\begin{split}\label{eq:UB4}
&= -1+\sum_{x_1 \in \mathbb{F}_2^{|\mathcal{E}_1|}}\ldots\sum_{x_m \in \mathbb{F}_2^{|\mathcal{E}_m|}}\Pr\left\{ {H_0}_{\mathcal{E}}(x_1,\ldots,x_l)^T=0 \right\} \\ 
& \hspace*{4.5cm}\cdot\prod_{i=1}^m\Pr\left\{ {H_i}_{\mathcal{E}_i}x_i^T=0 \right\} 
\end{split}
\end{align}
\end{subequations}
The steps until getting to \eqref{eq:UB3} are the standard ones from \cite{Di02}, while \eqref{eq:UB4} follows from the sub-block structure introduced in LDPCL codes.
By marking $w_i=w(x_i)$ for every $i\in \{1,2,\ldots,m\}$, as the Hamming weight of the sub-word $x_i\in \mathbb{F}_2^{|\mathcal{E}_i|}$ we get (see \cite[Lemma B.2.]{Di02})
\begin{align} \label{eq:using A}
\begin{split}
&\Pr\left\{ {H_0}_{\mathcal{E}}(x_1,x_2,\ldots,x_m)^T=0 \right\} = A(l_J,r_J,Mn,w_1+w_2+\ldots +w_m). \\
&\Pr\left\{ {H_i}_{\mathcal{E}_i}x_i^T=0 \right\} = A(l_L,r_L,n,w_i),\quad 1\leq i \leq m.
\end{split}
\end{align}
Combining \eqref{eq:UB4} and \eqref{eq:using A} implies that if \eqref{eq:E_i>0} holds, then
\begin{align} \label{eq:UB+A}
\begin{split}
\Pr&\left\{ \mathrm{rank}\left(  H_{\mathcal{E}}\right) < |\mathcal{E}| \right\} \leq \\
&-1+\sum_{w_1 =0}^{|\mathcal{E}_1|}\sum_{w_2 =0}^{|\mathcal{E}_2|}\ldots\sum_{w_m =0}^{|\mathcal{E}_m|}A(l_J,r_J,Mn,w_1+w_2+\ldots +w_m) \prod_{i=1}^m A(l_L,r_L,n,w_i).
\end{split}
\end{align}

Since the $BEC$ is memoryless, then
\begin{align}\label{eq:prob EPS}
\Pr\left\{ \mathcal{E}\right\}=\bar{\epsilon}^{(M-m)n} \prod_{i=1}^m \epsilon^{|\mathcal{E}_i|} \bar{\epsilon}^{( n-|\mathcal{E}_i|)}.
\end{align}
From symmetry, \eqref{eq:UB+A} holds for every error pattern $\mathcal{E}$ that lies in exactly $m$ sub-blocks (i.e., not only the first $m$ sub-blocks). Finally, summing over $m \in \{1,2,\ldots,M\}$ and counting every error pattern imply
\begin{align} \label{fin proof}
\begin{split}
\mathbb{E}\left[ P_B^{ML}\left(\mathcal{G},\epsilon\right)\right] 
&=\sum_\mathcal{E}\Pr\left\{\mathcal{E}\right\}\Pr\left\{ \mathrm{rank}\left(  H_{\mathcal{E}}\right) < |\mathcal{E}| \right\} \\
&\leq\sum_{m=1}^M \binom{M}{m}\bar{\epsilon}^{(M-m)n} \sum_{e_1^m \in {[n]}^m} \prod_{i=1}^m \binom{n}{e_i} \epsilon^{e_i} \bar{\epsilon}^{(n-e_i)} \\
& \hspace*{0.2cm} \cdot\min\left\{ 1,-1+\sum_{w_1^m \in {[e_1^m]}_0 } A(l_J,r_J,Mn,w_1+w_2+\ldots+w_l)\prod_{i=1}^m A(l_l,r_L,n,w_i) \right\}.
\end{split}
\end{align}
\end{proof}

Figure~\ref{Fig:n120} plots the upper bound in \eqref{eq:MLUB} for the $(2,6,1,6)$-regular LDPCL ensemble, with localities $M=2,3$ and sub-block lengths $n=180,120$, respectively. Also shown is the upper bound in \cite[Lemma B.2]{Di02} for the standard $(3,6)$-regular LDPC ensemble. The ensembles in Figure~\ref{Fig:n120} have the same total block length of $360$ (in view of Remark~\ref{remark:asymmetry}, the iterative decoding performance of the $(2,6,1,6)$-regular LDPCL ensemble and the $(3,6)$-regular LDPC ensemble coincide as $n \to \infty$). Figure~\ref{Fig:n120} exemplifies the trade-off between the sub-block local-access capability ($M$) and the error correcting capability over the global block; for every $\epsilon \in [0.05,0.5]$ the block decoding failure probability increases as $M$ increases. While Figure~\ref{Fig:n120} is just an example with not very realistic parameters\footnote{Due to the combinatorial nature of the expression in (59), it is computationally difficult to evaluate the bounds for longer codes.}, we expect the same trade-off to apply in more generality: finer sub-block access with efficient local decoding has a cost in terms of the global-decoding performance (for the same rate and global code length).   

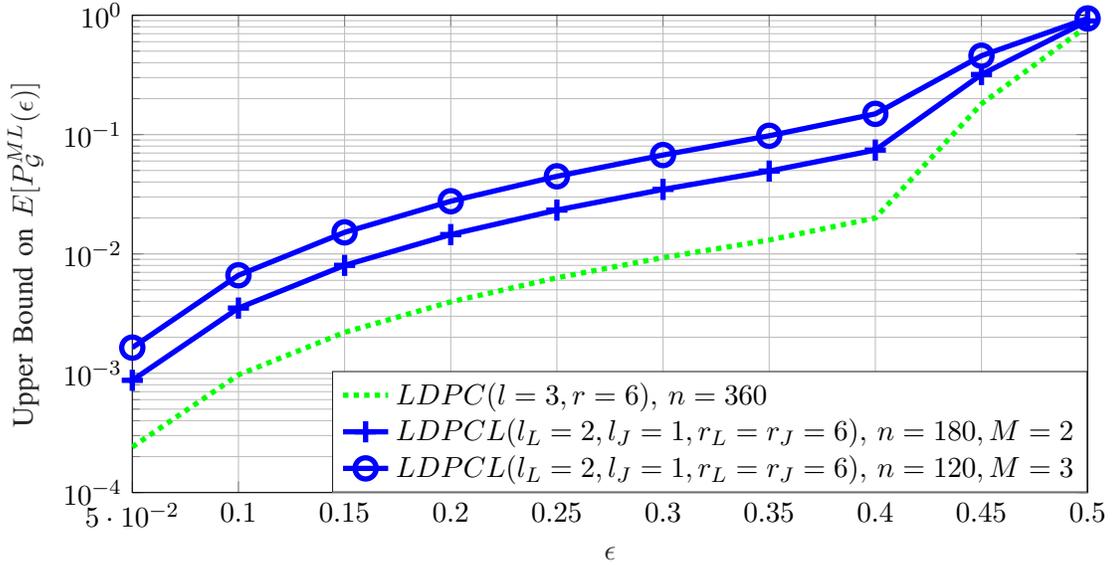
\begin{figure}[h!]
\begin{center}
\begin{tikzpicture}

\begin{axis}[%
width=5in,
height=2.5in,
scale only axis,
xmin=0.05,
xmax=0.5,
xlabel style={font=\color{white!15!black}},
xlabel={$\epsilon$},
ymode=log,
ymin=0.0001,
ymax=1,
yminorticks=true,
ylabel style={font=\color{white!15!black}},
ylabel={Upper Bound on $E[P_{\mathcal{G}}^{ML}(\epsilon)]$},
axis background/.style={fill=white},
xmajorgrids,
ymajorgrids,
yminorgrids,
legend style={at={(1,0)}, anchor=south east, legend cell align=left, align=left, draw=white!15!black}
]
\addplot [green,dotted, line width=2.0pt]
  table[row sep=crcr]{%
0.05	0.000240752347236284\\
0.1	0.000968322764120333\\
0.15	0.00219922183813419\\
0.2	0.0039630484975885\\
0.25	0.00630637847306853\\
0.3	0.0093004904375665\\
0.35	0.0130612365381501\\
0.4	0.0199891421804363\\
0.45	0.180841504559159\\
0.5	0.813804337871204\\
};
\addlegendentry{$LDPC(l=3,r=6),\,n=360$}

\addplot [blue,line width=2.0pt,mark=+, mark options={solid},mark size=4pt]
  table[row sep=crcr]{%
0.05	0.000871680424390182\\
0.1	0.00351366997122699\\
0.15	0.00801032053081061\\
0.2	0.0145151640277963\\
0.25	0.0232757684020197\\
0.3	0.0346851792515059\\
0.35	0.049416148369853\\
0.4	0.0741226714884994\\
0.45	0.318683744262964\\
0.5	0.894983933130287\\
};
\addlegendentry{$LDPCL(l_L=2,l_J=1,r_L=r_J=6),\,n=180,M=2$}

\addplot  [blue,line width=2.0pt,mark=o, mark options={solid},mark size=4pt]
  table[row sep=crcr]{%
0.05	0.00163800624900305\\
0.1	0.00661841133060244\\
0.15	0.015150631481671\\
0.2	0.027622039512717\\
0.25	0.0446738279663716\\
0.3	0.0673654593908212\\
0.35	0.0976313821610641\\
0.4	0.149071385663121\\
0.45	0.457512336840889\\
0.5	0.93970660025886\\
};
\addlegendentry{$LDPCL(l_L=2,l_J=1,r_L=r_J=6),\,n=120,M=3$}

\end{axis}
\end{tikzpicture}%
\caption{\label{Fig:n120}
Comparison of the bound in \eqref{eq:MLUB} ((2,6,1,6)-LDPCL code) with $M=2,3$ and $n=180,120$, respectively, with the bound in \cite[Lemma B.2]{Di02} ((3,6)-LDPC code) with $n=360$.}
\end{center}
\end{figure}

\section{Summary}
\label{Sec:sum}
This paper lays down the theoretical foundation for sub-block LDPC codes. This coding scheme enables fast read access to small blocks, and provides high data-protection on large blocks in case of more severe error events. 
We derived a code-analysis framework which resulted in a simple way to construct capacity-achieving sequences with guaranteed local-decoding performance. We found that the fraction of jointly-unconnected variable nodes, $P_0$, plays an important role in determining the asymptotic decoding threshold and in achieving capacity. 
It is known \cite{SasonUrba03} that there is an inherent trade-off between the gap to capacity and the encoding and decoding complexity of irregular LDPC codes over the BEC. Since it is of interest that the decoding algorithm will avoid joint-side messages as much as possible, we studied another trade-off regarding the number of joint iterations and the gap to capacity (see Example~\ref{ex:N_JI}). 
By deriving an upper bound on the block-erasure probability of finite-length LDPCL codes under ML decoding, we showed how the loss in performance is affected by the locality structure. 
Interesting future work includes finite block length considerations, and extending the scheme to spatially-coupled LDPC codes, as we pursued in \cite{RamCassuto18, RamCassuto19}.

\appendices 
\section{Proof of Theorem~\ref{th:2D DE}}
\label{App:2D DE}
We use mathematical induction on the number of iterations $l$. For $l=0$ the probability that a local or joint edge carries an erasure as a variable-to-check message is $\epsilon$. Since  $x_{-1}(\epsilon)=y_{-1}(\epsilon)=1$, then \eqref{eq:2D DE1}-\eqref{eq:2D DE2} with $l=0$ leads to
\begin{align*} 
\begin{split}
&x_0(\epsilon) = \epsilon \cdot
\lambda_L\left(1- \rho_L\left(1-1\right) \right)  \cdot
\Lambda_J\left(1- \rho_J\left(1-1\right) \right)= \epsilon,\\
&y_0(\epsilon) = \epsilon \cdot
\Lambda_L\left(1- \rho_L\left(1-1\right) \right)  \cdot
\lambda_J\left(1- \rho_J\left(1-1\right) \right)= \epsilon,
\end{split}
\end{align*}
hence, \eqref{eq:2D DE1}-\eqref{eq:2D DE2} hold for $l=0$. Assume correctness of \eqref{eq:2D DE1}-\eqref{eq:2D DE2} for some $l\geq0$, and consider iteration $l+1$. Recall that over the BEC, a check node will not send an erasure message on an outgoing edge if and only if all of its incoming edges carry valid bits 0/1. Let $u_{l}(\epsilon)$ and $w_{l}(\epsilon)$ designate the probability that an outgoing message from a local and joint check node, respectively, is an erasure. Then,
\begin{align} \label{eq:2D DE proof u_l}
\begin{split}
1-u_{l+1}(\epsilon) = \rho_L\left(1-x_l(\epsilon)\right),\\
1-w_{l+1}(\epsilon) = \rho_J\left(1-y_l(\epsilon)\right).
\end{split}
\end{align}
In addition, a variable node will send an erasure message on an outgoing edge if and only if all of its incoming edges are erased. This leads to 
\begin{align} \label{eq:2D DE proof x_l}
\begin{split}
x_{l+1}(\epsilon)  = \epsilon \cdot \lambda_L\left(u_{l+1}(\epsilon) \right)\cdot\Lambda_J\left(w_{l+1}(\epsilon) \right),\\
y_{l+1}(\epsilon)  = \epsilon \cdot \Lambda_L\left(u_{l+1}(\epsilon) \right)\cdot\lambda_J\left(w_{l+1}(\epsilon) \right).
\end{split}
\end{align} 
Combining \eqref{eq:2D DE proof u_l} and \eqref{eq:2D DE proof x_l} yields that \eqref{eq:2D DE1}-\eqref{eq:2D DE2} hold for $l+1$, thus by mathematical induction it holds for every $l\geq 0$. 

\section{Proof of Lemma~\ref{lemma:fix pt}} \label{App:fix pt}
\begin{enumerate}
\item Assume that $x=0$. Since $\Lambda_L(0)=0$, \eqref{eq:g} implies that $y=g(\epsilon,0,y)=0$. Moreover, if $y=0$ and $P_0=0$, then \eqref{eq:f} yields $x=f(\epsilon,x,0)=0$. Finally, if $y=0$ and $\lambda_J(0)>0$, then \eqref{eq:g} implies that $0=\Lambda_L(1-\rho_L(1-x))$, hence $\rho_L(1-x)=1$ and $x=0$. 
\item Follows immediately from \eqref{eq:f}, \eqref{eq:g} and \eqref{eq:f,g fixed}.
\item We prove \eqref{eq:no jumps} by a mathematical induction. For $l=0$, \eqref{eq:no jumps} holds due to Item~\ref{item: fix pt lemma1} and the fact that $x_{0} =y_{0}=\epsilon $. Assume correctness of \eqref{eq:no jumps} for some $l \geq 0$ and consider iteration $l+1$. In view of Lemma~\ref{lemma:monotonicity}, eq.~\eqref{eq:2D DE with f,g} and the induction assumption, it follows that
\begin{align} \label{eq:proof fix pt lemma}
\begin{split}
x_{l+1} = f \left(\epsilon,x_{l},y_{l}\right) \geq f(\epsilon,x,y)= x , \\
y_{l+1} = g\left(\epsilon,x_{l},y_{l}\right) \geq g(\epsilon,x,y)= y.
\end{split}
\end{align}
This prove correctness of \eqref{eq:no jumps} for $l+1$ and by mathematical induction proves \eqref{eq:no jumps} for all $l \geq 0$.
\end{enumerate}

\section{Proof of Lemma~\ref{lemma:q_L(0)}} \label{App:q_L(0)}
Let $I=\min \{i \colon \Lambda_{L,i}>0\}$. Since $\Lambda_L(0)=0$, then $I \geq 1$. The connection between $\lambda_L(\cdot)$ and $\Lambda_L(\cdot)$ in \eqref{eq:node-edge} yields

\begin{align}
\lim\limits_{u \to 0}\frac{\Lambda_L(u)}{\lambda_L(u)}
&=\lim\limits_{u \to 0}\frac{\Lambda_L(u)}{\Lambda_L'(u)}\cdot\Lambda'(1)\notag\\
&=\Lambda'(1)\lim\limits_{u \to 0}\frac{\sum_{i\geq I}\Lambda_{L,i} u^i}{\sum_{i\geq I}i\Lambda_{L,i} u^{i-1}} \notag\\ 
&=\Lambda'(1)\lim\limits_{u \to 0}\frac{u^I\sum_{i\geq I}\Lambda_{L,i} u^{i-I}}{u^{I-1}\sum_{i\geq I}i\Lambda_{L,i} u^{i-I}} \notag\\
&=\Lambda'(1)\frac{\Lambda_{L,I} }{I\Lambda_{L,I} }\lim\limits_{u \to 0}u \notag\\
&=0.
\end{align}
Further, let $u(x)=1-\rho_L(1-x)$ and note that $\lim_{x\to 0} u(x)=0$. Thus,
\begin{align*}
\lim_{x \to 0} x \cdot \frac{\Lambda_L\left( 1- \rho_L\left(1-x \right)\right)}{\lambda_L\left( 1- \rho_L\left(1-x \right) \right)} =\lim_{x \to 0}x \cdot \lim_{x \to 0}\frac{\Lambda_L(u(x))}{\lambda_L(u(x))}=\lim_{x \to 0}x \cdot \lim_{u \to 0}\frac{\Lambda_L(u)}{\lambda_L(u)} = 0.
\end{align*}

\section{Proof of Theorem~\ref{th:numerical th}} \label{App:numerical th}

\begin{lemma} \label{lemma: fixed curve}
	If $(x,y)$ is an $(f,g)$-fixed point with $y>0$, then $x \leq q(y)$.
\end{lemma}

\begin{proof}
	Let $\epsilon \in (0,1)$ and let $(x,y)$ be a solution to \eqref{eq:f,g fixed} with $y>0$. In view of \eqref{eq:f} and \eqref{eq:g}, dividing the first equation of \eqref{eq:f,g fixed} with the second one yields
	\begin{align}\label{eq:divide f,g fixed}
	\frac{x}{y} = \frac{\lambda_L\left(1- \rho_L\left(1-x \right)\right) \cdot \Lambda_J\left(1- \rho_J\left(1-y\right)\right)}{\lambda_J\left(1- \rho_J\left(1-y \right)\right) \cdot \Lambda_L\left(1- \rho_L\left(1-x\right)\right)}
	\end{align} 
	which after some rearrangements implies 
	\begin{align}\label{eq:q_L(x)=q_J(y)}
	q_J(y)=q_L(x),
	\end{align}\
	where $q_J(\cdot)$ and $q_L(\cdot)$ are defined in \eqref{eq:q_L q_J}.
	In view of \eqref{eq:q_L q_J}, since $(x,y)$ is an $(f,g)$-fixed point, then 
	\begin{align}\label{eq:q_J leq 1}
	q_J(y)&=y\cdot \frac{\Lambda_J\left( 1- \rho_J\left(1-y \right)\right)}{\lambda_J\left( 1- \rho_J\left(1-y \right)\right)} \notag \\
	&=g(\epsilon,x,y)\cdot \frac{\Lambda_J\left( 1- \rho_J\left(1-y \right)\right)}{\lambda_J\left( 1- \rho_J\left(1-y \right)\right)} \notag \\
	&=\overbrace{\epsilon}^{\leq 1}\cdot \overbrace{\Lambda_J\left( 1- \rho_J\left(1-y \right)\right)}^{\leq 1} \cdot \overbrace{\Lambda_L\left( 1- \rho_L\left(1-x \right)\right)}^{\leq 1}\notag \\
	&\leq 1,
	\end{align}
	which together with Definition~\ref{def:q} and \eqref{eq:q_L(x)=q_J(y)} completes the proof.
	
\end{proof}

Let
\begin{align} \label{eq:converse 01}
\epsilon >  \inf\limits_{\substack{y \in (0,1]  \\ q_J(y)\leq 1}} \frac{y}{g(1,q(y),y)}.
\end{align}
There exists $y_0 \in (0,1]$ such that $q_J(y_0) \leq 1$ and
\begin{align}\label{eq:y_0}
y_0=\epsilon\cdot g(1,q(y_0),y_0)= \epsilon \cdot \lambda_J(1-\rho_J(1-y_0))\cdot\Lambda_L(1-\rho_L(1-q(y_0))).
\end{align}
In view of \eqref{eq:q def}, 
\begin{align} 
q_L(q(y_0))=q_J(y_0),
\end{align}
which combined with \eqref{eq:f} and \eqref{eq:q_L q_J} yields
\begin{align} \label{eq:using q}
q(y_0) &= \frac{\lambda_L(1-\rho_L(1-q(y_0)))}{\Lambda_L(1-\rho_L(1-q(y_0)))} \cdot y_0 \cdot \frac{\Lambda_J(1-\rho_J(1-y_0))}{\lambda_J(1-\rho_J(1-y_0))} \notag \\
&=\epsilon \cdot \lambda_L(1-\rho_L(1-q(y_0)))\cdot \Lambda_J(1-\rho_J(1-y_0)) \notag \\
&= f(\epsilon,q(y_0),y_0).
\end{align}
Thus, $(q(y_0),y_0)$ is a non-zero $(f,g)$-fixed point, which in view of Theorem~\ref{theorem:fix pt char} implies that $\epsilon > \epsilon^*_G$. Hence,  
\begin{align} \label{eq:th 1st part}
\epsilon^*_G \leq \inf\limits_{\substack{y \in (0,1]  \\ q_J(y)\leq 1}} \frac{y}{g(1,q(y),y)}.
\end{align}

Next, let 
\begin{align} \label{eq:direct}
\epsilon <  \inf\limits_{\substack{y \in (0,1]  \\ q_J(y)\leq 1}} \frac{y}{g(1,q(y),y)}
\end{align}
and let $(x,y)$ be a solution to \eqref{eq:f,g fixed}. In what follows, we prove that $y=0$. Assume to the contrary that $y>0$. From Lemma~\ref{lemma: fixed curve} it follows that $x \leq q(y)$, which in view Lemma~\ref{lemma:monotonicity}, \eqref{eq:q_J leq 1} and \eqref{eq:direct} implies 
 \begin{align}\label{eq:numerical proof1}
y=g(\epsilon,x,y)\leq g(\epsilon,q(y),y)  < y,
\end{align}
in contradiction; thus, $y=0$. Next, consider two cases:
\begin{enumerate}
\item If $P_0=0$ or $\lambda_J(0) >0$, then Item~\ref{item: fix pt lemma0} of Lemma~\ref{lemma:fix pt}  implies that $x=0$. Hence, every $(f,g)$-fixed point satisfies $y=x=0$. In view of Theorem~\ref{theorem:fix pt char}, it follows that if \eqref{eq:direct} holds, then $\epsilon < \epsilon^*_G$, so
\begin{align}
\epsilon^*_G \geq  \inf\limits_{\substack{y \in (0,1]  \\ q_J(y)\leq 1}} \frac{y}{g(1,q(y),y)}
\end{align} 
which with \eqref{eq:th 1st part} completes the proof when $P_0=0$ or $\lambda_J(0) >0$. 

\item If $P_0>0$ and $\lambda_J(0) =0$, it is not true in general that for every fixed point $(x,y)$, $y=0$ implies $x=0$. However, if in addition to \eqref{eq:direct},
\begin{align} \label{eq:direct2}
\epsilon <  \frac1{P_0}\cdot \inf_{(0,1]}\frac{x}{\lambda_L\left(1- \rho_L\left(1-x \right)\right)},
\end{align}
and $y=0$ for some fixed point $(x,y)$, then $x=0$. To see this, assume to the contrary that $x>0$. In view of \eqref{eq:f} and \eqref{eq:direct2} it follows that
\begin{align} \label{eq:x fixed}
x= f(\epsilon,x,0) = \epsilon \cdot P_0 \cdot \lambda_L\left(1- \rho_L\left(1-x \right)\right) < x
\end{align}
in contradiction; hence, if \eqref{eq:direct} and \eqref{eq:direct2} hold, $x=0$ thus $\epsilon < \epsilon^*_G$. This means that 
\begin{align}\label{eq:direct proof}
\epsilon^*_G \geq \min\left\{ \inf\limits_{\substack{y \in (0,1]  \\ q_J(y)\leq 1}} \tfrac{y}{g(1,q(y),y)},\;\;\tfrac1{P_0}\cdot \inf_{(0,1]}\tfrac{x}{\lambda_L\left(1-  \rho_L\left(1-x \right)\right)}\right\}.
\end{align}
To complete the proof, we must show that when $P_0>0$ and $\lambda_J(0) =0$, then
\begin{align}\label{eq:converse proof}
\epsilon^*_G \leq \min\left\{ \inf\limits_{\substack{y \in (0,1]  \\ q_J(y)\leq 1}} \tfrac{y}{g(1,q(y),y)},\;\;\tfrac1{P_0}\cdot \inf_{(0,1]}\tfrac{x}{\lambda_L\left(1-  \rho_L\left(1-x \right)\right)}\right\}.
\end{align}
If
\begin{align*}
\inf\limits_{\substack{y \in (0,1]  \\ q_J(y)\leq 1}} \frac{y}{g(1,q(y),y)} \leq \frac1{P_0}\inf_{(0,1]}\frac{y}{\lambda_L\left(1-  \rho_L\left(1-y \right)\right)},
\end{align*}
then \eqref{eq:converse proof} follows immediately from \eqref{eq:th 1st part}; hence we can assume that 
\begin{align} \label{eq:converse1}
\inf\limits_{\substack{y \in (0,1]  \\ q_J(y)\leq 1}} \frac{y}{g(1,q(y),y)} > \frac1{P_0}\inf_{(0,1]}\frac{y}{\lambda_L\left(1-  \rho_L\left(1-y \right)\right)}.
\end{align}
Let 
\begin{align} \label{eq:converse12}
\inf\limits_{\substack{y \in (0,1]  \\ q_J(y)\leq 1}} \frac{y}{g(1,q(y),y)} >\epsilon> \frac1{P_0}\inf_{(0,1]}\frac{y}{\lambda_L\left(1-  \rho_L\left(1-y \right)\right)},
\end{align}
and let $x_0 \in (0,1]$, such that $x_0=\epsilon \cdot P_0 \cdot \lambda_L\left(1-  \rho_L\left(1-x_0 \right)\right)$. Since $\lambda_J(0)=0$, it follows that $(x_0,0)$ is a fixed point with $x_0>0$, thus $\epsilon > \epsilon^*_G$. Since this is true for every $\epsilon> \tfrac1{P_0}\inf_{(0,1]}\frac{y}{\lambda_L\left(1-  \rho_L\left(1-y \right)\right)}$, then $\epsilon^*_G \leq \tfrac1{P_0}\inf_{(0,1]}\frac{y}{\lambda_L\left(1-  \rho_L\left(1-y \right)\right)}$. In view of \eqref{eq:converse1}, it follows that \eqref{eq:converse proof} holds. This completes the proof for the $P_0>0$ and $\lambda_J(0) =0$ case.
\end{enumerate}

\section{Proof of Lemma~\ref{lemma:valid opt schedule}} \label{App:valid opt schedule}
To prove Lemma~\ref{lemma:valid opt schedule} we need the following lemma.

\begin{lemma}\label{lemma:valid schedule}
A scheduling scheme is valid if and only if, $\epsilon_{loc}(y_l)<\epsilon_L$, for some iteration $l$.
\end{lemma}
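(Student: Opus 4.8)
The plan is to establish the per-parameter statement behind the lemma: for each fixed $\epsilon\in(0,1)$ and a given scheduling scheme, $\lim_{l\to\infty}x_l(\epsilon)=0$ if and only if $\epsilon_{loc}(y_l(\epsilon))<\epsilon_L$ for some iteration $l$; quantifying this over all $\epsilon<\epsilon_G$ then yields the equivalence between validity and the stated condition. I would first record the facts that make everything run: Lemma~\ref{lemma:mono of DE} survives the replacement of \eqref{eq:2D-DE2} by \eqref{eq:schedule}, so $\{x_l\}$ and $\{y_l\}$ are non-increasing and converge, say to $x_\infty$ and $y_\infty$; the map $y\mapsto\epsilon_{loc}(y)=\epsilon\,\Lambda_J(1-\rho_J(1-y))$ is non-decreasing, so $\epsilon_{loc}(y_l)$ is non-increasing with limit $\epsilon_\infty:=\epsilon_{loc}(y_\infty)$; and passing to the limit in \eqref{eq:f with eff} and using continuity gives the scalar identity $x_\infty=\epsilon_\infty\,\lambda_L\!\left(1-\rho_L(1-x_\infty)\right)$.

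For the ``if'' direction, assume $\epsilon_{loc}(y_{l_0})<\epsilon_L$ for some $l_0$. Monotonicity of $\epsilon_{loc}(y_l)$ gives $\epsilon_\infty\le\epsilon_{loc}(y_{l_0})<\epsilon_L$. If $x_\infty>0$ held, the scalar identity would force $\epsilon_\infty=x_\infty/\lambda_L(1-\rho_L(1-x_\infty))\ge\inf_{x\in(0,1]}x/\lambda_L(1-\rho_L(1-x))=\epsilon_L$ by \eqref{eq:1D th numeric}, a contradiction; hence $x_\infty=0$. (Alternatively, for $l\ge l_0$ one has $x_l\le\epsilon_{loc}(y_{l_0})\,\lambda_L(1-\rho_L(1-x_{l-1}))$, which by monotone comparison with the ordinary local density-evolution recursion at parameter $\epsilon_{loc}(y_{l_0})<\epsilon^*_L$ started from $1$ forces $x_l\to0$.)

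For the ``only if'' direction I would prove the contrapositive: suppose $\epsilon_{loc}(y_l)\ge\epsilon_L$ for every $l$; then $\epsilon_\infty\ge\epsilon_L$, and I claim $x_\infty>0$. Since $\epsilon_{loc}(y_{l-1})\ge\epsilon_\infty$ for all $l$, \eqref{eq:f with eff} gives $x_l\ge\epsilon_\infty\,\lambda_L(1-\rho_L(1-x_{l-1}))$; comparing with the scalar recursion $z_l=\epsilon_\infty\,\lambda_L(1-\rho_L(1-z_{l-1}))$ with $z_{-1}=1$ and using monotonicity of this map yields $x_l\ge z_l$ for all $l$. The non-increasing sequence $z_l$ converges to the largest fixed point in $[0,1]$ of $x\mapsto\epsilon_\infty\lambda_L(1-\rho_L(1-x))$, which is precisely $x_s(\epsilon_\infty)$ of Definition~\ref{def:h_eps}. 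When $\epsilon_\infty>\epsilon_L$, the remark following Definition~\ref{def:h_eps} gives $h_{\epsilon_\infty}(x)>0$ for some $x\in(0,1]$, hence $x_s(\epsilon_\infty)>0$; thus $x_\infty\ge x_s(\epsilon_\infty)>0$ and decoding fails.

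The main obstacle I anticipate is the boundary case $\epsilon_\infty=\epsilon_L$ in the ``only if'' direction: there one must argue that the largest fixed point of the local DE map is still strictly positive, equivalently that the infimum in \eqref{eq:1D th numeric} is attained at an interior point of $(0,1]$ (the non-degenerate case where the local threshold lies strictly below the stability bound), so that the tangency point is a positive fixed point and hence a positive lower bound for $z_l$. A secondary, more routine point is justifying that density evolution started from $1$ actually converges \emph{to} the largest fixed point rather than stalling above it: this follows from the monotonicity and continuity of the DE map together with the standard observation that any fixed point lying below the starting value remains below every iterate. The remaining verifications --- persistence of Lemma~\ref{lemma:mono of DE} under scheduling, existence of the limits, and the limiting scalar identity --- are straightforward.
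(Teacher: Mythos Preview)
Your approach is essentially the paper's: pass to the limits $x_\infty,y_\infty$, use the limiting scalar identity $x_\infty=\epsilon_{loc}(y_\infty)\,\lambda_L(1-\rho_L(1-x_\infty))$, and reduce everything to whether $\epsilon_{loc}(y_\infty)$ lies below the local threshold $\epsilon_L$ via its characterization \eqref{eq:1D th numeric}/\eqref{eq:loc th def}. The paper simply records the chain of equivalences $\exists l,\ \epsilon_{loc}(y_l)<\epsilon_L \iff \epsilon_{loc}(y_\infty)<\epsilon_L \iff$ ``no positive fixed point of the local map'' $\iff \lim_l x_l=0$, without spelling out the comparison with the scalar recursion or the boundary case $\epsilon_{loc}(y_\infty)=\epsilon_L$ that you flag; your write-up is more careful on exactly those points, but the route is the same.
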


\begin{proof}
Recall the definition of the local threshold,
\begin{align} \label{eq:loc th def}
\epsilon_L = \sup \{ \epsilon \colon x=\epsilon\lambda_L(1-\rho_L(1-x)) \text{ has no solution in } (0,1]\},
\end{align}
and let $x=\lim\limits_{l \to \infty}x_l$ and $y=\lim\limits_{l \to \infty}y_l$.
Since under every scheduling scheme $y_l$ is monotonically non-increasing in $l$, then in view of \eqref{eq:f with eff},
\begin{align*}
\exists l \in \mathbb{N}, \quad\epsilon_{loc}(y_l)<\epsilon_L 
&\Leftrightarrow \epsilon_{loc}(y) < \epsilon_L \\
&\Leftrightarrow x=\epsilon_{loc}(y) \lambda_L(1-\rho_L(1-x)) \text{ has no solution for }x \in (0,1] \\
&\Leftrightarrow \lim\limits_{l \to \infty}x_l=0.
\end{align*} 
\end{proof}
We proceed with the proof of Lemma~\ref{lemma:valid opt schedule}.
Let $(\Lambda_L,\Lambda_J,\Omega_L,\Omega_J)$ be degree-distribution polynomials, and let $\epsilon\leq \epsilon^*_G(\Lambda_L,\Lambda_J,\Omega_L,\Omega_J)$. Let $(x,y)=\lim\limits_{l \to \infty}(x_l,y_l)$. Assume in contradiction that $\epsilon_{loc}(y) \geq \epsilon_L$. Since $\eta>0$, letting $l \to \infty$ in \eqref{eq:opt schedule} implies that $(x,y)$ is a non-trivial $(f,g)$-fixed point. However, in view of Theorem~\ref{theorem:fix pt char}, if $\epsilon\leq \epsilon^*_G(\Lambda_L,\Lambda_J,\Omega_L,\Omega_J)$, every $(f,g)$-fixed point is the trivial point, in contradiction. Thus,  $\epsilon_{loc}(y) < \epsilon_L$ which, due to Lemma~\ref{lemma:valid schedule}, completes the proof.

\section{Proof of Lemma~\ref{lemma:opt schedule}} \label{App:opt schedule}
Let $\left\{l^{(1)}_k\right\}_{k=1}^{N^{(1)}_{JI}}$ and  $\left\{l^{(2)}_k\right\}_{k=1}^{N^{(2)}_{JI}}$ be the joint-update iterations of the scheduling scheme described in \eqref{eq:x_l_k} and in some arbitrary valid scheduling scheme, receptively. We need to show that $N_{JI}^{(1)} \leq N_{JI}^{(2)}$. 
To proceed we need the following lemmas:
\begin{lemma}\label{lemma:x_s monotonic}
	$ x_s(\epsilon) $ as defined in Definition~\ref{def:h_eps} is monotonic non-decreasing in $ \epsilon\;. $
\end{lemma}
\begin{proof}
	Let $ \epsilon_1\leq \epsilon_2 $, and consider $ x_s(\epsilon_1),\;x_s(\epsilon_2)\;. $ In view of Definition~\ref{def:h_eps}, 
	\begin{align*}
		h_{\epsilon_2}\left (x_s(\epsilon_1)\right ) 
		&\triangleq \epsilon_2\lambda_L(1-\rho_L(1-x_s(\epsilon_1)))-x_s(\epsilon_1)\\
		&\geq		\epsilon_1\lambda_L(1-\rho_L(1-x_s(\epsilon_1)))-x_s(\epsilon_1)\\
		&\triangleq	h_{\epsilon_1}\left (x_s(\epsilon_1)\right )\\
		&\geq 		0\;. 
	\end{align*}
	Thus, $ x_s(\epsilon_2)\triangleq \max\{x\in[0,1]\;\colon h_{\epsilon_2}\left (x\right )\geq 0\}\geq x_s(\epsilon_1)\;. $
\end{proof}
\begin{lemma} \label{lemma:opt schdule induction}
Let
\begin{align}\label{eq:eps_k 1 and 2}
\begin{split}
&\varepsilon^{(1)}_k=\epsilon_{loc}\left(y_{l^{(1)}_k}\right),\quad 1\leq k \leq N^{(1)}_{JI}, \\
&\varepsilon^{(2)}_k=\epsilon_{loc}\left(y_{l^{(2)}_k}\right),\quad 1\leq k \leq N^{(2)}_{JI}.
\end{split}
\end{align}
Then, for every $ 1 \leq k \leq \min\left(N^{(1)}_{JI},N^{(2)}_{JI} \right) $,
\begin{align} \label{eq:opt schdule induction}
y_{l^{(1)}_k} \leq y_{l^{(2)}_k},\quad\text{and}\quad \varepsilon^{(1)}_k\leq \varepsilon^{(2)}_k,\quad\text{and}\quad x_{l^{(1)}_k} \leq x_{l^{(2)}_k}\;.
\end{align}
\end{lemma}

\begin{proof}
By induction on $1 \leq k \leq \min\left(N^{(1)}_{JI},N^{(2)}_{JI} \right)$. 
In the first joint update, we have $y_{l^{(1)}_1}=1=y_{l^{(2)}_1}$ and $ \varepsilon_1\triangleq \epsilon_{loc}\left (y_{l^{(1)}_1}\right )=\epsilon$. Thus, in view of Definition~\ref{def:h_eps}, in the first joint update $ =x_{l^{(2)}_1}\geq x_s(\epsilon)=x_{l^{(1)}_1}$.
Hence \eqref{eq:opt schdule induction} holds for $k=1$. Assume correctness for some joint update $k<\min\left(N^{(1)}_{JI},N^{(2)}_{JI} \right)$, and consider update $k+1$. In view of Lemma~\ref{lemma:monotonicity}, \eqref{eq:y_l_k}-\eqref{eq:x_l_k}, and the induction assumption,
$y_{l^{(1)}_{k+1}}= g\left(\epsilon,x_{l^{(1)}_{k}},y_{l^{(1)}_{k}}\right) \leq g\left(\epsilon,x_{l^{(2)}_{k}},y_{l^{(2)}_{k}}\right) =y_{l^{(2)}_{k+1}}\;,$
which together with \eqref{eq:eps_k} implies that $ \varepsilon^{(1)}_{k+1}\triangleq \epsilon_{loc}\left (y_{l^{(1)}_{k+1}}\right )\leq \epsilon_{loc}\left (y_{l^{(2)}_{k+1}}\right )\triangleq\varepsilon^{(2)}_{k+1}.$ In view of Lemma~\ref{lemma:x_s monotonic}, it follows that $ x_{l^{(1)}_{k+1}}\triangleq x_s(\varepsilon^{(1)}_{k+1})\leq x_s(\varepsilon^{(2)}_{k+1})\leq x_{l^{(2)}_{k+1}}.$ By induction, we complete the proof.
\end{proof}
We proceed with the proof of Lemma~\ref{lemma:opt schedule}.
Assume, on the contrary, that  $N_{JI}^{(1)} > N_{JI}^{(2)}$. Lemma~\ref{lemma:opt schdule induction} and the monotonicity of $ \varepsilon_k $ in $ k $ imply that
\begin{align} \label{eq:eps^1 > eps_L}
\epsilon^{(2)}_{N_{JI}^{(2)}} \geq \epsilon^{(1)}_{N_{JI}^{(2)}}  \geq \epsilon^{(1)}_{N_{JI}^{(1)}-1} \geq \epsilon_L,
\end{align}
which, in view of Lemma~\ref{lemma:valid schedule} yields that the scheduling scheme indexed by $\left\{l^{(2)}_k\right\}_{k=1}^{N^{(2)}_{JI}}$ is not valid, in contradiction. Thus, $N_{JI}^{(1)} \leq N_{JI}^{(2)}$.

\section{Proof of Lemma~\ref{lemma:x_s c.a.}} \label{App:x_s c.a.}
In view of Definition~\ref{def:h_eps}, let $h^{(k)}_\epsilon(x)=\epsilon\lambda^{(k)}(1-\rho^{(k)}(1-x))-x$. Eq. \eqref{eq:c.a. known} yields
\begin{align}\label{eq:c.a. h_eps}
h_\epsilon(x)=\lim_{k \to \infty}h^{(k)}_\epsilon(x)= \left\{ 
\begin{array}{ll}
\left( \frac{\epsilon}{\epsilon_L}-1\right) x & 0 \leq x \leq \epsilon_L \\
\epsilon- x & \epsilon_L \leq x \leq \epsilon 
\end{array}
\right.
\end{align}
For every $k \in \mathbb{N}$, and $x\in (\epsilon,1]$, 
\begin{align*}
h_\epsilon^{(k)}(x) 
&= \epsilon\lambda^{(k)}(1-\rho^{(k)}(1-x))-x \\ \notag
&\leq \epsilon-x \\ \notag
&<0,
\end{align*}
Thus
\begin{align}\label{eq:x_s^k <eps}
x_s^{(k)}(\epsilon) \leq \epsilon ,\quad \forall k \in \mathbb{N}.
\end{align}
In addition, for every $0<a<\epsilon$ there exists $K_0$ such that 
\begin{align*}
h_\epsilon^{(k)}(\epsilon-a)>0 ,\quad \forall k \geq K_0,
\end{align*}
so $x_s^{(k)}(\epsilon) \geq \epsilon-a$, for every $ k \geq K_0$; hence,
\begin{align} \label{eq:x_s liminf}
\liminf_{k \to \infty} x_s^{(k)}(\epsilon) \geq \epsilon.
\end{align}
Combining \eqref{eq:x_s^k <eps} and \eqref{eq:x_s liminf} implies that $\lim\limits_{k \to \infty}x^{(k)}_s(\epsilon)$ exists, and completes the proof.


\begin{thebibliography}{99}

\bibitem{Gala62}
R. Gallager, ``Low-density parity-check codes," \emph{IRE Trans. on Inf. Theory}, vol.~8, no.~1, pp.~21--28, Jan.~1962.

 
\bibitem{CassHemo17}
Y. Cassuto, E. Hemo, S. Puchinger and M. Bossert, ``Multi-block interleaved codes for local and global read access," \emph{Proc. 2017 IEEE International Symposium on Information Theory}, pp.~1758--1762,
Aachen, Germany, June 2017.

\bibitem{Hass01}
M. Hassner, K. Abdel-Ghaffar, A. Patel, R. Koetter, and B. Trager, ``Integrated interleaving – a
novel ECC architecture," \emph{IEEE Trans. Magn.}, vol.~37, no.~3, pp.~773-–775, Feb.~2001. 

\bibitem{HanMont07}
J. Han and L. A. Lastras-Montano, ``Reliable memories with subline accesses," \emph{Proc. 2007 IEEE International Symposium on Information Theory}, pp.~2531--2535, Nice, France, June 2007. 

\bibitem{BlauHetz16}
M. Blaum and S. Hetzler, ``Integrated interleaved codes as locally recoverable codes: properties and performance," Int. \emph{J. Inf. Coding Theory}, vol.~3, no.~4, pp.~324–344, Jan.~2016. 

\bibitem{Li17}
J. Li, K. Liu, S. Lin, K. Abdel-Ghaffar, ``Reed-solomon based nonbinary globally coupled LDPC codes: Correction of random errors and bursts of erasures," \emph{Proc. 2017 IEEE International Symposium on Information Theory}, pp.~381--385, Aachen, Germany, June 2017.

\bibitem{WangRangWesel17}
H. Wang, S. V. S. Ranganathan, R. D. Wesel, ``Approaching capacity using incremental redundancy without feedback," \emph{Proc. 2017 IEEE International Symposium on Information Theory}, pp~161--165, Aachen, Germany, June 2017.

\bibitem{RazaYu07}
P. Razaghi and W. Yu, ``Bilayer low-density parity-check codes for decode-and-forward in relay channels," \emph{IEEE Trans. on Inf. Theory}, vol.~53, no.~10, pp.~3723–-3739, Sep.~2007.

\bibitem{EzriGast06}
J. Ezri and M. Gastpar, ``On the performance of independently designed LDPC codes for the relay channel," \emph{Proc. 2006 IEEE International Symposium on Information Theory}, pp.~977-981, Seattle, USA, July 2006.

\bibitem{WeinMart09}
M. J. Wainwright and E. Martinian, ``Low-density graph codes that are optimal for binning with side information," \emph{IEEE Trans. on Inf. Theory}, vol.~55, no.~3, pp.~1061–-1079, March.~2009.

\bibitem{RichUrb08}
T. Richardson and R. Urbanke, \emph{Modern Coding Theory}, New York, NY, USA: Cambridge Univ. Press, 2008.

\bibitem{KudRichUrb11}
S. Kudekar, T. J. Richardson, and R. L. Urbanke, ``Threshold saturation via coupling: why convolutional LDPC ensembles perform so well over the BEC," \emph{IEEE Trans. Inf. Theory}, vol.~57, no.~2, pp.~803–-834, Feb.~2011.

\bibitem{Luby01}
M.G. Luby, M. Mitzenmacher, M.A. Shokrollahi, and D.A. Spielman, ``Efficient erasure correcting codes," \emph{IEEE Trans. on Inf. Theory}, vol.~47, no.~2, pp.~569--584, Feb. 2001.

\bibitem{ZhangFossorier02}
J. Zhang and M. Fossorier, ``Shuffled belief propagation decoding," \emph{Conference Record of the Thirty-Sixth Asilomar Conference on Signals, Systems and Computers}, vol.~1, pp.~8--15, Pacific Grove, CA, USA, 2002.

\bibitem{XiaoBeni04}
H. Xiao and A. H. Banihashemi, ``Graph-based message-passing schedules for decoding LDPC codes," \emph{IEEE Trans. on Comm.}, vol.~52, no.~12, pp.~2098--2105, Dec.~2004.

\bibitem{CasGriotWesel07}
A. I. V. Casado, M. Griot and R. D. Wesel, ``LDPC Decoders with Informed Dynamic Scheduling," \emph{IEEE Trans. on Comm.}, vol.~58, no.~12, pp.~3470--3479, Dec.~2010.

\bibitem{Shok01}
M. A. Shokrollahi, ``Capacity-achieving sequences," \emph{Codes, Systems, and Graphical Models (The IMA Vols. in Math. and its App.)}, B. Marcus , J. Rosenthal (eds), New-York: Springer, vol.~123, pp.~153--166, 2001. 

\bibitem{Di02}
C. Di, D. Proietti, E. Teletar, T. J. Richardson, and R. L. Urbanke, ``Finite-length analysis of low-density parity-check codes on the binary erasure channel," \emph{IEEE Trans. on Inf. Theory}, vol.~48, no.~6, pp.~1570--1579, Aug.~2002.

\bibitem{SasonUrba03}
I. Sason, R. Urbanke, ``Parity-check density versus performance of binary linear block codes over memoryless symmetric channels," \emph{IEEE Trans. on Inf. Theory}, vol.~49, no.~7, pp.~1611--1635, July~2003.

\bibitem{RamCassuto18}
E. Ram and Y. Cassuto, ``Spatially-coupled LDPC codes with random access," \emph{The 2018 IEEE Int. Symp. on Turbo Coding}, Hong-Kong, Dec. 2018.

\bibitem{RamCassuto19}
E. Ram and Y. Cassuto, ``On decoding random-access SC-LDPC codes," accepted for presentation on the \emph{The 2019 IEEE Int. Symp. on Inf. Theory}, Paris, July. 2019.

\end{thebibliography}
\end{document}